\let\underscore\_
\def\_{\underscore\@ifnextchar f{\kern1pt}{\kern1pt}}
\def\ps@titlepage{
  \def\@oddhead{\affilsize\textit{Accepted for publication in Theory and Practice of Logic Programming}\hfil\llap{\thepage}}%
  \let\@oddfoot\@empty
  \let\@evenfoot\@oddfoot
}
\journal{Accepted for publication in Theory and Practice of Logic Programming}
\newcommand{\Ifline}[2]{\textbf{if} #1 \textbf{then} #2}
\declaretheoremstyle[
  spaceabove=5pt, spacebelow=5pt,
  headfont=\normalfont\itshape,
  notefont=\mdseries, notebraces={(}{)},
bodyfont=\normalfont,
  headpunct=\\,
  postheadhook=\intheoremtrue,
]{mystyle}
\declaretheorem[style=mystyle,heading=Theorem]{theorem}
\declaretheorem[style=mystyle,heading=Lemma,sibling=theorem]{lem}
\declaretheorem[style=mystyle,heading=Definition]{definition}
\declaretheorem[style=mystyle,heading=Example]{example}
\declaretheorem[style=mystyle,qed=\qedsymbol,numbered=no]{proof}
\newcommand{\non}{\mathnormal{\sim}}
\newcommand{\To}{\Rightarrow}
\newcommand{\ds}{\displaystyle}
\newcommand{\oseq}[2][n]{#2_{1}\odot\cdots\odot #2_{#1}}
\newcommand{\opseq}[4][\otimes]{#2_{#3}#1\cdots#1#2_{#4}}
\newcommand{\set}[2][\relax]{\ensuremath{#1\{#2#1\}}}
\newcommand{\Set}[2][\relax]{%
  \ensuremath{
    \ifx#1\left
      #1\{#2\right\}
    \else\ifx#1\right
      \left\{#2#1\}
      \else #1\{#2#1\}\fi\fi}%
}
\renewcommand{\Set}[2][\relax]{%
  \ifthenelse{\equal{#1}{auto}}{\left\{#2\right\}}{#1\{#2#1\}}%
}
\newcommand{\seq}[2][n]{\ensuremath{#2_1,\dots,#2_{#1}}}
\newcommand{\LAB}{\ensuremath{\mathrm{Lab}}\xspace}
\newcommand{\FACTS}{\ensuremath{\mathrm{F}}\xspace}
\newcommand{\OUTT}{\ensuremath{\mathsf{OUT}}\xspace}
\newcommand{\BELL}{\ensuremath{\mathsf{BEL}}\xspace}
\newcommand{\OBLL}{\ensuremath{\mathsf{OBL}}\xspace}
\newcommand{\OBL}{\ensuremath{\mathsf{O}}\xspace}
\newcommand{\BEL}{\ensuremath{\mathsf{B}}\xspace}
\newcommand{\GOAL}{\ensuremath{\mathsf{G}}\xspace}
\newcommand{\INT}{\ensuremath{\mathsf{I}}\xspace}
\newcommand{\INTS}{\ensuremath{\mathsf{SI}}\xspace}
\newcommand{\DES}{\ensuremath{\mathsf{D}}\xspace}
\newcommand{\OUT}{\ensuremath{\mathsf{U}}\xspace}
\let\Bel\BEL
\newcommand{\Convert}{\ensuremath{\mathrm{Convert}}\xspace}
\newcommand{\Conflict}{\ensuremath{\mathrm{Conflict}}\xspace}
\newcommand{\Aconf}[2]{\mathrm{Conflict}(#1,#2)}
\newcommand{\Aconv}[2]{\mathrm{Convert}(#1,#2)}
\newcommand{\infd}{\ensuremath{\mathit{infd}}\xspace}
\newcommand{\PROP}{\ensuremath{\mathrm{PROP}}\xspace}
\newcommand{\MOD}{\ensuremath{\mathrm{MOD}}\xspace}
\newcommand{\LIT}{\ensuremath{\mathrm{Lit}}\xspace}
\newcommand{\MODLIT}{\ensuremath{\mathrm{ModLit}}\xspace}
\newcommand{\LRangle}{\langle\ \rangle}
\newcommand{\LRset}[2][\relax]{\ensuremath{#1\langle#2#1\rangle}}
\newcommand{\psl}{\phantom{{}=\{}} 
\begin{document}
	
\pagestyle{plain}

\title{The Rationale behind the Concept of Goal}
\author{Guido Governatori$^{1}$, Francesco Olivieri$^{2}$, Simone Scannapieco$^{2}$,\\
  \normalfont\normalsize Antonino Rotolo$^3$ and Matteo Cristani$^2$\\
  \\
  $^1$ NICTA, Australia\\
  $^2$ Department of Computer Science, Verona, Italy\\
  $^3$ CIRSFID, University of Bologna, Bologna, Italy}

\maketitle
\begin{abstract}

The paper proposes a fresh look at the concept of goal and advances that
motivational attitudes like desire, goal and intention are just facets of the
broader notion of (acceptable) outcome. We propose to encode the preferences
of an agent as sequences of ``alternative acceptable outcomes''. We then
study how the agent's beliefs and norms can be used to filter the mental
attitudes out of the sequences of alternative acceptable outcomes. Finally, we 
formalise such intuitions in a novel Modal Defeasible Logic and we prove that
the resulting formalisation is computationally feasible.

\end{abstract}

\begin{keywords}
agents, defeasible logic, desires, intentions, goals, obligations
\end{keywords}

\section{Introduction and motivation}
\label{sec:Intro}

\newtheorem{challenge}{Desiderata}

The core problem we address in this paper is how to formally describe a system operating in
an environment, with some objectives to achieve, and trying not to violate the
norms governing the domain in which the system operates.

To model such systems, we have to specify three types of information: (i) the
environment where the system is embedded, i.e., how the system perceives the
world, (ii) the norms regulating the application domain, and (iii) the
system's internal constraints and objectives.

A successful abstraction to represent a system operating in an environment
where the system itself must exhibit some kind of autonomy is that of BDI
(Belief, Desire, Intention) architecture \citep{DBLP:conf/kr/RaoG91} inspired
by the work of \cite{bratman1} on cognitive agents. In the BDI architecture,
desires and intentions model the agent's mental attitudes and are meant to
capture the objectives, whereas beliefs describe the environment. More
precisely, the notions of belief, desire and intention represent
respectively the informational, motivational and deliberative states of an
agent \citep{DBLP:conf/ecaiw/WooldridgeJ94}.

Over the years, several frameworks, either providing extensions of BDI or
inspired by it, were given with the aim of extending models for cognitive
agents to also cover normative aspects (see, among others,
\citep{broersen2002goal, thomason, jaamas:08}). (This is a way of developing
normative agent systems, where norms are meant to ensure global properties
for them \citep{andrighetto_et_al:DR:2012:3535}.) In such extensions, the
agent behaviour is determined by the interplay of the cognitive component and
the normative one (such as obligations). In this way, it is possible to
represent how much an agent is willing to invest to reach some outcomes based
on the states of the world (what we call beliefs) and norms. Indeed, beliefs
and norms are of the utmost importance in the decision process of the agent.
If the agent does not take beliefs into account, then she will not be able to
plan what she wants to achieve, and her planning process would be a mere
wishful thinking. On the other hand, if the agent does not respect the norms
governing the environment she acts in, then she may incur sanctions from
other agents \citep{bratman1}.



The BDI approach is based on the following assumptions about the motivational
and deliberative components. The agent typically defines \emph{a priori} her
desires and intentions, and only after this is done the system verifies their
mutual consistency by using additional axioms. Such entities are therefore
not interrelated with one another since ``the notion of intention [\dots] has
equal status with the notions of belief and desire, and cannot be reduced to
these concepts'' \citep{DBLP:conf/kr/RaoG91}. Moreover, the agent may
consequently have intentions which are contradictory with her beliefs and
this may be verified only \emph{a posteriori}. Therefore, one of the main
conceptual deficiencies of the BDI paradigm (and generally of almost all
classical approaches to model rational agents) is that the deliberation
process is bound to these mental attitudes which are independent and fixed
\emph{a priori}. Here, with the term independent, we mean that none of them
is fully definable in terms of the others.

Approaches like the BOID (Belief, Obligation, Intention, Desire) architecture \citep{broersen2002goal} and \cite{jaamas:08}'s system improve
previous frameworks, for instance, by structurally solving conflicts between
beliefs and intentions (the former being always stronger than any conflicting
intention), while mental attitudes and obligations are just meant to define
which kinds of agent (social, realistic, selfish, and so on) are admissible. 


Unlike the BDI perspective, this paper aims at proposing a
fresh conceptual and logical analysis of the motivational and deliberative
components within a unified perspective.

\paragraph{Desideratum 1: A unified framework for agents' motivational and
deliberative components. }{Goals, desires, and intentions are \emph{different
facets} of the \emph{same phenomenon}, all of them being goal-like attitudes.
This reduction into a unified perspective is done by resorting to the basic
notion of \emph{outcome}, which is simply something (typically, a state of
affairs) that an agent expects to achieve or that can possibly occur.}

\smallskip

Even when considering the vast literature on goals of the past decade,
most of the authors studied the content of a goal (e.g., \emph{achievement}
or \emph{maintenance} goals) and conditions under which a goal has to be
either pursued, or dropped. This kind of (\emph{a posteriori}) analysis
results orthogonal to the one proposed hereafter, since we want to develop a
framework that computes the agent's mental attitudes by combining her beliefs
and the norms with her desires.

As we shall argue, an advantage of the proposed analysis is that it allows
agents to compute different degrees of motivational attitudes, as well as
different degrees of commitment that take into account other, external,
factors, such as \emph{beliefs} and \emph{norms}.

\paragraph{Desideratum 2: Agents' motivations emerge from preference
orderings among outcomes.}{The motivational and deliberative components of
agents are generated from preference orderings among outcomes. As done in
other research areas (e.g., rational choice theory), we move with the idea
that agents have preferences and choose the actions to bring about according
to such preferences. Preferences involve outcomes and are explicitly
represented in the syntax of the language for reasoning about agents, thus
following the logical paradigm initially proposed in
\citep{BrewkaBB04,ajl:ctd}.}

\smallskip

The combination of an agent's mental attitudes with the factuality of the
world defines her deliberative process, i.e., the objectives she decides to
pursue. The agent may give up some of them to comply with the norms, if
required. Indeed, many contexts may prevent the agent from achieving all of
her objectives; the agent must then understand which objectives are mutually
compatible with each other and choose which ones to attain the least of in
given situations by ranking them in a preference ordering.

The approach we are going to formalise can be summarised as follows. We distinguish
three phases an agent must pass through to bring about certain states of
affairs: (i) The agent first needs to understand the environment she acts in;
(ii) The agent deploys such information to deliberate which objectives to
pursue; and (iii) The agent lastly decides how to act to reach them.

In the first phase, the agent gives a formal declarative description of the
environment (in our case, a rule-based formalism). Rules allow the agent to
represent relationships between pre-conditions and actions, actions and their
effects (post-conditions), relationships among actions, which conditions
trigger new obligations to come in force, and in which contexts the agent is
allowed to pursue new objectives.

In the second phase, the agent combines the formal description with an input
describing a particular state of affairs of the environment, and she
determines which norms are actually in force along with which objectives she
decides to commit to (by understanding which ones are attainable) and to
which degree. The agent's decision is based on logical derivations.

Since the agent's knowledge is represented by rules, during the third
and last phase, the agent combines and exploits all such information obtained 
from the conclusions derived in the second phase to
select which activities to carry out in order to achieve the objectives. (It
is relevant to notice that a derivation can be understood as a virtual
simulation of the various activities involved.)

While different schemas for generating and filtering agents' outcomes are
possible, the three phases described above suggest to
adopt the following principles:
\begin{itemize}
\item When an agent faces alternative outcomes in a given context, these
  outcomes are ranked in preference orderings;
\item Mental attitudes are obtained from a single type of rule
  (\emph{outcome rule}) whose conclusions express the above mentioned
  preference orderings among outcomes;
\item Beliefs prevail over conflicting motivational attitudes, thus avoiding
  various cases of wishful thinking \citep{thomason,broersen2002goal};
\item Norms and obligations are used to filter social motivational states
(\emph{social intentions}) and compliant agents
\citep{broersen2002goal,jaamas:08};
\item Goal-like attitudes can also be derived via a \emph{conversion}
mechanism using other mental states, such as beliefs \citep{jaamas:08}. For
example, believing that Madrid is in Spain may imply that the goal to go to
Madrid implies the goal to go to Spain.
\end{itemize}

Our effort is finally motivated by computational concerns. The logic for
agents' desires, goals, and intentions is expected to be computationally
efficient. In particular, we shall prove that computing agents' motivational
and deliberative components in the proposed unified framework has linear
complexity.

\section{The intuition underneath the framework} 
\label{sec:Intuition}


When a cognitive agent deliberates about what her outcomes are in a particular
situation, she selects a set of \emph{preferred} outcomes among a larger
set, where each specific outcome has various alternatives. It is natural to
rank such alternatives in a preference ordering, from the most preferred
choice to the least objective she deems acceptable.

Consider, for instance, the following scenario. Alice is thinking what to do
on Saturday afternoon. She has three alternatives: (i) she can visit John;
(ii) she can visit her parents who live close to John's place; or (iii) she
can watch a movie at home. The alternative she likes the most is visiting
John, while watching a movie is the least preferred. If John is not at home,
there is no point for Alice to visit him. In this case, paying a visit to her
parents becomes the ``next best'' option. Also, if visiting her parents is
not possible, she settles for the last choice, that of staying home and
watching a movie.

Alice also knows that if John is away, the alternative of going to his place
makes no sense. Suppose that Alice knows that John is actually away for the
weekend. Since the most preferred option is no longer available, she decides
to opt for the now best option, namely visiting her parents.

To represent the scenario above, we need to capture the preferences about her
alternatives, and her beliefs about the world. To model preferences
among several options, we build a sequence of alternatives $\seq{A}$ that are
preferred when the previous choices are no longer feasible. Normally, each set of
alternatives is the result of a specific context $C$ determining 
under which conditions (premises) such a sequence of alternatives $\seq{A}$
is considered.

Accordingly, we can represent Alice's alternatives with the notation
\begin{gather*}
  \textbf{If}\quad \mathit{saturday}\quad \textbf{then}\quad \mathit{visit\_John}, \ \mathit{visit\_parents}, \ \mathit{watch\_movie}.  
\end{gather*}
This intuition resembles the notion of contrary-to-duty obligations
presented by \cite{ajl:ctd}, where a norm is represented by an \emph{obligation rule}
of the type
\begin{gather*}
  r_1: \mathit{drive\_car} \Rightarrow_{\OBLL} \neg \mathit{damage}\odot
  \mathit{compensate} \odot \mathit{foreclosure}
\end{gather*}
where ``$\To_{\OBLL}$'' denotes that the conclusion of the rule will be treated as an
obligation, and the symbol ``$\odot$'' replaces the symbol ``,'' to separate
the alternatives. In this case, each element of the chain is the reparative obligation that shall come in
force in case the immediate predecessor in the chain has been violated. Thus, the
meaning of rule $r_1$ is that, if an agent drives a car, then she has the
obligation not to cause any damage to others; if this happens, she is obliged
to compensate; if she fails to compensate, there is an obligation of
foreclosure.

Following this perspective, we shall now represent the previous scenario with a
rule introducing the outcome mode, that is an \emph{outcome rule}:
\begin{gather*}
  r_2: \mathit{saturday} \Rightarrow_{\OUTT} \mathit{visit\_John}\odot
  \mathit{visit\_parents} \odot \mathit{watch\_movie}.
\end{gather*}
In both examples, the sequences express a preference
ordering among alternatives. Accordingly, $\mathit{watch\_movie}$ and
$\mathit{foreclosure}$ are the last (and least) acceptable situations.

To model beliefs, we use \emph{belief rules}, like 
\begin{gather*}
r_3: \mathit{John\_away} \Rightarrow_{\BELL} \neg    
    \mathit{visit\_John}
\end{gather*}
meaning that if Alice has the belief that John is not home, then she adds to
her beliefs that it is not possible to visit him.

In the rest of the section, we shall illustrate the principles and intuitions
relating sequences of alternatives (that is, outcome rules), beliefs, obligations, and
how to use them to characterise different types of goal-like attitudes and
degrees of commitment to outcomes: \emph{desires}, \emph{goals}, \emph{intentions}, and \emph{social
intentions}.

\paragraph{Desires as acceptable outcomes.} Suppose that an agent is equipped with the following outcome rules expressing two preference orderings:
\begin{gather*}
  r: \seq{a} \Rightarrow_{\OUTT} \oseq[m]{b} \qquad \qquad 
  s: \seq{a'} \Rightarrow_{\OUTT} \oseq[k]{b'}
\end{gather*}
%
and that the situations described by $\seq{a}$ and $\seq{a'}$ are mutually
compatible but $b_{1}$ and $b'_{1}$ are not, namely $b_{1}=\neg b'_{1}$. In
this case $\seq[m]{b},\seq[k]{b'}$ are all \emph{acceptable outcomes},
including the incompatible outcomes $b_{1}$ and $b'_{1}$. 

\emph{Desires are acceptable outcomes}, independently of whether
they are compatible with other expected or acceptable outcomes.
%
Let us contextualise the previous example to better explain the notion of desire by considering the following setting.
\begin{example}\label{ex:FriendsWillBeFriends}\ \vspace*{-\baselineskip}
\[
	F = \{\mathit{saturday},\, \mathit{John\_sick} \} \quad
	R = \{r_2,\, r_{4}: \mathit{John\_sick} \To_{\OUTT} \neg \mathit{visit\_John}\odot \mathit{short\_visit} \}.
\]
\end{example}
The meaning of $r_{4}$ is that Alice would not visit John if he is sick, but
if she does so, then the visit must be short.

Being the premises of $r_{2}$ and of $r_{4}$ the case, then both rules are
activated, and the agent has both $\mathit{visit\_John}$ and its opposite as
acceptable outcomes. Eventually, she needs to make up her mind. Notice that
if a rule prevails over the other, then the elements of the weaker rule with
an incompatible counterpart in the stronger rule are \emph{not} considered
desires. Suppose that Alice has not visited John for a long time and she has
recently placed a visit to her parents. Then, she prefers to see John instead
of her parents despite John being sick. In this setting, $r_{2}$ prevails
over $r_{4}$ ($r_{2} > r_{4}$ in notation). Given that she explicitly prefers
$r_{2}$ to $r_{4}$, her desire is to visit John ($\mathit{visit\_John}$) and
it would be irrational to conclude that she also has the opposite desire
(i.e., $\neg \mathit{visit\_John}$).

\paragraph{Goals as preferred outcomes.} We consider a \emph{goal as the
preferred desire in a chain}.

For rule $r$ alone the preferred outcome is $b_{1}$, and for rule $s$ alone
it is $b'_{1}$. But if both rules are applicable, then a state where both
$b_{1}$ and $b'_{1}$ hold is not possible: the agent would not be rational if
she considers both $b_{1}$ and $\neg b_{1}$ as her preferred outcomes. Therefore,
the agent has to decide whether she prefers a state where $b_{1}$ holds
to a state where $b'_{1}$ (i.e., $\neg b_{1}$) does (or the other
way around). If the agent cannot make up her mind, i.e., she has no way to
decide which is the most suitable option for her, then neither the chain of
$r$ nor that of $s$ can produce preferred outcomes.

Consider now the scenario where the
agent establishes that the second rule overrides the first one ($s>r$).
Accordingly, the preferred outcome is $b'_{1}$ for the chain of outcomes
defined by $s$, and $b_{2}$ is the preferred outcome of $r$. $b_{2}$ is the
second best alternative according to rule $r$: in fact $b_{1}$ has been
discarded as an acceptable outcome given that $s$ prevails over $r$.

In the situation described by Example~\ref{ex:FriendsWillBeFriends}, $\mathit{visit\_John}$ is the goal according to $r_{2}$, while
$\mathit{short\_visit}$ is the goal for $r_{4}$.

\paragraph{Two degrees of commitment: intentions and social intentions.}
The next issue is to clarify which are the acceptable outcomes for an agent to
commit to. Naturally, if the agent values some outcomes more than others, she
should strive for the best, in other words, for the most preferred outcomes (goals).

We first consider the case where only rule $r$ applies. Here, the agent
should commit to the outcome she values the most, that is $b_{1}$. But
what if the agent \emph{believes} that $b_{1}$ cannot be achieved in the
environment where she is currently situated in, or she knows that $\neg
b_{1}$ holds? Committing to $b_{1}$ would result in a waste of the agent's
resources; rationally, she should target the next best outcome $b_{2}$.
Accordingly, the agent derives $b_{2}$ as her \emph{intention}. \emph{An
intention is an acceptable outcome which does not conflict with the beliefs
describing the environment.}

Suppose now that $b_{2}$ is \emph{forbidden}, and that the agent is social (a social agent is an agent not knowingly committing to anything that is
forbidden \citep{jaamas:08}). Once again, the agent has to
lower her expectation and settle for $b_{3}$, which is one of her
\emph{social intentions}. \emph{A social intention is an intention which does not violate any norm.}

To complete the analysis, consider the situation where both rules $r$ and $s$
apply and, again, the agent prefers $s$ to $r$. As we have seen before, $\neg
b_{1}$ ($b'_{1}$) and $b_{2}$ are the preferred outcomes based on the
preference of the agent over the two rules. This time we assume that the
agent knows she cannot achieve $\neg b_{1}$ (or equivalently, $b_{1}$ holds).
If the agent is rational, she cannot commit to $\neg b_{1}$. Consequently,
the best option for her is to commit to $b'_{2}$ and $b_{1}$ (both regarded
as intentions and social intentions), where she is guaranteed to be
successful.

This scenario reveals a key concept: there are situations where the agent's
best choice is to commit herself to some outcomes that are not her
preferred ones (or even to a choice that she would consider not 
acceptable based only on her preferences) but such that they influence her 
decision process, given that they represent relevant external factors (either 
her beliefs or the norms that apply to her situation).


\begin{example}\label{ex:Jsick}\ \vspace*{-\baselineskip}
\[
	F  = \{\mathit{saturday},\, \mathit{John\_away},\, \mathit{John\_sick} \}\quad
	R  = \{r_2,\, r_3,\, r_{4}\}\quad
	>\  = \{(r_{2}, r_{4}) \}.
\]
Today John is in rehab at the hospital. Even if Alice has the desire as well as the goal to visit John, the facts of the situation lead her to form the intention to visit her parents.

Consider now the following theory
\begin{align*}
	F & =\{\mathit{saturday},\, \mathit{John\_home\_confined},\, \mathit{third\_week} \} \\
	R & = \{r_2,\, r_3,\, r_{4}, r_{5}: \mathit{John\_home\_confined},\, \mathit{third\_week} \To_{\OBLL} \neg \mathit{visit\_John}\}\\
	> & =  \{(r_{2}, r_{4}) \}.
\end{align*}
Unfortunately, John has a stream of bad luck. Now, he is not debilitated but
has been home convicted for a minor crime. The law of his country states that
during the first two months of his home conviction, no visits to him are
allowed. This time, even if Alice knows that John is at home, norms forbid Alice to visit him. Again, Alice opts to visit her parents.
\end{example}

\let\boxbox\Box
\def\myX{X}
\let\Box\myX
\let\boxblacksquare\blacksquare
\def\myY{Y}
\let\blacksquare\myY
\let\boxdiamond\Diamond
\def\myT{T}
\let\Diamond\myT

\section{Logic} 
\label{sec:logic}

Defeasible Logic (DL) \citep{tocl} is a simple, flexible, and efficient rule
based non-monotonic formalism. Its strength lies in its constructive proof
theory, which has an argumentation-like structure, and it allows us to draw
meaningful conclusions from (potentially) conflicting and incomplete knowledge
bases. Being non-monotonic means that more accurate conclusions can be obtained
when more pieces of information are given (where some previously derived
conclusions no longer follow from the knowledge base). 

The framework provided by the proof theory accounts for the possibility of
extensions of the logic, in particular extensions with modal operators.
Several of such extensions have been proposed, which then resulted in
successful applications in the area of normative reasoning \citep{coala},
modelling agents
\citep{jaamas:08,Kravari-ijcai,DBLP:journals/igpl/GovernatoriPRS09}, and
business process compliance \citep{Governatori_Sadiq_2008}. A model theoretic
possible world semantics for modal Defeasible Logic has been proposed in
\citep{deon2012}. In addition, efficient implementations of the logic
(including the modal variants), able to handle very large knowledge bases,
have been advanced in \citep{spindle,drdevice,Antoniou-ECAI2012}.



%
\begin{definition}[Language]
Let \PROP be a set of propositional atoms, and $\MOD = \set{\BEL, \OBL, \DES,
\GOAL, \INT, \INTS}$ the set of modal operators, whose reading is $\BEL$ for
\emph{belief}, $\OBL$ for \emph{obligation}, $\DES$ for \emph{desire}, $\GOAL$
for \emph{goal}, $\INT$ for \emph{intention} and $\INTS$ for \emph{social
intention}. Let \LAB be a set of arbitrary labels. The set $\LIT=\PROP\cup \{
\neg p | p\in\PROP\}$ denotes the set of \emph{literals}. The
\emph{complement} of a literal $q$ is denoted by $\non q$; if $q$ is a
positive literal $p$, then $\non q$ is $\neg p$, and if $q$ is a negative
literal $\neg p$ then $\non q$ is $p$. The set of \emph{modal literals} is
$\MODLIT=\set{\Box l, \neg \Box l | l\in \LIT, \Box\in \set{\OBL, \DES, \GOAL,
\INT, \INTS}}$. We assume that modal operator ``$\Box$'' for belief \BEL is
the empty modal operator. Accordingly, a modal literal $\BEL l$ is equivalent
to literal $l$; the complement of $\BEL \non l$ and $\neg \BEL l$ is $l$. 
\end{definition}

\begin{definition}[Defeasible Theory]\label{def:theory}
A \emph{defeasible theory} $D$ is a structure $(F,R,>)$, where
\begin{enumerate*}[label=(\arabic*)]
\item $F \subseteq \LIT \cup \MODLIT$ is a set of \emph{facts} or indisputable statements;
\item $R$ contains three sets of \emph{rules}: for beliefs, obligations, and outcomes;
\item ${>} \subseteq R \times R$ is a binary \emph{superiority relation} to determine the
relative strength of (possibly) conflicting rules. We use the infix notation $r>s$ to mean that $(r,s)\in >$.
\end{enumerate*}
A theory is \emph{finite} if the set of facts and rules are so.
\end{definition}

\emph{Belief rules} are used to relate the factual knowledge of an agent,
that is to say, her vision of the environment she is situated in. Belief
rules define the relationships between states of the world; as such,
provability for beliefs does not generate modal literals.

\emph{Obligation rules} determine when and which obligations are in force. The
conclusions generated by obligation rules take the $\OBL$ modality. 

Finally,
\emph{outcome rules} establish the possible outcomes of an agent depending on
the particular context. Apart from obligation rules, outcome rules are used to
derive conclusions for all modes representing goal-like attitudes: desires,
goals, intentions, and social intentions.

Following ideas given in \citep{ajl:ctd}, rules can gain more expressiveness
when a \emph{preference operator} $\odot$ is adopted. An expression like $a \odot b$ means that if $a$ is possible,
then $a$ is the first choice, and $b$ is the second one; if $\neg a$ holds,
then the first choice is not attainable and $b$ is the actual choice. This
operator is used to build chains of preferences, called
\emph{$\odot$-expressions}. 
The formation rules for $\odot$-expressions
are:
\begin{enumerate}
\item every literal is an $\odot$-expression;
\item if $A$ is an $\odot$-expression and $b$ is a literal then $A\odot b$ is an $\odot$-expression. 
\end{enumerate}
%
In addition, we stipulate that $\odot$ obeys the following properties: 
\begin{enumerate}
\item $a\odot (b \odot c) = (a \odot b) \odot c$
(associativity);
\item $\bigodot_{i=1}^{n} a_i = (\bigodot_{i=1}^{k-1}a_i) \odot
(\bigodot_{i=k+1}^{n\phantom{k}}a_i)$ where there exists $j$ such that $a_j= a_k$
and $j<k$ (duplication and contraction on the right).
\end{enumerate}
Typically, $\odot$-expressions are given by the agent
designer, or obtained through \emph{construction rules} based on the
particular logic \citep{ajl:ctd}.

In the present paper, we use the classical definition of \emph{defeasible rule} in
DL \citep{tocl}, while \emph{strict rules} and \emph{defeaters} are
omitted\footnote{The restriction does not result in any loss of generality:
(i) the superiority relation does not play any role in proving definite
conclusions, and (ii) for defeasible conclusions \cite{tocl} prove
that it is always possible to remove strict rules from the superiority
relation and defeaters from the theory to obtain an equivalent theory
without defeaters and where the strict rules are not involved in the
superiority relation.}. 
\begin{definition}[Defeasible rule]
A \emph{defeasible rule} is an expression
$r: A(r) \Rightarrow_{\Box} C(r)$, where
\begin{enumerate*}[label=(\arabic*)]
\item $r \in \LAB$ is the name of the rule;
\item $A(r)=\set{\seq{a}}$, the \emph{antecedent} (or \emph{body}) of the rule, is the set of
the premises of the rule. Each $a_i$ is either in $\LIT$ or in $\MODLIT$;
\item $\Box \in \set{\BEL, \OBL, \OUT}$ represents the \emph{mode} of the
rule: $\To_{\BEL}$, $\To_{\OBL}$, $\To_{\OUT}$ denote respectively rules for
beliefs, obligations, and outcomes. From now on, we omit the subscript \BEL in
rules for beliefs, i.e., $\Rightarrow$ is used as a shortcut for
$\Rightarrow_{\BEL}$;

\item $C(r)$ is the \emph{consequent} (or \emph{head}) of the rule, which is a
single literal if $\Box=\BEL$, and an $\odot$-expression otherwise\footnote{It
is worth noting that modal literals can occur only in the antecedent of rules:
the reason is that the rules are used to derive modal conclusions and we do
not conceptually need to iterate modalities. The motivation of a single
literal as a consequent for belief rules is dictated by the intended reading
of the belief rules, where these rules are used to describe the environment.}.
\end{enumerate*}
\end{definition}

A defeasible rule is a rule that can be defeated by contrary evidence. The
underlying idea is that if we know that the premises of the rule are the
case, then we may conclude that the conclusion holds, unless there is
evidence proving otherwise. Defeasible rules in our framework introduce modal
literals; for instance, if we have rule $r: A(r) \To_{\OBL} c$ and the
premises denoted by $A(r)$ are the case, then $r$ can be used to prove
$\OBL$c.

We use the following abbreviations on sets of rules: $R^{\Box}$
($R^{\Box}[q]$) denotes all rules of mode $\Box$ (with consequent $q$), and
$R[q]$ denotes the set $\bigcup_{\Box \in \set{\BEL, \OBL, \OUT}}
R^{\Box}[q]$. With $R[q,i]$ we denote the set of rules whose head is
$\odot_{j=1}^{n}c_{j}$ and $c_{i}=q$, with $1\leq i\leq n$.



Notice that labelling the rules of DL produces nothing more but a
simple treatment of the modalities, thus two interaction strategies between
modal operators are analysed: \emph{rule conversion} and
\emph{conflict resolution} \citep{jaamas:08}.


In the remainder, we shall define a completely new inference machinery that
takes this into account by adding preferences and dealing with a larger set
of modalised conclusions, which are not necessarily obtained from the
corresponding rules but also by using other rule types. For instance, we
argued in Section \ref{sec:Intuition} that a goal can be viewed as a
preferred outcome and so the fact that a certain goal $\GOAL p$ is derived
depends on whether we can obtain $p$ as a preferred outcome by using a rule
for $\OUT$.



\paragraph{Rule conversion.}{It is sometimes meaningful to use rules for a
modality $\Box$ as if they were for another modality $\blacksquare$, i.e., to convert one type of conclusion into a different one. 

Formally, we define an asymmetric binary relation $\Convert \subseteq \MOD
\times \MOD$ such that $\Aconv{\Box}{\blacksquare}$ means ``a rule of mode
$\Box$ can be used also to produce conclusions of mode $\blacksquare$''. This
intuitively corresponds to the following inference schema:
\[
\frac{\ds \seq{\blacksquare a} \quad \seq{a} \To_{\Box} b}
{\ds \blacksquare b} \mbox{\; $\Convert(\Box,\blacksquare)$}. 
\]

In our framework obligations and goal-like attitudes cannot change what the
agent believes or how she perceives the world, we thus consider only
conversion from beliefs to the other modes (i.e., $\Convert(\BEL, \Box)$ with
$\Box \in \MOD\setminus \set{\BEL}$). Accordingly, we enrich the notation with
$R^{\BEL,\Box}$ for the set of belief rules that can be used for a conversion
to mode $\Box \in \MOD \setminus \set{\BEL}$. The antecedent of all such
rules is not empty, and does not contain any modal literal.

\begin{example}\ \vspace*{-\baselineskip}
%
\[
	F = \{\mathit{saturday}\}\quad
	R = \{r_2,\, r_{6}: \mathit{visit\_John} \To \mathit{chocolate\_box} \}
\]
where we stipulate that $\Aconv{\BEL}{\DES}$ holds.

Alice desires to visit John. John is a passionate of chocolate and, usually,
when Alice goes to meet him at his place, she brings him a box of chocolate.
Thus, we may state that her desire of visiting John implies the desire to
bring him a box of chocolate. This is the case since we can use rule $r_{6}$
to convert beliefs into desires.
\end{example}}

\paragraph{Conflict-detection/resolution.}{It is crucial to identify criteria for detecting and solving conflicts between
different modalities.
We define an asymmetric
binary relation $\Conflict \subseteq \MOD \times \MOD$ such that
$\Aconf{\Box}{\blacksquare}$ means ``modes $\Box$ and $\blacksquare$ are in conflict and mode $\Box$ prevails
over $\blacksquare$''. In our framework, we consider conflicts between (i) beliefs and intentions, (ii) beliefs and social intentions, and (iii) obligations and social intentions. In other
words, the agents are characterised by:
\begin{itemize}
  \item $\Conflict(\BEL,\INT)$, $\Conflict(\BEL,\INTS)$ meaning that
    agents are realistic \citep{broersen2002goal};
  \item $\Conflict(\OBL,\INTS)$ meaning that agents are social \citep{jaamas:08}.
\end{itemize}

Consider the scenario of Example~\ref{ex:Jsick} with $\Aconf{\BEL}{\INT}$ and
$\Aconf{\OBL}{\INTS}$. We recall that rule $r_{5}$ states the prohibition to
visit John during the first month of his conviction. Thus, Alice has the
intention to visit John, but she does not have the social intention to do so.
This is due to rule $r_{5}$ that prevents through conflict to prove $\INTS
\mathit{visit\_John}$. At the end, it is up to the agent (or the designer of
the agent) whether to comply with the obligation, or not.}

\smallskip

\noindent The \emph{superiority relation} $>$ among rules is used to
define where one rule may override the (opposite) conclusion of another one. There are two applications of the superiority relation: the
first considers rules of the same mode while the latter compares rules of different
modes. Given $r\in R^{\Box}$ and $s\in R^{\blacksquare}$, $r>s$ iff $r$ converts
$\Box$ into $\blacksquare$ or $s$ converts $\blacksquare$ into $\Box$, i.e.,
the superiority relation is used when rules, each with a different mode, are
used to produce complementary conclusions of the same mode.
Consider the following theory
\allowdisplaybreaks
\begin{align*}
  F & = \{\mathit{go\_to\_Rome},\ \mathit{parent\_anniversary},\ \mathit{August} \} \\
  R & = \{r_1: \mathit{go\_to\_Rome} \To \mathit{go\_to\_Italy} \\ 
    & \psl  r_2: \mathit{parent\_anniversary} \To_{\OUT} \mathit{go\_to\_Rome} \\ 
    & \psl  r_3: \mathit{August} \To_{\OUT} \neg \mathit{go\_to\_Italy} \}\\
  > & = \{(r_{1}, r_{3})\}
\end{align*}
where we stipulate that $\Aconv{\Bel}{\GOAL}$ holds.

It is my parents' anniversary and they are going to celebrate it this August
in Rome, which is the capital of Italy. Typically, I do not want to go to
Italy in August since the weather is too hot and Rome itself is too crowded. 
Nonetheless, I have the goal to go to Italy this summer for my parents'
wedding anniversary, since I am a good son. Here, the superiority applies
because we use $r_{1}$ through a conversion from belief to goal.

Aligning with \citep{DBLP:journals/ai/CohenL90}, \Conflict and superiority
relations narrow and regulate the intentionality of conclusions drawn by the
\Convert relation in such a way that ``agents need not intend all the
expected side-effects of their intentions''. This also prevents the ill-famed
``dentist problem'' which brings counterintuitive consequences, as also
pointed out by \cite{DBLP:journals/ijswis/KontopoulosBGA11}. If I want to go to the dentist, either I know that the pain is a ``necessary way'' to get better, or I am a masochist. Either way, I intend to suffer some pain for getting some ends.

%


\begin{definition}[Proof]
A \emph{proof} $P$ of \emph{length} $n$ is a finite sequence $P(1), \ldots , P(n)$ of
\emph{tagged literals} of the type $+\partial_{\Box} q$ and $-\partial_{\Box} q$,
where $\Box\in\MOD$.
\end{definition}
 The proof conditions below define the logical meaning of
such tagged literals. As a conventional notation, $P(1..i)$ denotes the
initial part of the sequence $P$ of length $i$. Given a defeasible
theory $D$, $+\partial_{\Box} q$ means that $q$ is defeasibly provable in $D$
with the mode $\Box$, and $-\partial_{\Box} q$ that it has been proved in $D$
that $q$ is not defeasibly provable in $D$ with the mode $\Box$. 
%
Hereafter, the term \emph{refuted} is a synonym of \emph{not provable} and we  
use $D\vdash\pm \partial_{\Box}l$ iff there is a proof $P$ in $D$ such that 
$P(n) = \pm \partial_{\Box}l$ for an index $n$.

In order to characterise the notions of provability/refutability for
beliefs ($\pm \partial_{\BEL}$), obligations ($\pm \partial_{\OBL}$), desires
($\pm \partial_{\DES}$), goals ($\pm \partial_{\GOAL}$), intentions ($\pm
\partial_{\INT}$) and social intentions ($\pm\partial_{\INTS}$), it is
essential to define when a rule is \emph{applicable} or \emph{discarded}. To
this end, the preliminary notions of \emph{body-applicable} and
\emph{body-discarded} must be introduced. A rule is
\emph{body-applicable} when each literal in its body is proved with the
appropriate modality; a rule is \emph{body-discarded} if (at least) one of its premises has been refuted.

\begin{definition}[Body applicable]\label{def:BodyAppl}
Let $P$ be a proof and $\Box \in \set{\OBL, \DES, \GOAL, \INT, \INTS}$. A rule
$r \in R$ is \emph{body-applicable} (at $P(n+1)$) iff for all $a_i \in
A(r)$:
\begin{enumerate*}[label=(\arabic*)]
  \item if $a_i = \Box l$ then $+\partial_{\Box} l \in P(1..n)$,
  \item if $a_i = \neg \Box l$ then $-\partial_{\Box} l \in P(1..n)$, 
  \item if $a_i = l \in \LIT$ then $+\partial_{\BEL} l \in P(1..n)$.
\end{enumerate*}
\end{definition}

\begin{definition}[Body discarded]\label{def:BodyDisc}
Let $P$ be a proof and $\Box \in \set{\OBL, \DES, \GOAL, \INT, \INTS}$. A rule $r\in R$ is \emph{body-discarded} (at $P(n+1)$) iff there is
$a_{i}\in A(r)$ such that
\begin{enumerate*}[label=(\arabic*)]
  \item $a_i = \Box l$ and  $-\partial_{\Box} l\in P(1..n)$, or
  \item $a_i = \neg\Box l$ and $+\partial_{\Box}l\in P(1..n)$, or
  \item $a_i = l\in\LIT$ and $-\partial_{\BEL} l\in P(1..n)$. 
\end{enumerate*}
\end{definition}

As already stated, belief rules allow us to derive literals with different
modalities through the conversion mechanism. The applicability mechanism takes this constraint into account.

\begin{definition}[Conv-applicable]\label{def:Conv-appl}
Let $P$ be a proof. A rule $r\in R$ is \emph{Conv-applicable} (at $P(n+1)$) for $\Box$ iff
\begin{enumerate*}[label=(\arabic*)]
	\item $r\in R^{\BEL}$,
	\item $A(r)\neq\emptyset$, 
	\item $A(r)\cap\MODLIT=\emptyset$ and
	\item $\forall a\in A(r),\, +\partial_{\Box}a\in P(1..n).$
\end{enumerate*}
\end{definition}

\begin{definition}[Conv-discarded]\label{def:Conv-disc}
Let $P$ be a proof. A rule $r\in R$ is \emph{Conv-discarded} (at $P(n+1)$) for $\Box$ iff
\begin{enumerate*}[label=(\arabic*)]
	\item $r\notin R^{\BEL}$, or
	\item $A(r)=\emptyset$, or
	\item $A(r)\cap\MODLIT\neq\emptyset$, or
	\item $\exists a\in A(r)$ s.t. $-\partial_{\Box}a\in P(1..n).$
\end{enumerate*}
\end{definition}

\noindent Let us consider the following theory
\[
  F = \{ a,\ b,\ \OBL c\} \quad
  R = \{ r_1: a \To_{\OBL} b,\ r_{2}: b, c \To d\}.
\]
Rule $r_{1}$ is applicable while $r_{2}$ is not, given that $c$ is not proved
as a belief. Instead, $r_{2}$ is \emph{Conv-applicable} for $\OBL$, since
$\OBL c$ is a fact and $r_{1}$ gives $\OBL b$.

\medskip

The notion of applicability gives guidelines on how to consider the next
element in a given chain. Given that a belief rule cannot generate reparative
chains but only single literals, we conclude that the applicability condition
for belief collapses into body-applicability. When considering obligations,
each element before the current one must be a violated obligation. Concerning
desires, given that each element in an outcome chain represents a possible
desire, we only require the rule to be applicable either directly, or through
the \Convert relation. A literal is a candidate to be a goal only if none of
the previous elements in the chain has been proved as such. An intention must
pass the wishful thinking filter (that is, there is no factual knowledge for
the opposite conclusion), while social intention is also constrained not to
violate any norm.

\begin{definition}[Applicable rule]\label{def:Applicability}
	Given a proof $P$, $r\in R[q,i]$ is \emph{applicable} (at index $i$ and $P(n+1)$) for 
	\begin{enumerate}
	  \item $\BEL$ iff $r \in R^{\BEL}$ and is body-applicable.

	  \item \begin{tabbing}
		$\OBL$ iff either \= (2.1) \=(2.1.1) $r\in R^{\OBL}$ and is
	    body-applicable,\\ 
		\> \>(2.1.2) $ \forall c_k \in C(r),\, k < i,\,
	    +\partial_{\OBL}c_k \in P(1..n)$ and $-\partial c_k \in P(1..n)$, or\\
		\> (2.2) $r$ is Conv-applicable.
	  \end{tabbing}
	  \item \begin{tabbing}
		$\DES$ iff either
		\= (3.1) $r\in R^{\OUT}$ and is body-applicable, or\\
		\> (3.2) Conv-applicable.
	\end{tabbing}

	\item \begin{tabbing}
		$\Box\in\set{\GOAL,\INT,\INTS}$ iff either \= (4.1) \= (4.1.1)
	    $r\in R^{\OUT}$ and is body-applicable, and \\ 
		\> \>(4.1.2) \=$\forall c_k\in C(r), \, k<i$, $+\partial_{\blacksquare}\non c_{k}\in P(1..n)$ for some $\blacksquare$\\ 
		\> \> such that $\Conflict(\blacksquare,\Box)$ and $-\partial_{\Box}c_{k}\in P(1..n)$, or \\
		\> (4.2) $r$ is Conv-applicable. 
	\end{tabbing}
\noindent 
	\end{enumerate}
For $\GOAL$ there are no conflicts; for $\INT$ we have $\Conflict(\BEL,\INT)$, and for $\INTS$ we have $\Conflict(\BEL,\INTS)$ and $\Conflict(\OBL,\INTS)$.
\end{definition}

\begin{definition}[Discarded rule]\label{def:Discardability}
Given a proof $P$, $r\in R[q,i]$ is \emph{discarded} (at index $i$ and $P(n+1)$) for 
\begin{enumerate}
  \item $\BEL$ iff $r \in R^{\BEL}$ or is body-discarded.
  
  \item \begin{tabbing}
	$\OBL$ iff \=(2.1) \= (2.1.1) $r\notin R^{\OBL}$ or is
    body-discarded, or \\
	\> \> (2.1.2) $ \exists c_k \in C(r),\, k < i,$ s.t. $
    -\partial_{\OBL}c_k \in P(1..n)$ or $+\partial c_k \in P(1..n)$, and \\
\> (2.2) $r$ is Conv-discarded.
  \end{tabbing}
  \item \begin{tabbing}
	$\DES$ iff \= (3.1) $r\notin R^{\OUT}$ or is
    body-discarded, and\\
	\> (3.2) $r$ is Conv-discarded.
\end{tabbing}
  \item \begin{tabbing}
	$\Box\in\set{\GOAL,\INT,\INTS}$ iff \=(4.1) \=(4.1.1)
    $r\notin R^{\OUT}$ or is body-discarded, or \\
	\> \> (4.1.2) $\exists c_k\in
    C(r), \, k<i$, s.t. $-\partial_{\blacksquare}\non c_{k}\in P(1..n)$ for all $\blacksquare$ \\
	\> \> such that $\Conflict(\blacksquare,\Box)$ or $+\partial_{\Box}c_{k}\in P(1..n)$ and \\
	\> (4.2) $r$ is Conv-discarded.
\end{tabbing}
  \end{enumerate}
	For $\GOAL$ there are no conflicts; for $\INT$ we have
$\Conflict(\BEL,\INT)$, and for $\INTS$ we have $\Conflict(\BEL,\INTS)$ and
$\Conflict(\OBL,\INTS)$.
\end{definition}
Notice that the conditions of Definition~\ref{def:Discardability} are the
\emph{strong negation}\footnote{The strong negation principle is closely
related to the function that simplifies a formula by moving all negations to
an innermost position in the resulting formula, and replaces the positive
tags with the respective negative tags, and the other way around. (See
\citep{ecai2000-5,DBLP:journals/igpl/GovernatoriPRS09}.)} of those given in Definition~\ref{def:Applicability}. The
conditions to establish a rule being discarded correspond to the
constructive failure to prove that the same rule is applicable.

We are now ready to introduce the definitions of the proof conditions for the
modal operators given in this paper. We start with that for desire.

\begin{definition}[Defeasible provability for desire]\label{def:proofCond+DES}
	The proof conditions of \emph{defeasible provability} for desire are

\smallskip
\noindent
\begin{minipage}{.15\textwidth}
\begin{tabbing}
$+\partial_{\DES}$: If $P(n+1)=+\partial_{\DES} q$ then\\
  (1) \= $\DES q \in \FACTS$ or \\
  (2) \= (2.1) $\neg \DES q \not\in \FACTS$ and \\
    \> (2.2) \= $\exists r\in R[q,i]$ s.t. $r$ is applicable for $\DES$ and\\
    \> (2.3) \= $\forall s\in R[\non q,j]$ either
          (2.3.1) $s$ is discarded for $\DES$, or
          (2.3.2) $s \not > r$.
\end{tabbing}
\end{minipage}
\end{definition}

The above conditions determine when we are able to assert that $q$ is a
desire. Specifically, a \emph{desire} is each element in a
chain of an outcome rule for which there is no stronger argument for the
opposite desire.

The negative counterpart $-\partial_{\DES}q$ is obtained by the principle of strong negation.

\begin{definition}[Defeasible refutability for desire]\label{def:proofCond-DES}
	The proof conditions of \emph{defeasible refutability} for desire are

\smallskip
\noindent
\begin{minipage}{.15\textwidth}
\begin{tabbing}
$-\partial_{\DES}$: If $P(n+1)=-\partial_{\DES} q$ then\\
  (1) \= $\DES q \not\in \FACTS$ and\\
  (2) \= (2.1) $\neg \DES q \in \FACTS$, or \\
    \>   (2.2) $\forall r\in R[q,i]$ either $r$ is discarded for $\DES$, or\\
	\>   (2.3) \= $\exists s\in R[\non q,j]$ s.t.
           (2.3.1) $s$ is applicable for $\DES$ and 
           (2.3.2) $s > r$.
\end{tabbing}
\end{minipage}
\end{definition}

The proof conditions for $+\partial_{\Box}$, with $\Box \in \MOD\setminus \set{\DES}$ are as follows, provided that $\blacksquare$ and $\Diamond$ represent two arbitrary modalities in $\MOD$:

\begin{definition}[Defeasible provability for obligation, goal, intention and social intention]\label{def:proofCond+X}
	The proof conditions of \emph{defeasible provability} for $\Box \in \MOD\setminus \set{\DES}$ are
	
	\smallskip
	\noindent
	\begin{minipage}{.15\textwidth}
	\begin{tabbing}
	$+\partial_{\Box}$: If $P(n+1)=+\partial_{\Box} q$ then\\
	  (1) \= $\Box q \in \FACTS$ or \\
	  (2) \= (2.1) $\neg \Box q \not\in \FACTS$ and $(\blacksquare \non q \not\in \FACTS$ for $\blacksquare = \Box$ or $\Conflict(\blacksquare, \Box))$ and \\
	    \> (2.2) \= $\exists r\in R[q,i]$ s.t. $r$ is applicable for $\Box$ and\\
	    \> (2.3) \= $\forall s\in R[\non q,j]$ either\\
	    \> \> (2.3.1) $\forall \blacksquare$ s.t. $\blacksquare=\Box$ or $\Aconf{\blacksquare}{\Box}$, $s$ is discarded for $\blacksquare$; or \\
	    \> \> (2.3.2) \=$\exists \Diamond, \exists t\in R[q,k]$ s.t. $t$ is applicable for $\Diamond$, and either\\
	                \> \> \> (2.3.2.1) \= $t>s$ if $\blacksquare = \Diamond$, $\Aconv{\blacksquare}{\Diamond}$, or $\Aconv{\Diamond}{\blacksquare}$; or\\
	    \> \> \> (2.3.2.2) \= $\Aconf{\Diamond}{\blacksquare}$.
	\end{tabbing}
	\end{minipage}
\end{definition}
To show that a literal $q$ is defeasibly provable with the modality $\Box$ we
have two choices: (1) the modal literal $\Box q$ is a fact; or (2) we need to
argue using the defeasible part of $D$. For (2), we require that (2.1) a
complementary literal (of the same modality, or of a conflictual modality)
does not appear in the set of facts, and (2.2) there must be an applicable
rule for $\Box$ and $q$. Moreover, each possible attack brought by a rule $s$
for $\non q$ has to be either discarded for the same modality of $r$ and for
all modalities in conflict with $\Box$ (2.3.1), or successfully
counterattacked by another stronger rule $t$ for $q$ (2.3.2). We recall that
the superiority relation combines rules of the same mode, rules with
different modes that produce complementary conclusion of the same mode
through conversion (both considered in clause (2.3.2.1)), and rules with
conflictual modalities (clause 2.3.2.2). Trivially, if $\Box=\BEL$ then the
proof conditions reduce to those of classical defeasible logic \citep{tocl}.

Again, conditions for $-\partial_{\Box}$ are derived by the principle of
strong negation from that for $+\partial_{\Box}$ and are as follows.

\begin{definition}[Defeasible refutability for obligation, goal, intention and social intention]\label{def:proofCond-X}
	The proof conditions of \emph{defeasible refutability} for $\Box \in\set{\OBL, \GOAL, \INT, \INTS}$ are

\smallskip	
\noindent
\begin{minipage}{.15\textwidth}
\begin{tabbing}
$-\partial_{\Box}$: If $P(n+1)=-\partial_{\Box} q$ then\\
  (1) \= $\Box q \notin \FACTS$ and either\\
  (2) \= (2.1) $\neg \Box q \in \FACTS$ or $(\blacksquare \non q \in \FACTS$ for $\blacksquare=\Box$ or $\Conflict(\blacksquare,\Box))$ or \\
    \> (2.2) \= $\forall r\in R[q,i]$ either $r$ is discarded for $\Box$ or\\
    \> (2.3) \= $\exists s\in R[\non q,j]$ s.t.\\
    \> \> (2.3.1) $\exists \blacksquare$ s.t. $(\blacksquare = \Box$ or $\Aconf{\blacksquare}{\Box})$ and $s$ is applicable for $\blacksquare$, and \\
    \> \> (2.3.2) \= $\forall \Diamond, \forall t\in R[q,k]$ either $t$ is discarded for $\Diamond$, or\\
    	\> \> \> (2.3.2.1) \= $t\not>s$ if $\blacksquare= \Diamond$, $\Aconv{\blacksquare}{\Diamond}$, or $\Aconv{\Diamond}{\blacksquare}$; and\\ 
        \> \> \> (2.3.2.2) not $\Conflict(\Diamond,\blacksquare)$.
\end{tabbing}
\end{minipage}
\end{definition}

To better understand how applicability and proof conditions interact to define the (defeasible) conclusions of a given theory, we consider the example below.

\begin{example}\label{ex:proofTagsExplanation}
Let $D$ be the following modal theory	
  \begin{align*}
    F & = \set{ a_{1},\, a_{2},\, \neg b_{1},\, \OBL \neg b_{2}} &
    R & = \set{ r: {a_{1}}\To_{\OUT} b_{1}\odot b_{2}\odot b_{3}\odot b_{4},
	\ s: a_{2} \To_{\OUT} b_{4}}.
  \end{align*}
Here, $r$ is trivially applicable for $\DES$ and $+\partial_{\DES} b_{i}$
holds, for $1\leq i \leq 4$. Moreover, we have $+\partial_{\GOAL} b_{1}$ and
$r$ is discarded for $\GOAL$ after $b_{1}$. Due to $+\partial \neg b_{1}$, it
follows that $-\partial_{\INT}b_{1}$ holds (as well as
$-\partial_{\INTS}b_{1}$); the rule is applicable for $\INT$ and $b_{2}$, and
we are able to prove $+\partial_{\INT}b_{2}$; the rule is thus discarded for
$\INT$ and $b_{3}$ as well as $b_{4}$. Due to $\OBL\neg b_{2}$ being a fact,
$r$ is discarded for $\INTS$ and $b_{2}$ resulting in
$-\partial_{\INTS}b_{2}$, which in turn makes the rule applicable for $\INTS$
and $b_{3}$, proving $+\partial_{\INTS}b_{3}$. As we have argued before, this
makes $r$ discarded for $b_{4}$. Even if $r$ is discarded for $\INTS$ and
$b_{4}$, we nonetheless have $D\vdash +\partial_{\INTS}b_{4}$ due to $s$;
specifically, $D\vdash +\partial_{X}b_{4}$ with $X\in \set{\DES, \GOAL, \INT,
\INTS}$ given that $s$ is trivially applicable for $X$.

For further illustrations of how the machinery works, the reader is referred to \ref{sec:table}.
\end{example}

The next definition extends the concept of complement for modal literals and
is used to establish the logical connection among proved and refuted literals
in our framework.

\begin{definition}[Complement set]\label{def:complement}
The \emph{complement set} of a given modal literal $l$, denoted by
$\tilde{l}$, is defined as follows:
\begin{enumerate*}[label=(\arabic*)]
\item if $l=\DES m$, then $\tilde{l}=\set{\neg\DES m}$;
\item if $l=\Box m$, then $\tilde{l}=\set{\neg\Box m, \Box\non m}$, with $\Box \in \set{\OBL, \GOAL, \INT, \INTS}$;
\item if $l=\neg \Box m$, then $\tilde{l}=\set{\Box m}$.
\end{enumerate*}
\end{definition}

The logic resulting from the above proof conditions enjoys
properties describing the appropriate behaviour of the modal operators for consistent theories.
\begin{definition}[Consistent defeasible theory]\label{def:consistency}
A defeasible theory $D = (\FACTS, R, >)$ is \emph{consistent} iff $>$ is acyclic and $\FACTS$ does not contain pairs of complementary literals, that is if $\FACTS$ does not contain pairs like (i) $l$ and $\non l$, (ii) $\Box l$ and $\neg\Box l$ with $\Box\in\MOD$, and (iii) $\Box l$ and $\Box\non l$ with $\Box\in \set{\GOAL, \INT, \INTS}$.
\end{definition}

\begin{restatable}{prop}{Coherence}
\label{prop:CoherenceConsistence}
Let $D$ be a consistent, finite defeasible theory. For any literal $l$, it is not possible to have both 
\begin{enumerate}
	\item $D\vdash+\partial_{\Box}l$ and $D\vdash-\partial_{\Box}l$ with $\Box\in\MOD$;
	 \item $D\vdash+\partial_{\Box}l$ and $D\vdash+\partial_{\Box}\non l$ with $\Box\in \MOD\setminus\set{\DES}$.
\end{enumerate}
\end{restatable}
All proofs of propositions, lemmas and theorems are reported in \ref{sec:proofs} and \ref{sec:CorrDefExt}.
\begin{filecontents}{proofCoherence.tex}
\begin{proof}
	\emph{1. (Coherence of the logic)} The negative proof tags are the strong negation of the positive
ones, and so are the conditions of a rule being discarded
(Definition~\ref{def:Discardability}) for a rule being applicable
(Definition~\ref{def:Applicability}). Hence, when the conditions for $+\partial_{\Box}$ hold, those for $-\partial_{\Box}$ do not.
 	
	\emph{2. (Consistency of the logic)} We split the proof into two cases: (i)
at least one of $\Box l$ and $\Box \non l$ is in \FACTS, and (ii) neither of them
is in \FACTS. For (i) the proposition immediately follows by the assumption of
consistency. In fact, suppose that $\Box l \in \FACTS$. Then clause (1) of
$+\partial_{\Box}$ holds for $l$. By consistency $\Box \non l\not\in \FACTS$,
thus clause (1) of Definition~\ref{def:proofCond+X} does not hold for $\non l$.
Since $\Box l \in \FACTS$, also clause (2.1) is always falsified for $\non l$,
and the thesis is proved.

For (ii), let us assume that both $+\partial_{\Box}l$ and
$+\partial_{\Box}\non l$ hold in $D$. A straightforward assumption derived by
Definitions~\ref{def:Applicability} and~\ref{def:Discardability} is that no
rule can be at the same time applicable and discarded for $\Box$ and $l$ for
any literal $l$ and its complement. Thus, we have that there are applicable
rules for $\Box$ and $l$, as well as for $\Box$ and $\non l$. This means that
clause (2.3.2) of Definition~\ref{def:proofCond+X} holds for both $l$ and
$\non l$. Therefore, for every applicable rule for $l$ there is an applicable
rule for $\non l$ stronger than the rule for $l$. Symmetrically, for every
applicable rule for $\non l$ there is an applicable rule for $l$ stronger
than the rule for $\non l$. Since the set of rules in $D$ is finite by
construction, this situation is possible only if there is a cycle in the
transitive closure of the superiority relation, which is in contradiction
with the hypothesis of $D$ being consistent.
\end{proof}
\end{filecontents}
The meaning of the above proposition is that, for instance, it is not
possible for an agent to obey something that is obligatory and forbidden (obligatory not)
at the same time. On the other hand, an agent may have opposite desires given
different situations, but then she will be able to plan for only one between
the two alternatives.

Proposition \ref{prop:+PartialTHEN-Partial} below governs the interactions between different modalities
and the relationships between proved literals and refuted complementary
literals of the same modality. Proposition \ref{prop:donothold} proves that certain (likely-expected) implications do no hold.

\begin{restatable}{prop}{PositiveProp}\label{prop:+PartialTHEN-Partial}
Let $D$ be a consistent defeasible theory. For any literal $l$, the following statements hold:
\begin{enumerate}
	\item if $D\vdash+\partial_{\Box} l$, then $D\vdash-\partial_{\Box} \non l$ with $\Box \in \MOD \setminus \set{\DES}$;
	\item if $D\vdash+\partial l$, then $D\vdash -\partial_{\INT} \non l$; 
	\item if $D\vdash+\partial l$ or $D\vdash+\partial_{\OBL} l$, then $D\vdash -\partial_{\INTS} \non l$;
	\item\label{enum5} if $D\vdash +\partial_{\GOAL} l$, then $D\vdash +\partial_{\DES} l$;
	\item if $D\vdash -\partial_{\DES}l$, then $D\vdash -\partial_{\GOAL}l$.
\end{enumerate}
\end{restatable}

\begin{filecontents}{proofPositiveProp.tex}
\begin{proof}
For part 1., let $D$ be a consistent defeasible theory, and
$D\vdash+\partial_{\Box}l$. Literal $\non l$ can be in only one of the
following, mutually exclusive situations: (i) $D\vdash +\partial_{\Box} \non
l$; (ii) $D\vdash -\partial_{\Box} \non l$; (iii) $D\not\vdash
\pm\partial_{\Box} \non l$. Part 2 of Proposition~\ref{prop:CoherenceConsistence} allows us to exclude case (i), since $D\vdash +\partial_{\Box} l$
by hypothesis. Case (iii) denotes situations where there are loops in the
theory involving literal $\non l$,\footnote{For example, situations like $\Box
\non l \To_{\Box} \non l$, where the proof conditions generate a loop
without introducing a proof.} but inevitably this would affect also the
provability of $\Box l$, i.e., we would not be able to give a proof for
$+\partial_{\Box} l$ as well. This is in contradiction with the hypothesis.
Consequently, situation (ii) must be the case.

Parts 2. and 3. directly follow by
Definitions~\ref{def:Applicability} and \ref{def:Discardability}, while Definitions~\ref{def:Applicability} and
\ref{def:proofCond+X} justify part 4., given that $\GOAL$ is not involved in any conflict relation.  

Part 5. Trivially, from part 4.
\end{proof}
\end{filecontents}

\begin{restatable}{prop}{NegativeProp}\label{prop:donothold}
Let $D$ be a consistent defeasible theory. For any literal $l$, the following statements \emph{do not} hold:
\begin{enumerate} \setcounter{enumi}{5}
	\item if $D\vdash +\partial_{\DES} l$, then $D\vdash +\partial_{X} l$ with $X\in \set{\GOAL, \INT, \INTS}$;	
	\item if $D\vdash +\partial_{\GOAL} l$, then $D\vdash +\partial_{X} l$ with $X\in \set{\INT, \INTS}$;
	\item if $D\vdash +\partial_{X} l$, then $D\vdash +\partial_{Y} l$ with $X=\set{\INT, \INTS}$ and $Y=\set{\DES, \GOAL}$;	
	\item if $D\vdash -\partial_{Y} l$, then $D\vdash -\partial_{X} l$ with $Y\in \set{\DES, \GOAL}$ and $X\in \set{\INT, \INTS}$.
\end{enumerate}
\end{restatable}

Parts 6. and 7. directly follow by Definitions from~\ref{def:Applicability} to \ref{def:proofCond-X} and rely on the intuitions
presented in Section~\ref{sec:Intuition}. Parts from 7. to 9.
reveal the true nature of expressing outcomes in a preference order: it may be
the case that the agent desires something (may it be even her preferred
outcome) but if the factuality of the environment makes this outcome impossible
to reach, then she should not pursue such an outcome, and instead commit
herself on the next option available.
The statements of Proposition \ref{prop:donothold} exhibit a common
feature which can be illustrated by the idiom: ``What's your plan B?'',
meaning: even if you are willing for an option, if such an option is not
feasible you need to strive for the plan B. 

\begin{filecontents}{proofNegative.tex}
\begin{proof}
Example~\ref{ex:Jsick} in the extended version offers counterexamples showing
the reason why the above statements do not hold.
\begin{align*}
	F  & = \{\mathit{saturday},\, \mathit{John\_away},\, \mathit{John\_sick} \} \\
	R  & = \{r_2: \mathit{saturday} \Rightarrow_{\OUT} \mathit{visit\_John}\odot
  \mathit{visit\_parents} \odot \mathit{watch\_movie}\\ 
		& \psl r_3: \mathit{John\_away} \Rightarrow_{\Bel} \neg \mathit{visit\_John}\\
		& \psl r_{4}: \mathit{John\_sick} \To_{\OUT} \neg \mathit{visit\_John}\odot \mathit{short\_visit}\}\\
		& \psl r_{7}: \mathit{John\_away} \To_{\BEL} \neg \mathit{short\_visit}\}\\
	>  & = \{(r_{2}, r_{4}) \}.
\end{align*} 
Given that $r_{2}> r_{4}$, Alice has the desire to $\mathit{visit\_John}$, and this is
also her preferred outcome. Nonetheless, being $\mathit{John\_away}$ a fact, this is not
her intention, while so are $\neg \mathit{visit\_John}$ and $\mathit{visit\_parents}$.
\end{proof}
\end{filecontents}

%
%
%

\let\Box\boxbox
\let\blacksquare\boxblacksquare
\let\Diamond\boxdiamond

\section{Algorithmic results} 
\label{sec:algorithmic_results}

We now present procedures and algorithms to compute the
\emph{extension} of a \emph{finite} defeasible theory
(Subsection~\ref{subsec:algorithms}), in order to ascertain the complexity of
the logic introduced in the previous sections. The algorithms are inspired to
ideas proposed in \citep{Maher2001,Lam.2011}.
%

\subsection{Notation for the algorithms}\label{subsec:notation}

From now on, $\blacksquare$ denotes  a generic modality
in \MOD, $\Diamond$ a generic modality in $\MOD \setminus \set{\BEL}$, and
$\Box$ a fixed modality chosen in $\blacksquare$. Moreover, whenever $\Box = \BEL$ we shall treat
literals $\Box l$ and $l$ as synonyms. To accommodate
the \Convert relation to the algorithms, we recall that $R^{\BEL,\Diamond}$
denotes the set of belief rules that can be used for a conversion to modality
$\Diamond$. The antecedent of all such rules is not empty, and does
not contain any modal literal.

Furthermore, for each literal $l$, $l_{\blacksquare}$ is the set (initially
empty) such that $\pm\Box \in l_{\blacksquare}$ iff $D\vdash\pm\partial_{\Box}
l$. Given a modal defeasible theory $D$, a set of rules $R$, and a rule $r\in
R^{\Box}[l]$, we expand the superiority relation $>$ by incorporating the
\Conflict relation into it:
\[
> = > \cup\,\set{(r,s) | r \in R^{\Box}[l], s \in R^{\blacksquare}[\non l], \Aconf{\Box}{\blacksquare}}.
\]
We also define:
\begin{enumerate}
\item $r_{sup}=\set{s \in R :
(s,r)\in >}$ and $r_{inf}=\set{s \in R : (r,s)\in >}$ for any $r\in R$; 
\item
$HB_{D}$ as the set of literals such that the literal or its complement
appears in $D$, i.e., such that it is a sub-formula of a modal
literal occurring in $D$; 
\item  the modal Herbrand Base of $D$ as
$HB=\set{\Box l |\ \Box \in\MOD, l\in HB_{D}}$. 
\end{enumerate}
Accordingly, the extension of a defeasible theory is defined as follows.
\begin{definition}[Defeasible extension]\label{def:extension}
Given a defeasible theory $D$, the \emph{defeasible extension} of $D$ is
defined as
\[
  E(D) = (+\partial_{\Box}, -\partial_{\Box}),
\]
where $\pm\partial_{\Box} = \set{l \in HB_{D}: D \vdash
\pm\partial_{\Box} l}$ with $\Box \in \MOD$. 
Two defeasible theories $D$ and $D'$ are \emph{equivalent} whenever they have the same extensions, i.e., $E(D) = E(D')$.
\end{definition}

We introduce two operations that modify the consequent of
rules used by the algorithms.

\begin{definition}[Truncation and removal]\label{def:trunctaion-removal}
  Let $c_1=\opseq[\odot]{a}{1}{i-1}$ and $c_2=\opseq[\odot]{a}{i+1}{n}$
  be two (possibly empty) $\odot$-expressions such that $a_{i}$ does not occur
  in neither of them, and $c=c_{1}\odot a_{i}\odot c_{2}$ is an $\odot$-expression.
  Let $r$ be a rule with form $A(r)\To_{\Diamond} c$. We define
  the \emph{truncation} of the consequent $c$ at $a_{i}$ as:
  \[
    A(r)\To_{\Diamond} c!a_{i} = A(r)\To_{\Diamond} c_{1}\odot a_{i},
  \]
and the \emph{removal} of $a_{i}$ from the consequent $c$ as:
\begin{displaymath}
	A(r)\To_{\Diamond} c\ominus a_{i} = A(r) \To_{\Diamond} c_{1} \odot c_{2}.
\end{displaymath}
\end{definition}

Notice that removal may lead to rules with empty consequent which
strictly would not be rules according to the definition of the language.
Nevertheless, we accept such expressions within the description of the algorithms but then such rules will not be in any $R[q,i]$ for any $q$ and
$i$. In such cases, the operation \emph{de facto} removes the rules.

Given $\Box\in \MOD$, the sets $+\partial_{\Box}$ and $-\partial_{\Box}$
denote, respectively, the global sets of positive and negative defeasible
conclusions (i.e., the set of literals for which condition $+\partial_{\Box}$
or $-\partial_{\Box}$ holds), while $\partial_{\Box}^{+}$ and
$\partial_{\Box}^-$ are the corresponding temporary sets, that is the set
computed at each iteration of the main algorithm. Moreover, to simplify the
computation, we do not operate on outcome rules: for each rule $r \in R^{\OUT}$
we create instead a new rule for desire, goal, intention, and social
intention (respectively, $r^{\DES}$, $r^{\GOAL}$, $r^{\INT}$, and
$r^{\INTS}$). Accordingly, for the sake of simplicity, in the present section
we shall use expressions like ``the intention rule'' as a shortcut for ``the
clone of the outcome rule used to derive intentions''.

\subsection{Algorithms} 
\label{subsec:algorithms}

The idea of all the algorithms is to use the operations of truncation
and elimination to obtain, step after step, a simpler
but equivalent theory. In fact, proving a literal does not give local
information regarding the element itself only, but rather reveals which rules
should be discarded, or reduced, in their head or body. Let us assume that, at a given step, the algorithm proves literal $l$. At the next step,
\begin{enumerate}
	\item the applicability of any rule $r$ with $l\in A(r)$
does not depend on $l$ any longer. Hence, we can safely remove $l$ from $A(r)$.
	\item Any rule $s$ with $\widetilde{l} \cap A(s)\neq\emptyset$ is
discarded. Consequently, any superiority tuple involving $s$ is now
useless and can be removed from the superiority relation.
	\item We can shorten chains by exploiting conditions of
Definitions~\ref{def:Applicability} and \ref{def:Discardability}. For instance, if $l=\OBL m$, we can
truncate chains for obligation rules at $\non m$ and eliminate it as
well.
\end{enumerate}

\begin{algorithm}[htb]
\fontsize{8}{9.5}\selectfont
\caption{\textsc{DefeasibleExtension}}\label{alg:defeasible}
\begin{algorithmic}[1]

	
\State $+\partial_{\blacksquare}, \partial^{+}_{\blacksquare}\gets \emptyset;\  -\partial_{\blacksquare}, \partial^{-}_{\blacksquare}\gets \emptyset$\label{CDriga1}


\State $R \gets R \cup \set{r^{\Box}: A(r)\To_{\Box} C(r) | r \in R^{\OUT}}$, with $\Box \in \set{\DES, \GOAL, \INT, \INTS}$\label{CDrigaOUTCopy}
\State $R \gets R \setminus R^{\OUT}$\label{CDrigaR-}

\State $R^{\BEL,\Diamond}\gets \set{r^{\Diamond}: A(r) \hookrightarrow C(r) | r\in R^{\BEL}, 
     A(r)\neq \emptyset, A(r)\subseteq \LIT}$\label{CDrigaRconv}
\State $> \gets > \cup \set{(r^{\Diamond},s^{\Diamond})| r^{\Diamond}, s^{\Diamond}\in R^{\BEL,\Diamond}, r>s} \cup \set{(r, s) | r \in R^{\blacksquare} \cup R^{\BEL, \blacksquare}, s\in R^{\Diamond} \cup R^{\BEL, \Diamond}, \Conflict(\blacksquare,\Diamond) }$\label{CDrigaSup1}

\For{$l \in F$}\label{CDrigaFor2}
	\State \Ifline{$l = \Box m$}{\Call{Proved}{$m,\,\Box$}}\label{CDrigaIfMM}
	\State \Ifline{$l = \neg \Box m \wedge \Box\neq\DES$}{\Call{Refuted}{$m,\,\Box$}}\label{CDrigaIfDisc}	
\EndFor\label{CDrigaEndFor2}

\State $+\partial_{\blacksquare}\gets +\partial_{\blacksquare}\cup\partial^{+}_{\blacksquare};\  -\partial_{\blacksquare}\gets -\partial_{\blacksquare}\cup\partial^{-}_{\blacksquare}$\label{CDrigaUp}
\State $R_{\infd} \gets \emptyset$\label{CDrigaPulisci}

\Repeat\label{CDrigaRepeat}
  \State $\partial^{+}_{\blacksquare} \gets \emptyset;\ \partial^{-}_{\blacksquare} \gets \emptyset$\label{CDrigainit}
  \For{$\Box l\in HB$}\label{CDrigaRepeatFor1}
    \State \Ifline{$R^{\Box}[l]\cup R^{\BEL,\Box}[l]=\emptyset$}{\Call{Refuted}{$l,\,\Box$}}\label{CDrigaIfDisc2}
  \EndFor\label{CDrigaEndRepeatFor1}
  \For{$r\in R^{\Box}\cup R^{\BEL,\Box}}$\label{CDrigaRepeatFor2}
	\If{$A(r)=\emptyset$}\label{CDrigaMainRepeatIf}
	 	\State $r_{inf} \gets \set{r \in R: (r, s) \in >, s\in R}$; $r_{sup} \gets \set{s \in R : (s,r)\in >}$\label{CDrigarInfSup}
    	\State $R_{\infd} \gets R_{\infd} \cup r_{inf}$\label{CDrigaRinf}
    	\State Let $l$ be the first literal of $C(r)$ in $HB$\label{CDrigaLet}
		\If{$r_{sup} = \emptyset$}\label{CDrigaIfIf}
			\If{$\Box = \DES$}\label{CDrigaIfDES}
				\State \Call{Proved}{$m,\,\DES$}\label{CDrigaProvedDES}
			\Else	
			\State \Call{Refuted}{$\non l,\,\Box$}\label{CDrigaRepeatDisc1}
			\State \Call{Refuted}{$\non l,\,\Diamond$} for $\Diamond$ s.t.
          $\Aconf{\Box}{\Diamond}$\label{CDrigaRepeatDisc2}
    		\If{$R^{\Box}[\non l] \cup R^{\BEL,\Box}[\non l] \cup R^{\blacksquare}[\non l]\setminus
    		          R_{\infd}\subseteq r_{inf}$,  for $\blacksquare$ s.t.
    		          $\Aconf{\blacksquare}{\Box}$}\label{CDrigaIfIfIf}
				\State \Call{Proved}{$m,\,\Box$}\label{CDrigaModRepeat}
      		\EndIf\label{CDrigaEndIfIfIf}
			\EndIf\label{CDrigaEndIfDES}
    	\EndIf\label{CDrigaEndIfIf}
  	\EndIf\label{CDrigaEndMainRepeatIf}
  \EndFor\label{CDrigaEndRepeatFor2}

\State $\partial^{+}_{\blacksquare}\gets \partial^{+}_{\blacksquare}\setminus +\partial_{\blacksquare};\ \partial^{-}_{\blacksquare}\gets \partial^{-}_{\blacksquare}\setminus -\partial_{\blacksquare}$\label{CDrigaFinalUp1}
  \State $+\partial_{\blacksquare}\gets +\partial_{\blacksquare}\cup\partial^{+}_{\blacksquare};\ -\partial_{\blacksquare}\gets -\partial_{\blacksquare}\cup\partial^{-}_{\blacksquare}$\label{CDrigaFinalUp2}
\Until{$\partial^{+}_{\blacksquare}=\emptyset$ and $\partial^{-}_{\blacksquare}=\emptyset$}\label{CDrigaUntil}

\State \Return $(+\partial_{\blacksquare}, -\partial_{\blacksquare})$\label{CDrigaReturn}
\end{algorithmic}
\end{algorithm}

Algorithm~\ref{alg:defeasible}~\textsc{DefeasibleExtension} is the core
algorithm to compute the extension of a defeasible theory. The first part of
the algorithm (lines \ref{CDriga1}--\ref{CDrigaSup1}) sets up the data
structure needed for the computation. Lines
\ref{CDrigaFor2}--\ref{CDrigaEndFor2} are to handle facts as immediately
provable literals.

The main idea of the algorithm is to check whether there are rules with empty
body: such rules are clearly applicable and they can produce conclusions with
the right mode. However, before asserting that the first element for the
appropriate modality of the conclusion is provable, we need to check whether
there are rules for the complement with the appropriate mode; if so, such
rules must be weaker than the applicable rules. The information about which
rules are weaker than the applicable ones is stored in the support set
$R_{\mathit{infd}}$. When a literal is evaluated to be provable, the
algorithm calls procedure \textsc{Proved}; when a literal is rejected,
procedure \textsc{Refuted} is invoked. These two procedures apply
transformations to reduce the complexity of the theory.

A step-by-step description of the algorithm would be redundant once the
concepts expressed before are understood. Accordingly, in the rest 
of the section we provide in depth descriptions of the key passage. 

For every outcome rule, the algorithm makes a copy of the same rule for each
mode corresponding to a goal-like attitude (line \ref{CDrigaOUTCopy}). At
line \ref{CDrigaRconv}, the algorithm creates a support set to handle
conversions from a belief rule through a different mode. Consequently, the
new $\Diamond$ rules have to inherit the superiority relation (if any) from
the belief rules they derive from (line \ref{CDrigaSup1}). Notice that we
also augment the superiority relation by incorporating the rules involved in
the \Conflict relation. Given that facts are immediately proved literals,
\textsc{Proved} is invoked for positively proved modal literals (those
proved with $+\partial_{\Box}$), and \textsc{Refuted} for rejected literals
(i.e., those proved with $-\partial_{\Box}$). The aim of the \textbf{for}
loop at lines \ref{CDrigaRepeatFor1}--\ref{CDrigaEndRepeatFor1} is to discard
any modal literal in $HB$ for which there are no rules that can prove it
(either directly or through conversion).

We now iterate on every rule that can fire (i.e., on rules with empty body, loop
\textbf{for} at lines \ref{CDrigaRepeatFor2}--\ref{CDrigaEndRepeatFor2} and
\textbf{if} condition at line \ref{CDrigaMainRepeatIf}) and we collect the
weaker rules in the set $R_{\infd}$ (line \ref{CDrigaRinf}). Since a
consequent can be an $\odot$-expression, the literal we are interested in is
the first element of the $\odot$-expression (line \ref{CDrigaLet}). If no
rule stronger than the current one exists, then the complementary conclusion
is refuted by condition (2.3) of Definition~\ref{def:proofCond-X} (line
\ref{CDrigaRepeatDisc1}). An additional consequence is that literal $l$ is
also refutable in $D$ for any modality conflicting with $\Box$ (line
\ref{CDrigaRepeatDisc2}). Notice that this reasoning does not hold for
desires: since the logic allows to have $\DES l$ and $\DES \non l$ at the
same time, when $\Box = \DES$ and the guard at line \ref{CDrigaIfIf} is
satisfied, the algorithm invokes procedure~\ref{alg:proved}
\textsc{Proved} (line \ref{CDrigaProvedDES}) due to condition (2.3) of
Definition \ref{def:proofCond+DES}.

The next step is to check whether there are rules for the complement
literal of the same modality, or of a conflicting modality. The rules for the
complement should not be defeated by applicable rules: such rules thus cannot be in $R_{\infd}$. If all these rules are defeated by $r$ (line
\ref{CDrigaIfIfIf}), then conditions for deriving $+\partial_\Box$ are
satisfied, and Algorithm~\ref{alg:proved} \textsc{Proved} is invoked.


\begin{algorithm}[htb]
  \fontsize{8}{9.5}\selectfont
\caption{\textsc{Proved}}\label{alg:proved}
\begin{algorithmic}[1]
\Procedure{Proved}{$l \in \LIT,\,\Box \in \MOD$}\label{MMriga}

\State $\partial^+_{\Box}\gets \partial^+_{\Box} \cup \set{l};\ l_{\blacksquare} \gets l_{\blacksquare} \cup \set{+\Box}$\label{MMrigaUpLblack}
\State $HB \gets HB \setminus \set{\Box l}$\label{MMrigaUpHB}

\State \Ifline{$\Box \neq \DES$}{\Call{Refuted}{$\non l,\,\Box$}}\label{MMrigaIfD}
\State \Ifline{$\Box = \BEL$}{\Call{Refuted}{$\non l,\,\INT$}}\label{MMrigaIfBELpINTnon-p}
\State \Ifline{$\Box \in\set{ \BEL, \OBL}$}{\Call{Refuted}{$\non l,\,\INTS$}}\label{MMrigaIfBEL-OBLpINTSnon-p}

\State $R \gets \set{r: A(r) \setminus \set{\Box l, \neg \Box \non l}\hookrightarrow C(r) |\ r\in R,\ A(r) \cap \widetilde{\Box l}  = \emptyset}$\label{MMrigaR}

\State $R^{\BEL,\Box} \gets \set{r: A(r)\setminus \set{l}\hookrightarrow C(r)| r\in R^{\BEL,\Box},\ \non l \notin A(r)}$\label{MMrigaRConv}

\State $> \gets > \setminus \set{(r,s), (s,r) \in > |\  A(r) \cap \widetilde{\Box l} \not = \emptyset}$\label{MMrigaSup}

\Switch{$\Box$}\label{MMrigaSwitch}

	\Case{$\BEL$}\label{MMrigaCaseB}
		\State $R^{X}\gets \set{ A(r) \To_{X} C(r)! l |\ r\in R^{X}[l,n]}$ with $X\in \set{\OBL,\INT}$\label{MMrigaBR1}
		\State \Ifline{$+\OBL\in \non l_{\blacksquare}$}{$R^{\OBL}\gets \set{ A(r) \To_{\OBL} C(r) \ominus \non l |\ r\in R^{\OBL}[\non l,n]}$}\label{MMrigaBIf+OR}
		\State \Ifline{$-\OBL\in \non l_{\blacksquare}$}{$R^{\INTS}\gets \set{ A(r) \To_{\INTS} C(r) ! l |\ r\in R^{\INTS}[l,n]}$}\label{MMrigaBIf-OR}

	\Case{$\OBL$}\label{MMrigaCaseO}
		\State $R^{\OBL}\gets \set{ A(r) \To_{\OBL} C(r)! \non l \ominus \non l |\ r\in R^{\OBL}[\non l,n]}$\label{MMrigaOR1}
		\State \Ifline{$-\BEL\in l_{\blacksquare}$}{$R^{\OBL}\gets \set{ A(r) \To_{\OBL} C(r) \ominus l |\ r\in R^{\OBL}[l,n]}$}\label{MMrigaOIfR1}
		\State \Ifline{$-\BEL\in \non l_{\blacksquare}$}{$R^{\INTS}\gets \set{ A(r) \To_{\INTS} C(r) ! l |\ r\in R^{\INTS}[l,n]}$}\label{MMrigaOIfR2}
		
	\Case{$\DES$}\label{MMrigaCaseD}
		\If{$+\DES\in \non l_{\blacksquare}$}\label{MMrigaDIf}
			\State $R^{\GOAL}\gets \set{ A(r) \To_{\GOAL} C(r)! l \ominus l |\ r\in R^{\GOAL}[l,n]}$\label{MMrigaDIfR1}
			\State $R^{\GOAL}\gets \set{ A(r) \To_{\GOAL} C(r)! \non l \ominus\non l  |\ r\in R^{\GOAL}[\non l,n]}$\label{MMrigaDIfR2}
		\EndIf\label{MMrigaEndDIf}
		
	\Otherwise\label{MMrigaOtherwise}
		\State $R^{\Box}\gets \set{ A(r) \To_{\Box} C(r)! l |\ r\in R^{\Box}[l,n]}$\label{MMrigaOtherR1}
		\State $R^{\Box}\gets \set{ A(r) \To_{\Box} C(r) \ominus \non l |\ r\in R^{\Box}[\non l,n]}$\label{MMrigaOtherR2}		
	\EndOther\label{MMrigaEndOtherwise}
	
\EndProcedure\label{MMrigaEnd}
\end{algorithmic}
\end{algorithm}

\smallskip

\noindent Algorithm~\ref{alg:proved}~\textsc{Proved} is invoked when literal $l$ is
proved with modality $\Box$, the key to which
simplifications on rules can be done. The computation starts by updating the
relative positive extension set for modality $\Box$ and, symmetrically, the
local information on literal $l$ (line \ref{MMrigaUpLblack}); $l$ is then
removed from $HB$ at line \ref{MMrigaUpHB}. Parts 1.--3. of
Proposition~\ref{prop:+PartialTHEN-Partial} identifies the modalities literal
$\non l$ is refuted with, when $\Box l$ is proved (\textbf{if} conditions at
lines \ref{MMrigaIfD}--\ref{MMrigaIfBEL-OBLpINTSnon-p}). Lines \ref{MMrigaR}
to \ref{MMrigaSup} modify the superiority relation and the sets of rules $R$
and $R^{\BEL,\Box}$ accordingly to the intuitions given at the beginning of Section \ref{subsec:algorithms}.

Depending on the modality $\Box$ of $l$, we perform specific
operations on the chains (condition \textbf{switch} at lines
\ref{MMrigaSwitch}--\ref{MMrigaEndOtherwise}). A detailed description of each
\textbf{case} would be redundant without giving more information than the one
expressed by conditions of Definitions~\ref{def:Applicability} and
\ref{def:Discardability}. Therefore, we propose one significative example by
considering the scenario where $l$ has been proved as a belief (\textbf{case}
at lines \ref{MMrigaCaseB}--\ref{MMrigaBIf-OR}). First, conditions of
Definitions~\ref{def:Discardability} and \ref{def:proofCond-X} ensure that
$\non l$ may be neither an intention, nor a social intention.
Algorithm~\ref{alg:refuted} \textsc{Refuted} is thus invoked at lines
\ref{MMrigaIfBELpINTnon-p} and \ref{MMrigaIfBEL-OBLpINTSnon-p} which, in
turn, eliminates $\non l$ from every chain of intention and social intention
rules (line \ref{DrigaCaseOtherR} of Algorithm~\ref{alg:refuted}
\textsc{Refuted}). Second, chains of obligation (resp. intention) rules can
be truncated at $l$ since condition (2.1.2) (resp. condition (4.1.2)) of
Definition~\ref{def:Discardability} makes such rules discarded for all
elements following $l$ in the chain (line \ref{MMrigaBR1}). Third, if
$+\partial_{\OBL} \non l$ has been already proved, then we eliminate $\non l$
in chains of obligation rules since it represents a violated obligation
(\textbf{if} condition at lines \ref{MMrigaBIf+OR}). Fourth, if
$-\partial_{\OBL} \non l$ is the case, then each element after $l$ cannot be
proved as a social intention (\textbf{if} condition at line
\ref{MMrigaBIf-OR}). Consequently, we truncate chains of social intention
rules at $l$.

\begin{algorithm}
\fontsize{8}{9.5}\selectfont
\caption{\textsc{Refuted}}\label{alg:refuted}
\begin{algorithmic}[1]
\Procedure{Refuted}{$l\in \LIT,\,\Box \in \MOD$}\label{Driga}

\State $\partial^-_\Box\gets \partial^-_\Box\cup\set{l};\ l_{\blacksquare} \gets l_{\blacksquare} \cup \set{-\Box}$\label{DrigaUpL}
\State $HB \gets HB \setminus \set{\Box l}$\label{DrigaHB}
\State $R \gets \set{r: A(r)\setminus\set{\neg\Box l}\hookrightarrow C(r)|\ r\in R,\ \Box l \not\in A(r)}$\label{DrigaCaseUpR2}
\State $R^{\BEL,\Box} \gets R^{\BEL,\Box} \setminus \set{r\in R^{\BEL,\Box}: l\in A(r)}$\label{DrigaCaseUpConv}
\State $> \gets > \setminus \set{(r,s),(s,r)\in>| \Box l\in A(r)}$\label{DrigaCaseUpSup}

\Switch{$\Box$}\label{DrigaSwitch}

	\Case{$\BEL$}\label{DrigaCaseB}
		\State $R^{\INT} \gets \set{A(r) \To_{\INT} C(r)! \non l | r\in R^{\INT}[\non l,n]}$\label{DrigaCaseRI}
		\State \Ifline{$+\OBL\in l_{\blacksquare}$}{$R^{\OBL} \gets \set{A(r) \To_{\OBL} C(r) \ominus l| r\in R^{\OBL}[l,n]}$}\label{DrigaCaseBIf+OR}
		\State \Ifline{$-\OBL\in l_{\blacksquare}$}{$R^{\INTS} \gets \set{A(r) \To_{\INTS} C(r)! \non l | r\in R^{\INTS}[\non l,n]}$}\label{DrigaCaseBIf-OR}

	\Case{$\OBL$}\label{DrigaCaseO}
		\State $R^{\OBL} \gets \set{A(r) \To_{\OBL} C(r)! l \ominus l| r\in R^{\OBL}[l,n]}$\label{DrigaCaseOR}
		\State \Ifline{$-\BEL\in l_{\blacksquare}$}{$R^{\INTS} \gets \set{A(r) \To_{\INTS} C(r)! \non l | r\in R^{\INTS}[\non l,n]}$}\label{DrigaCaseOIfR}

	\Case{$\DES$}\label{DrigaCaseD}
		\State $R^{X} \gets \set{A(r) \To_{X} C(r) \ominus l | r\in R^{X}[l,n]}$ with $X\in \set{\DES,\GOAL}$\label{DrigaCaseDR}

	\Otherwise\label{DrigaOtherwise}
		\State $R^{\Box} \gets \set{A(r) \To_{\Box} C(r) \ominus l | r\in R^{\Box}[l,n]}$\label{DrigaCaseOtherR}
\EndOther\label{DrigaEndOtherwise}

\EndProcedure\label{DrigaEnd}
\end{algorithmic}
\end{algorithm}

\noindent Algorithm~\ref{alg:refuted}~\textsc{Refuted} performs all necessary
operations to refute literal $l$ with modality $\Box$. The initialisation
steps at lines \ref{DrigaUpL}--\ref{DrigaCaseUpSup} follow the same schema
exploited at lines \ref{MMrigaUpLblack}--\ref{MMrigaSup} of
Algorithm~\ref{alg:proved}~\textsc{Proved}. Again, the operations on chains
vary according to the current mode $\Box$ (\textbf{switch} at lines
\ref{DrigaSwitch}--\ref{DrigaEndOtherwise}). For instance, if $\Box = \BEL$
(\textbf{case} at lines \ref{DrigaCaseB}--\ref{DrigaCaseBIf-OR}), then
condition (4.1.2) for $\INT$ of Definition~\ref{def:Discardability} is
satisfied for any literal after $\non l$ in chains for intentions, and such
chains can be truncated at $\non l$. Furthermore, if the algorithm has
already proven $+\partial_{\OBL} l$, then the obligation of $l$ has been
violated. Thus, $l$ can be removed from all chains for obligations (line
\ref{DrigaCaseBIf+OR}). If instead $-\partial_{\OBL} l$ holds, then the
elements after $\non l$ in chains for social intentions satisfy condition
(4.1.2) of Definition~\ref{def:Discardability}, and the algorithm removes
them (line \ref{DrigaCaseBIf-OR}).

\subsection{Computational Results} 
\label{subsec:computational_results}

We now present the computational properties of the algorithms
previously described. Since
Algorithms~\ref{alg:proved}~\textsc{Proved} and
\ref{alg:refuted}~\textsc{Refuted} are sub-routines of the main one, we shall exhibit the correctness and completeness results of these algorithms inside
theorems for Algorithm~\ref{alg:defeasible}~\textsc{DefeasibleExtension}. In order to properly demonstrate results on the complexity of the algorithms, we
need the following definition.

\begin{definition}[Size of a theory]\label{def:TheorySize}
	Given a finite defeasible theory $D$, the \emph{size} $S$ of $D$ is
the number of occurrences of literals plus the number of the rules in $D$.
\end{definition}
For instance, the size of the theory
\begin{align*}
  F = \{a,\ \OBL b\}\quad
  R = \{r_1: a \To_{\OBL} c,\ r_{2}: a, \OBL b \To d\} 
\end{align*}
is equal to nine, since literal $a$ occurs three times.

We also report some key ideas and intuitions behind our implementation.
\begin{enumerate}
\item Each operation on global sets $\pm\partial_{\blacksquare}$ and
$\partial^{\pm}_{\blacksquare}$ requires linear time, as we manipulate finite sets of literals;

\item For each literal $\Box l \in HB$, we implement a hash table with
pointers to the rules where the literal occurs in; thus, retrieving the set
of rules containing a given literal requires constant time;

\item The superiority relation can also be implemented by means of hash
tables; once again, the information required to modify a given tuple can be
accessed in constant time.
\end{enumerate}
In Section~\ref{sec:algorithmic_results} we discussed the main intuitions 
behind the operations performed by the algorithms, and we explained
that each operation corresponds to a reduction that transforms a theory in
an equivalent smaller theory. Appendix~\ref{sec:CorrDefExt} exhibits
a series of lemmas stating the conditions under which an operation that
removes either rules or literals form either the head or rules or from the body 
results in an equivalent smaller theory. The Lemmas proved by induction on 
the length of derivations.  

\begin{restatable}{theorem}{Termination}\label{lem:ComplexityMMDisc}
	Given a finite defeasible theory $D$ with size $S$, Algorithms~\ref{alg:proved}~\textsc{Proved} and
	\ref{alg:refuted}~\textsc{Refuted} terminate and their computational complexity is $O(S)$.
\end{restatable}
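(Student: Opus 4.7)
The plan is to prove termination first and then bound each procedure's running time using the data-structure assumptions stated just before the theorem. Since \textsc{Refuted} does not call either procedure, whereas \textsc{Proved} calls only \textsc{Refuted} (and on fixed arguments), I would first establish the bound for \textsc{Refuted} and then use it as a black box when analysing \textsc{Proved}; there is no mutual recursion to unwind.

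Termination is essentially by inspection. Both procedures consist of a straight-line body (global-set updates, a hash-table lookup in $HB$, a few modifications of $R$, $R^{\BEL,\Box}$, and $>$) followed by a \textbf{switch} with finitely many cases. All set comprehensions range over the finite sets $R$, $R^{\BEL,\Box}$, or $>$. \textsc{Refuted} makes no recursive calls at all, while \textsc{Proved} performs at most three guarded calls to \textsc{Refuted} on lines~\ref{MMrigaIfD}--\ref{MMrigaIfBEL-OBLpINTSnon-p}. Hence both procedures halt in a bounded number of steps on any finite input.

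For the $O(S)$ bound on \textsc{Refuted}, I would argue line by line. The updates at lines~\ref{DrigaUpL}--\ref{DrigaHB} cost $O(S)$ because operations on the global extension sets are linear in the size of their arguments, and the $HB$ update is $O(1)$ via hashing. The rebuilds of $R$ and $R^{\BEL,\Box}$ at lines~\ref{DrigaCaseUpR2}--\ref{DrigaCaseUpConv} only touch rules that mention $\Box l$ or $l$ in their body; by the literal-to-rule hash index, this cost is proportional to the number of occurrences of these literals in bodies, hence at most $O(S)$. The superiority-relation update at line~\ref{DrigaCaseUpSup} is similarly bounded by the number of tuples involving such rules, again $O(S)$. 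Each branch of the \textbf{switch} performs a truncation $!\,l$ or a removal $\ominus\,l$ on rules whose consequent mentions $l$ or $\non l$; with the same hashing and a doubly-linked representation of $\odot$-expressions, each chain modification is $O(1)$ per affected rule, and the number of affected rules is bounded by the number of head occurrences of $l$ or $\non l$. Summing, \textsc{Refuted} runs in $O(S)$. The argument for \textsc{Proved} is entirely analogous: lines~\ref{MMrigaUpLblack}--\ref{MMrigaSup} are $O(S)$ by the same reasoning; the at-most-three calls to \textsc{Refuted} contribute $O(S)$ each; and the \textbf{switch} cases perform the same kind of head truncations and removals, bounded by the number of occurrences of $l$ or $\non l$ in consequents.

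The main subtlety, which I would make explicit in the proof, concerns the head operations $C(r)!\,l$, $C(r)\ominus\,l$, and the compound $C(r)!\,\non l \ominus \non l$ inside the \textbf{switch} blocks. If consequents were stored as plain lists, locating the position of a literal inside an $\odot$-expression would in general be linear in the length of that chain, and the claim that all chain work in a single invocation is $O(S)$ would only be recoverable by amortising. The data-structure assumptions given in the text implicitly avoid this by maintaining, for each literal $\Box l$, pointers to every position at which it appears in every rule consequent; with such an index, each single truncation or removal is $O(1)$ per occurrence, and the total chain work across all rules affected by one invocation is bounded by the number of those occurrences, hence by $S$. Making this representation explicit, and checking that the inductive lemmas of Appendix~\ref{sec:CorrDefExt} guarantee that no literal is touched twice within a single call, completes the $O(S)$ bound for both procedures.
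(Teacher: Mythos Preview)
Your argument is correct and follows the same idea as the paper---termination because the procedures manipulate only a finite subset of $R$, and $O(S)$ because $|R|\in O(S)$ with constant-time rule access via hashing---though the paper compresses this into two sentences without your line-by-line decomposition or your explicit treatment of the $!$ and $\ominus$ head operations. The only superfluous step is your closing appeal to the lemmas in Appendix~\ref{sec:CorrDefExt}: those establish semantic equivalence of the transformed theories, not complexity bounds, and play no role here.
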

\begin{filecontents}{proofTermination.tex}
\begin{proof}
	Every time Algorithms~\ref{alg:proved}~\textsc{Proved} or
	\ref{alg:refuted}~\textsc{Refuted} are invoked, they both modify a subset of the set of rules $R$, which is finite by hypothesis. Consequently, we have their termination. Moreover, since $|R| \in O(S)$ and each rule can be accessed in constant time, we obtain that their computational complexity is $O(S)$. 
\end{proof}
\end{filecontents}

\begin{restatable}{theorem}{Complexity}\label{thm:ComplexityCD}
	Given a finite defeasible theory $D$ with size $S$,
Algorithm~\ref{alg:defeasible}~\textsc{DefeasibleExtension} terminates and its
computational complexity is $O(S)$.
\end{restatable}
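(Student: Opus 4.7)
The plan is to prove termination, correctness, and $O(S)$ complexity separately, leveraging Theorem~\ref{lem:ComplexityMMDisc} and the equivalence-preserving transformation lemmas announced for Appendix~\ref{sec:CorrDefExt}. For termination, I would observe that the repeat loop at lines~\ref{CDrigaRepeat}--\ref{CDrigaUntil} exits only when an iteration produces no new tagged literals. Any non-exiting iteration adds at least one pair $(l,\Box)$ to $\partial^{\pm}_{\blacksquare}$, after which \textsc{Proved} or \textsc{Refuted} removes $\Box l$ from $HB$ (lines~\ref{MMrigaUpHB} and~\ref{DrigaHB}). Since Proposition~\ref{prop:CoherenceConsistence} forbids deriving the same tagged literal twice and $HB$ is finite, the repeat loop can run for at most $|HB|$ iterations.

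For correctness, I would argue by induction on the number of iterations of the main loop, with invariant: the current theory together with the conclusions already placed in $\pm\partial_{\blacksquare}$ is equivalent, in the sense of Definition~\ref{def:extension}, to the original $D$. The base case covers facts (lines~\ref{CDrigaFor2}--\ref{CDrigaEndFor2}). For the inductive step I would invoke the Appendix~\ref{sec:CorrDefExt} lemmas, which certify that every truncation, consequent removal, body simplification, and superiority-pair pruning performed inside \textsc{Proved}/\textsc{Refuted} preserves the extension; the guards of these procedures are direct transcriptions of Definitions~\ref{def:proofCond+DES}--\ref{def:proofCond-X}. In particular, the test at line~\ref{CDrigaIfIfIf} realises clause (2.3) of $+\partial_{\Box}$ once $R_{\infd}$ has collected all rules known to be weaker than the currently applicable one, and the guard at line~\ref{CDrigaIfDES} captures the fact that desires may coexist with their complements.

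The hard part is the $O(S)$ bound: the naive argument multiplying the $O(|HB|)$ number of iterations by the $O(S)$ cost of Theorem~\ref{lem:ComplexityMMDisc} only gives $O(S^{2})$. Instead I would reason by amortisation, charging work to individual literal occurrences, chain elements, and superiority tuples. Initialisation (lines~\ref{CDriga1}--\ref{CDrigaSup1}) is clearly linear, since each rule, literal, and superiority pair is touched a constant number of times. Using the hash-based access described at the end of Subsection~\ref{subsec:computational_results}, every literal occurrence in a rule body is deleted at most once (lines~\ref{MMrigaR},~\ref{DrigaCaseUpR2}), every consequent element is truncated or removed at most once by Definition~\ref{def:trunctaion-removal}, and every superiority pair is dropped at most once (lines~\ref{MMrigaSup},~\ref{DrigaCaseUpSup}); each of these three budgets is bounded by $S$. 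The \textbf{for} loop at lines~\ref{CDrigaRepeatFor2}--\ref{CDrigaEndRepeatFor2} performs substantive work on a rule only when its body has just become empty, which can happen at most once per rule. The main obstacle is showing that the repeated scans of $HB$ and $R$ across iterations do not collectively exceed $O(S)$: this requires maintaining an active worklist of rules whose bodies have just been emptied and of literals whose status has just been updated, so that each iteration of the repeat loop inspects only elements newly affected by the previous one. With that bookkeeping the total cost collapses to the sum of the three amortised budgets above, yielding $O(S)$.
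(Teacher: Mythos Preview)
Your proposal is essentially correct and follows the same strategy as the paper for both termination (exhaustion of the finite $HB$) and complexity (amortisation over literal occurrences, consequent elements, and superiority tuples rather than the naive $O(S^{2})$ product). Two remarks are worth making.

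First, the correctness paragraph is misplaced: Theorem~\ref{thm:ComplexityCD} concerns only termination and the $O(S)$ bound; soundness and completeness are the content of the separate Theorem~\ref{thm:SoundCompl}, which the paper proves by the Appendix~\ref{sec:CorrDefExt} lemmas you cite. Including it here is harmless but unnecessary.

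Second, your complexity discussion is in fact more careful than the paper's. The paper argues informally that the work inside \textsc{Proved}/\textsc{Refuted} is additive because ``the operations in the inner procedure directly decrease, iteration after iteration, the number of the remaining repetitions of the outmost loop'', and it applies this amortisation to the \textbf{for} loops as well without spelling out how the repeated scans of $HB$ and $R$ avoid an $O(S^{2})$ cost. You explicitly isolate this as ``the main obstacle'' and supply the missing implementation detail---a worklist of newly emptied rule bodies and newly tagged literals---which is exactly what is needed to make the linear bound rigorous. The paper leaves this implicit in its appeal to the hash-table data structures; your version makes the argument watertight.
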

\begin{filecontents}{proofComplexity.tex}
\begin{proof}
	The most important part to analyse concerning termination of
Algorithm~\ref{alg:defeasible} \textsc{DefeasibleExtension} is the
\textbf{repeat/until} cycle at lines \ref{CDrigaRepeat}--\ref{CDrigaUntil}.
Once an instance of the cycle has been performed, we are in one of the
following, mutually exclusive situations:

\begin{enumerate}

\item No modification of the extension has occurred. In this case, line \ref{CDrigaUntil} ensures the termination of the algorithm;

\item The theory has been modified with respect to a literal in $HB$. Notice
that the algorithm takes care of removing the literal from $HB$ once the
suitable operations have been performed (specifically, at line
\ref{MMrigaUpHB} of Algorithm~\ref{alg:proved}~\textsc{Proved} and
\ref{alg:refuted}~\textsc{Refuted}). Since this set is finite, the
process described above eventually empties $HB$ and, at the next iteration of
the cycle, the extension of the theory cannot be modified. In this
case, the algorithm ends its execution as well.
\end{enumerate}
	
Moreover, Lemma~\ref{lem:ComplexityMMDisc} proved the termination of its
internal sub-routines.

In order to analyse complexity of the algorithm, it is of the utmost
importance to correctly comprehend Definition~\ref{def:TheorySize}. Remember
that the size of a theory is the number of \emph{all occurrences} of each
literal in every rule plus the number of the rules. The first term is
usually (much) bigger than the latter. Let us examine a theory with $x$
literals and whose size is $S$, and consider the scenario when an algorithm
$A$, looping over all $x$ literals of the theory, invokes an inner procedure $P$
which selectively deletes a literal given as input from all the rules of the
theory (no matter to what end). A rough computational complexity would be
$O(S^{2})$, given that, when one of the $x \in O(S)$ literal is selected, $P$
removes all its occurrences from every rule, again $O(S)$.

However, a more fined-grained analysis shows that the complexity of $A$ is
lower. The mistake being to consider the complexity of $P$ separately from the
complexity of the external loop, while instead they are strictly dependent.
Indeed, the overall number of operations made by the sum of all loop
iterations cannot outrun the number of occurrences of the literals, $O(S)$,
because the operations in the inner procedure directly decrease, iteration
after iteration, the number of the remaining repetitions of the outmost loop,
and the other way around. Therefore, the overall complexity is not bound by
$O(S)\cdot O(S) = O(S^{2})$, but by $O(S) + O(S) = O(S)$.

We can now contextualise the above reasoning to
Algorithm~\ref{alg:defeasible}~\textsc{DefeasibleExtension}, where $D$ is the
theory with size $S$. The initialisation steps (lines
\ref{CDriga1}--\ref{CDrigaSup1} and \ref{CDrigaUp}--\ref{CDrigaPulisci}) add
an $O(S)$ factor to the overall complexity. The main cycle at lines
\ref{CDrigaRepeat}--\ref{CDrigaUntil} is iterated over $HB$, whose cardinality
is in $O(S)$. The analysis of the preceding paragraph implies that invoking
Algorithm~\ref{alg:proved}~\textsc{Proved} at lines \ref{CDrigaIfMM} and
\ref{CDrigaModRepeat} as well as invoking
Algorithm~\ref{alg:refuted}~\textsc{Refuted} at lines \ref{CDrigaIfDisc},
\ref{CDrigaIfDisc2}, \ref{CDrigaRepeatDisc1} and \ref{CDrigaRepeatDisc2}
represent an additive factor $O(S)$ to the complexity of \textbf{repeat/until}
loop and \textbf{for} cycle at lines \ref{CDrigaFor2}--\ref{CDrigaEndFor2} as
well. Finally, all operations on the set of rules and the superiority relation
require constant time, given the implementation of data structures proposed.
Therefore, we can state that the complexity of the algorithm is $O(S)$.
\end{proof}
\end{filecontents}

\begin{restatable}{theorem}{Correctness}\label{thm:SoundCompl}
	Algorithm~\ref{alg:defeasible}~\textsc{DefeasibleExtension} is sound and
complete.
\end{restatable}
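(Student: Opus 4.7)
The plan is to leverage the transformation-equivalence lemmas sketched in Appendix~\ref{sec:CorrDefExt}, which state that every atomic rewrite performed by \textsc{Proved} and \textsc{Refuted} (removing a literal from the body of a rule, truncating an $\odot$-expression at a position, eliminating a violated element from an obligation chain, removing a dominated rule from the superiority relation, etc.) transforms the current theory into an equivalent one. Armed with these lemmas, soundness and completeness reduce to two manageable claims: (i) the local tests that trigger a call to \textsc{Proved}/\textsc{Refuted} match exactly a clause of the corresponding proof condition in Definitions~\ref{def:proofCond+DES}--\ref{def:proofCond-X}; and (ii) no derivable conclusion can escape the fixed-point loop of Algorithm~\ref{alg:defeasible}.

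For soundness I would induct on the number $k$ of invocations of \textsc{Proved}/\textsc{Refuted} along a run. The base case $k=0$ is immediate. In the inductive step, let $D_k$ denote the theory after $k$ rewrites; by the appendix lemmas $E(D_k)=E(D)$. A call \textsc{Proved}$(l,\Box)$ is triggered either because $\Box l\in F$ (clause (1) of $+\partial_\Box$ is witnessed directly in $D_k$ and hence, by equivalence, in $D$), or because line~\ref{CDrigaIfIfIf} has verified that some rule $r$ with empty body is applicable in $D_k$ for $\Box$ while every rule for $\non l$ of the relevant modes lies in $R_{\infd}$, i.e.\ is strictly weaker than $r$ in the augmented superiority relation computed at line~\ref{CDrigaSup1}. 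This is precisely clause (2) of Definition~\ref{def:proofCond+X} (respectively Definition~\ref{def:proofCond+DES} in the $\DES$ branch at line~\ref{CDrigaProvedDES}, where the attacker test collapses because no \Conflict\ involves $\DES$). A symmetric argument, via the strong-negation correspondence between Definitions~\ref{def:Applicability} and~\ref{def:Discardability}, handles \textsc{Refuted} and therefore $-\partial_\Box$. Equivalence of $D_k$ and $D$ then lifts the conclusion from $D_k$ back to $D$.

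For completeness I would argue contrapositively: if $D\vdash +\partial_\Box l$ (the $-\partial$ case is analogous), then $l$ must eventually be inserted into $\partial^+_\Box$. Fix a derivation $P$ of $+\partial_\Box l$ in $D$ and proceed by induction on the length of $P$. By the equivalence lemmas the same derivation witnesses $+\partial_\Box l$ in every intermediate theory $D_k$ maintained by the algorithm. Shorter subderivations correspond to conclusions already tagged in earlier iterations (by induction); each such tagging either empties a body, truncates a head, or removes a superiority pair, so that after finitely many iterations the rule witnessing $l$ in $P$ has empty body and the \textbf{if} guards on lines~\ref{CDrigaIfIf} and~\ref{CDrigaIfIfIf} are satisfied, forcing a call \textsc{Proved}$(l,\Box)$. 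Since $HB$ shrinks monotonically and the \textbf{repeat/until} loop exits only when an entire pass produces no change, every derivable literal is captured before termination.

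The main obstacle will be clause (2.3.2) of Definition~\ref{def:proofCond+X}, where a rule $s$ for $\non q$ may be defeated either by a stronger rule $t$ of an overlapping mode (a conversion case, 2.3.2.1) or by a rule of a conflicting mode (2.3.2.2). The algorithm folds both situations into the enlarged superiority relation at line~\ref{CDrigaSup1} together with the cloned conversion rules $R^{\BEL,\Diamond}$ generated at line~\ref{CDrigaRconv}. The non-trivial part of the proof is therefore the bookkeeping lemma showing that, after this preprocessing, the single comparison $R^\Box[\non l]\cup R^{\BEL,\Box}[\non l]\cup R^\blacksquare[\non l]\setminus R_{\infd}\subseteq r_{inf}$ on line~\ref{CDrigaIfIfIf} is equisatisfiable with the nested disjunction (2.3.1)/(2.3.2.1)/(2.3.2.2) in the proof theory. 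Once this correspondence is established for every mode (with special care for $\DES$, which is insensitive to \Conflict, and for $\INTS$, which is the only mode subject to two conflict directions), the soundness and completeness arguments sketched above go through uniformly.
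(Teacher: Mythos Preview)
Your proposal is correct and follows essentially the same approach as the paper: both rely on the transformation-equivalence lemmas of Appendix~\ref{sec:CorrDefExt} (Lemmas~\ref{lem:addRandSup}--\ref{lem:-PartialGIIS}) for the rewrite steps, together with the triggering lemmas (Lemmas~\ref{lem:Prove}--\ref{lem:Refute}) that certify the local tests at lines~\ref{CDrigaIfDisc2}, \ref{CDrigaProvedDES}, \ref{CDrigaRepeatDisc1}--\ref{CDrigaRepeatDisc2} and~\ref{CDrigaModRepeat} as instances of the proof conditions. The paper's own proof is little more than a line-by-line table pairing each algorithm step with the relevant lemma, and its completeness argument is the terse observation that ``equivalence is a bijection''; your explicit induction on derivation length for completeness is a more articulated version of the same idea and is arguably more convincing, since it actually explains \emph{why} no derivable literal can survive the fixed point rather than leaving this implicit in the equivalence of intermediate theories. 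The ``bookkeeping lemma'' you flag as the main obstacle is precisely Lemma~\ref{lem:Prove} in the paper, so you have correctly located the only non-routine step.
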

\begin{filecontents}{proofCorrectness.tex}
\begin{proof}
As already argued at the beginning of the section, the aim of
Algorithm~\ref{alg:defeasible} \mbox{\textsc{DefeasibleExtension}} is to
compute the defeasible extension of a given defeasible theory $D$ through
successive transformations on the set of facts and rules, and on the
superiority relation: at each step, they compute a simpler theory while
retaining the same extension. Again, we remark that the word `simpler' is used
to denote a theory with fewer elements in it. Since we have already proved the
termination of the algorithm, it eventually comes to a fixed-point theory where
no more operations can be made.

In order to demonstrate the soundness of Algorithm~\ref{alg:defeasible}
\mbox{\textsc{DefeasibleExtension}}, we show in the list below that all the
operations performed by the algorithm are justified by Proposition \ref{prop:+PartialTHEN-Partial} and described in
Lemmas~\ref{lem:addRandSup}--\ref{lem:Refute}, where
we prove the soundness of each operation involved.

\begin{enumerate}
	\item Algorithm~\ref{alg:defeasible} \textsc{DefeasibleExtension}:
	\begin{itemize}
		\item Lines \ref{CDrigaOUTCopy}--\ref{CDrigaR-} and \ref{CDrigaSup1}: Lemma \ref{lem:addRandSup};
		\item Line \ref{CDrigaIfMM}: item \ref{item:proved}. below;
		\item Line \ref{CDrigaIfDisc}: item \ref{item:refuted}. below;
		\item Line \ref{CDrigaIfDisc2}: Lemma~\ref{lem:NORules} and item \ref{item:refuted}. below;
		\item Line \ref{CDrigaProvedDES}: Lemma~\ref{lem:ProveDES} and item \ref{item:proved}. below;
		\item Lines \ref{CDrigaRepeatDisc1}--\ref{CDrigaRepeatDisc2}: Lemma~\ref{lem:Refute} and item \ref{item:refuted}. below;
		\item Line \ref{CDrigaModRepeat}: Lemma~\ref{lem:Prove} and item \ref{item:proved}. below;
	\end{itemize}
	\item \label{item:proved} Algorithm~\ref{alg:proved} \textsc{Proved}:
	\begin{itemize}
		\item Line \ref{MMrigaIfD}: Lemma~\ref{lem:Refute} and item \ref{item:refuted}. below;
		\item Line \ref{MMrigaIfBELpINTnon-p}: Part 2. of Proposition \ref{prop:+PartialTHEN-Partial} and item \ref{item:refuted}. below;
		\item Line \ref{MMrigaIfBEL-OBLpINTSnon-p}: Part 3. of Proposition \ref{prop:+PartialTHEN-Partial} and item \ref{item:refuted}. below;
		\item Lines \ref{MMrigaR}--\ref{MMrigaSup}: Lemma~\ref{lem:+PartialBox};
		\item \textsc{Case $\BEL$} at lines \ref{MMrigaCaseB}--\ref{MMrigaBIf-OR}: Lemma~\ref{lem:+PartialBEL};
		\item \textsc{Case $\OBL$} at lines \ref{MMrigaCaseO}--\ref{MMrigaOIfR2}: Lemma~\ref{lem:+PartialObl};
		\item \textsc{Case $\DES$} at lines \ref{MMrigaCaseD}--\ref{MMrigaEndDIf}: Lemma~\ref{lem:+PartialDes};
		\item \textsc{Otherwise} at lines \ref{MMrigaOtherwise}--\ref{MMrigaOtherR2}: Lemma~\ref{lem:+PartialGIIS};				
	\end{itemize}
	\item \label{item:refuted} Algorithm~\ref{alg:refuted} \textsc{Refuted}:
	\begin{itemize}
		\item Lines \ref{DrigaCaseUpR2}--\ref{DrigaCaseUpSup}: Lemma~\ref{lem:-PartialBox};
		\item \textsc{Case $\BEL$} at lines \ref{DrigaCaseB}--\ref{DrigaCaseBIf-OR}: Lemma~\ref{lem:-PartialBEL};
		\item \textsc{Case $\OBL$} at lines \ref{DrigaCaseO}--\ref{DrigaCaseOIfR}: Lemma~\ref{lem:-PartialObl};
		\item \textsc{Case $\DES$} at lines \ref{DrigaCaseD}--\ref{DrigaCaseDR}: Lemma~\ref{lem:-PartialDes};
		\item \textsc{Otherwise} at lines \ref{DrigaOtherwise}--\ref{DrigaCaseOtherR}: Lemma~\ref{lem:-PartialGIIS};
	\end{itemize}
\end{enumerate}

\noindent The result of these lemmas is that whether a literal is defeasibly
proved or not in the initial theory, so it will be in
the final theory. This proves the soundness of the algorithm.

Moreover, since (i) all lemmas show the equivalence of the two theories, and 
(ii) the equivalence relation is a bijection, this also demonstrates the
completeness of Algorithm~\ref{alg:defeasible} \textsc{DefeasibleExtension}.
\end{proof}
\end{filecontents}

\section{Summary and Related Work} 
\label{sec:related_work}

This article provided a new proposal for extending DL to model cognitive
agents interacting with obligations. We distinguished concepts of desire,
goal, intention and social intention, but we started from the shared notion
of outcome. Therefore, such concepts spring from a single notion that becomes distinct based
on the particular relationship with beliefs and norms. This reflects a more
natural notion of mental attitude and can express the well-known notion of
Plan B. When we consider the single chain itself, this justifies that from a
single concept of outcome we can derive all the other mental attitudes.
Otherwise we would need as many additional rules as the elements in the chain;
this, in turn, would require the introduction of additional notions to
establish the relationships with beliefs and norms. This adds to our
framework an economy of concepts.

Moreover, since the preferences allow us to determine what preferred outcomes
are adopted by an agent (in a specific scenario) when previous elements in
sequences are no longer feasible, our logic provides an abstract semantics
for several types of goal and intention reconsideration.

A drawback of our approach perhaps lies in the difficulty of translating a
natural language description into a logic formalisation. This is a
notoriously hard task. Even if the obstacle seems very difficult, the payoff
is worthwhile. The first reason is due to the efficiency of the computation
of the positive extension once the formalisation has been done (polynomial
time against the majority of the current frameworks in the literature which
typically work in exponential time). The second reason is that the use of
rules (such as business rules) to describe complex systems is extremely common
\citep{knolmayer2000modeling}. Future lines of research will then focus on
developing such methods, by giving tools which may help the (business) analyst
in writing such (business) rules from the declarative description.

The logic presented in this paper, as the vast majority of approaches to
model autonomous agents, is propositional. The
algorithms to compute the extension of theory relies on the theory being
finite, thus the first assumption for possible first-order extensions would
be to work on finite domains of individuals. Given this assumption, the
algorithms can be still be used once a theory has been grounded. This means
that the size of theory is in function of the size of the grounding. We 
expect that the size of the grounding depends on the cardinality of the 
domain of individuals and the length of the vector obtained by the join 
of the predicates occurring in the theory.

Our contribution has strong connections with those by
\cite{lpar05,jaamas:08,DBLP:journals/igpl/GovernatoriPRS09}, but it completely
rebuilds the logical treatment of agents' motivational attitudes by presenting
significant innovations in at least two respects.

First, while in \citep{lpar05,jaamas:08,DBLP:journals/igpl/GovernatoriPRS09}
the agent deliberation is simply the result of the derivation of mental
states from \emph{precisely} the corresponding rules of the logic---besides
conversions, intentions are derived using only intention rules, goals using
goal rules, etc.---here, the proof theory is much more aligned with the BDI
intuition, according to which intentions and goals are the results of the
manipulation of desires. The conceptual result of the current paper is
that this idea can be entirely encoded within a logical language and a proof
theory, by exploiting the different interaction patterns between the basic
mental states, as well as the derived ones. In this perspective, our
framework is significantly richer than the one in BOID
\citep{broersen2002goal}, which uses different rules to derive the
corresponding mental states and proposes simple criteria to solve conflicts
between rule types.

Second, the framework proposes a rich language expressing two orthogonal
concepts of preference among motivational attitudes. One is encoded within
$\odot$ sequences, which state (reparative) orders among
homogeneous mental states or motivations. 
The second type of preference is encoded via the superiority
relation between rules: the superiority can work locally between single rules
of the same or different types, or can work systematically by stating via $\Conflict(X,Y)$ that two different motivations $X$ and $Y$
collide, and $X$ always overrides $Y$. The interplay between these two
preference mechanisms can help us in isolating different and complex ways for
deriving mental states, but the resulting logical machinery is still
computationally tractable, as the algorithmic analysis proved.

Lastly, since the preferences allow us to determine what preferred outcomes
are adopted by an agent when previous elements in $\odot$-sequences are not
(or no longer) feasible, our logic in fact provides an abstract semantics for
several types of goal and intention reconsideration. Intention
reconsideration was expected to play a crucial role in the BDI paradigm
\citep{bratman1,DBLP:journals/ai/CohenL90} since intentions obey the law of
inertia and resist retraction or revision, but they can be reconsidered when
new relevant information comes in \citep{bratman1}. Despite that, the problem
of revising intentions in BDI frameworks has received little attention. A
very sophisticated exception is that of \cite{hoek}, where revisiting intentions
mainly depends on the dynamics of beliefs but the process is incorporated in
a very complex framework for reasoning about mental states. Recently,
\cite{Shapiro:2012} discussed how to revise the commitments to planned
activities because of mutually conflicting intentions, a contribution that interestingly
has connections with our work. How to employ our logic to give a semantics
for intention reconsideration is not the main goal of the paper and is left
to future work.


Our framework shares the motivation with that of
\cite{Winikoff02declarativeprocedural}, where the authors provide a logic to
describe both the declarative and procedural nature of goals. The nature of
the two approaches lead to conceptually different solutions. For instance,
they require goals, as in \citep{DBLP:conf/atal/HindriksBHM00}, ``not to be
entailed by beliefs, i.e., that they be unachieved'', while our beliefs can
be seen as ways to achieve goals. Other requirements such as persistence or
dropping a goal when reached cannot be taken into account.

\cite{DBLP:journals/logcom/ShapiroLL07} and
\cite{DBLP:conf/dagstuhl/ShapiroB07} deal with goal change. The authors
consider the case where an agent readopts goals that were previously believed
to be impossible to achieve up to revision of her beliefs. They model goals
through an accessibility relation over possible worlds. This is similar to
our framework where different worlds are different assignments to the set of
facts. Similarly to us, they prioritise goals as a preorder $\leq$; an agent
adopts a new goal unless another incompatible goal prior in the ordering
exists. This is in line with our framework where if we change the set of
facts, the algorithms compute a new extension of the theory where two
opposite literals can be proved as $\DES$ but only one as $\INT$. Notice also
that the ordering used in their work is unique and fixed at design time,
while in our framework chains of outcome rules are built trough a
context-dependent partial order which, in our opinion, models more realistic
scenarios.

\cite{DBLP:conf/atal/DastaniRM06} present three types of declarative goals: perform, achievement, and maintenance goals. In particular, they
define planning rules which relate configurations of the world as seen by the
agent (i.e., her beliefs). A planning rule is considered \emph{correct} only
if the plan associated to the rule itself allows the agent to reach a
configuration where her goal is satisfied. This is strongly connected to our
idea of belief rules, which define a path to follow in order to reach an
agent outcome. Notice that this kind of
research based on temporal aspects is orthogonal to ours.

The unifying framework proposed by
\cite{vanRiemsdijk:2008:GAS:1402298.1402323} and
\cite{DBLP:conf/atal/DastaniRW11} specifies different facets of the concept
of goal. However, several aspects make a comparative analysis between the two
frameworks unfeasible. Their analysis is indeed merely taxonomical, and it
does not address how goals are used in agent logics, as we precisely do here.

\cite{DBLP:journals/aamas/RiemsdijkDM09} share our aim to formalise goals in
a logic-based representation of conflicting goals and propose two different
semantics to represent \emph{conditional} and \emph{unconditional} goals.
Their central thesis, supported by \cite{DBLP:conf/comma/Prakken06}, is that
only by adopting a credulous interpretation is it possible to have
conflicting goals. However, we believe that a credulous interpretation is not
suitable if an agent has to deliberate what her primary goals are in a given
situation. We opted to have a sceptical interpretation of the concepts we
call goals, intentions, and social intentions, while we adopt a credulous
interpretation for desires. Moreover, they do not take into account the
distinction between goals and related motivational attitudes (as in
\citep{vanRiemsdijk:2008:GAS:1402298.1402323,
DBLP:conf/atal/DastaniRW11,DBLP:conf/atal/DastaniRM06}). The characteristic
property of intentions in these logics is that an agent may not drop
intentions for arbitrary reasons, which means that intentions have a certain
persistence. As such, their analysis results orthogonal to ours.

\cite{DBLP:journals/aamas/VasconcelosKN09} propose mechanisms for the
detection and resolution of normative conflicts. They resolve conflicts by
manipulating the constraints associated to the norms' variables, as well as through \emph{curtailment}, that is reducing the scope of the norm. In other works, we
dealt with the same problems in defeasible deontic logic
\citep{DBLP:journals/jphil/GovernatoriORS13}.
 We found three problems in their solution: (i) the curtailing relationship
$\omega$ is rather less intuitive than our preference relation $>$, (ii) their
approach seems too convoluted in solving exceptions (and they do not provide
any mechanism to handle reparative chains of obligations), and (iii) the space
complexity of their \emph{adoptNorm} algorithm is exponential.

The present framework is meant to be seen as the first step within a more
general perspective of providing the business analyst with tools that allow
the creation of a business process in a fully declarative manner
\citep{DBLP:conf/prima/OlivieriGSC13}. Another issue comes from the fact
that, typically, systems implemented by business rules involve thousands of
such rules. Again, our choice of Defeasible Logic allows to drastically
reduce the number of rules involved in the process of creating, for example, a business
process thanks to its exception handling mechanism. This is peculiarly
interesting when dealing with the problem of visualising such rules. When
dealing with a system with thousands of rules, understanding what they
represent or what a group of rules stand for, may be a serious challenge. On
the contrary, the model presented by
\cite{DBLP:conf/prima/OlivieriGSC13}, once an input is given, allows for the
identification of whether the whole process is compliant against a normative
system and a set of goals (and if not, where it fails). To the best of our
knowledge, no other system is capable of checking whether a process can start
with its input requisites and reaches its final objectives in a way that is
compliant with a given set of norms.

\paragraph{Acknowledgements}
NICTA is funded by the Australian Government through the Department of 
Communications and the Australian Research Council through the ICT Centre 
of Excellence Program.

This paper is an extended and revised version of \cite{DBLP:conf/ruleml/GovernatoriORSC13} presented at the  
7th International Symposium on Theory, Practice, and Applications of Rules on the Web (RuleML 2013).
We thank all the anonymous reviewers for their valuable comments. 

\bibliographystyle{abbrvnat}
\bibliography{bibexport}

\appendix
\section{Inferential mechanism example}\label{sec:table}

This appendix is meant to offer more details for illustrating the
inference mechanism proposed in this paper, and to consider Definition
\ref{def:proofCond+X} more carefully. Therefore, first we report in Table
\ref{tab:Condition (2.3.2)} the most interesting scenarios, where a rule $r$
proves $+\partial_{\INTS}q$ when attacked by an applicable rule $s$, which in
turn is successfully counterattacked by an applicable rule $t$. Lastly, we
end this appendix by reporting an example. The situation described there
starts from a natural language description and then shows how it can be
formalised with the logic we proposed.

For the sake of clarity, notation
$\BEL, \Box$ (with $\Box \in \set{\OBL, \INTS}$) represents belief rules
which are Conv-applicable for mode $\Box$. 

For instance, the sixth row of the table denotes situations like the
following:

\begin{align*}
	F & = \{ a,\, b,\, \OBL c \} \\
	R & = \{ r: a \To_{\OUT} q,\\
	  & \psl     s: b \To_{\OBL} \neg q,\\
	  & \psl     t: c \To_{} q\}\\
	> & = \{ (t, s) \}.\\
\end{align*}

The outcome rule $r$ for $q$ is applicable for $\INTS$ according to Definition
\ref{def:Applicability}. Since in our framework we have $\Aconf{\OBL}{\INTS}$,
the rule $s$ for $\neg q$ (which is applicable for $\OBL$) does not satisfy
condition (2.3.1) of Definition \ref{def:proofCond+X}. As a result, $s$
represents a valid attack to $r$. However, since we have $\Aconv{\BEL}{\OBL}$, rule $t$ is Conv-applicable for $\OBL$ by Definition
\ref{def:Conv-appl}, with $t>s$ by construction. Thus, $t$ satisfies condition (2.3.2.1) of
Definition \ref{def:proofCond+X} and successfully
counterattacks $s$. Consequently, $r$ is able to conclude $+\partial_{\INTS}q$.


\begin{table}[h]	
\centering
\begin{tabular}{cccc}
	\toprule
 \textbf{Mode of $r$} & \textbf{Mode of $s$} & \textbf{Mode of $t$} & \textbf{$+\partial_{\INTS}q$ because$\ldots$} \\

\hline

\OUT applicable for \INTS & \OUT applicable for \INTS & \OUT applicable for \INTS & $t > s$\\
\OUT applicable for \INTS & \OUT applicable for \INTS & \OBL & $\Conflict(\OBL,\INTS)$ \\
\OUT applicable for \INTS & \OUT applicable for \INTS & $\BEL,\INTS$ & $t > s$\\
\OUT applicable for \INTS & \OUT applicable for \INTS & $\BEL,\OBL$ & $\Conflict(\OBL,\INTS)$ \\
\OUT applicable for \INTS & \OBL & \OBL & $t > s$\\
\OUT applicable for \INTS & \OBL & $\BEL,\OBL$ & $t > s$\\
\OUT applicable for \INTS & $\BEL,\INTS$ & \OUT applicable for \INTS & $t > s$\\
\OUT applicable for \INTS & $\BEL,\INTS$ & \OBL & $\Conflict(\OBL,\INTS)$\\
\OUT applicable for \INTS & $\BEL,\INTS$ & $\BEL,\INTS$ & $t > s$\\
\OUT applicable for \INTS & $\BEL,\INTS$ & $\BEL,\OBL$ & $\Conflict(\OBL,\INTS)$\\
\OUT applicable for \INTS & $\BEL,\OBL$ & $\BEL,\OBL$ & $t > s$\\
$\BEL,\INTS$ & \OUT applicable for \INTS & \OUT applicable for \INTS & $t > s$\\
$\BEL,\INTS$ & \OUT applicable for \INTS & \OBL & $\Conflict(\OBL,\INTS)$ \\
$\BEL,\INTS$ & \OUT applicable for \INTS & $\BEL,\INTS$ & $t > s$\\
$\BEL,\INTS$ & \OUT applicable for \INTS & $\BEL,\OBL$ & $\Conflict(\OBL,\INTS)$ \\
$\BEL,\INTS$ & \OBL & \OBL & $t > s$\\
$\BEL,\INTS$ & \OBL & $\BEL,\OBL$ & $t > s$\\
$\BEL,\INTS$ & $\BEL,\INTS$ & \OUT applicable for \INTS & $t > s$\\
$\BEL,\INTS$ & $\BEL,\INTS$ & \OBL & $\Conflict(\OBL,\INTS)$\\
$\BEL,\INTS$ & $\BEL,\INTS$ & $\BEL,\INTS$ & $t > s$\\
$\BEL,\INTS$ & $\BEL,\INTS$ & $\BEL,\OBL$ & $\Conflict(\OBL,\INTS)$\\
$\BEL,\INTS$ & $\BEL,\OBL$ & $\BEL,\OBL$ & $t > s$\\
\bottomrule
\end{tabular}
\caption{Definition \ref{def:proofCond+X}: Attacks and counterattacks for social intention}
\label{tab:Condition (2.3.2)}
\end{table}

\begin{example}\label{ex:runningEx1} 
	
\emph{PeoplEyes} is an eyeglasses manufacturer. Naturally, its final goal is
to produce cool and perfectly assembled eyeglasses. The final steps of the
production process are to shape the lenses to glasses, and mount them on the
frames. To shape the lenses, \emph{PeoplEyes} uses a very innovative and
expensive laser machine, while for the final mounting phase two different
machines can be used. Although both machines work well, the first and
newer one is more precise and faster than the other one;
\emph{PeoplEyes} thus prefers to use the first machine as much as
possible. Unfortunately, a new norm comes in force stating that no laser
technology can be used, unless human staff wears laser-protective goggles.
\end{example}

If \emph{PeoplEyes} has both human resources and raw material, and the three
machines are fully working, but it has not yet bought any laser-protective
goggles, all its goals would be achieved but it would fail to comply with the
applicable regulations, since the norm for the no-usage of laser technology is violated
and not compensated.

If \emph{PeoplEyes} buys the laser-protective goggles, their entire production
process also becomes norm compliant. If, at some time, the more precise
mounting machine breaks, but the second one is still working,
\emph{PeoplEyes} can still reach some of its objectives since the usage of the
second machine leads to a state of the world where the objective of mounting
the glasses on the frames is accomplished. Again, if \emph{PeoplEyes} has no
protective laser goggles and both the mounting machines are out of order,
\emph{PeoplEyes}' production process is neither norm, nor outcome compliant.

The following theory is the formalisation into our logic of the above scenario.
\allowdisplaybreaks
\begin{align*}
    F & =\{ \mathit{lenses},\; \mathit{frames},\; \mathit{new\_safety\_regulation}\} \\
    R & =\{r_{1}: \To_{\OUT} \mathit{eye\_Glasses}\\
	& \psl r_{2}: \To \mathit{laser}\\
	& \psl r_{3}: \mathit{lenses}, \mathit{laser} \To \mathit{glasses}\\
	& \psl r_{4}: \To \mathit{mounting\_machine1}\\
	& \psl r_{5}: \To \mathit{mounting\_machine2}\\
	& \psl r_{6}: \mathit{mounting\_mach1} \To \neg\mathit{mounting\_machine2}\\
	& \psl r_{7}: \mathit{frames}, \mathit{glasses}, \mathit{mounting\_machine1} \To\mathit{eye\_Glasses}\\
    & \psl r_{8}: \mathit{frames}, \mathit{glasses}, \mathit{mounting\_machine2} \To \mathit{eye\_Glasses}\\
	& \psl r_{9}: \mathit{new\_safety\_regulation} \To_{\OBL} \neg\mathit{laser} \otimes \mathit{goggles}\\
    & \psl r_{10}: \To_{\OUT} \mathit{mounting\_machine1} \oplus \mathit{mounting\_machine2}
\}\\
    >^{sm} & =\{r_{6}>r_{5}\}.
\end{align*}
%
We assume \textit{PeoplEye} has enough resources to start the process by
setting $\mathit{lenses}$ and $\mathit{frames}$ as facts. Rule $r_1$ states
that producing $\mathit{eye\_Glasses}$ is the main objective ($+\partial_{\INT}
\mathit{eye\_Glasses}$, we choose intention as the mental attitude to comply
with/attain to); rules $r_2$, $r_4$ and $r_5$ describe that we can use,
respectively, the laser and the two mounting machineries. Rule $r_3$ is to
represent that, if we have lenses and a laser machinery available, then we
can shape glasses; in the same way, rules $r_7$ and $r_8$ describe that
whenever we have glasses and one of the mounting machinery is available, then
we obtain the final product. Therefore, the positive extension for belief
$+\partial$ contains $\mathit{laser}$, $\mathit{glasses}$,
$\mathit{mounting\_machine1}$ and $\mathit{eye\_Glasses}$. In that occasion,
rule $r_6$ along with $>$ prevent the using of both machineries at the same
time and thus $-\partial \mathit{mounting\_machine2}$ (we assumed, for
illustrative purpose even if unrealistically, that a parallel execution is
not possible). When a new safety regulation comes in force ($r_9$), the usage
of the laser machinery is forbidden, unless protective goggles are worn
($+\partial_{\OBL}\neg \mathit{laser}$ and $+\partial_{\OBL}\neg
\mathit{goggles}$). Finally, rule $r_{10}$ is to describe the preference of
using $\mathit{mounting\_machine1}$ instead of $\mathit{mounting\_machine2}$
(hence we have $+\partial_{\INT} \mathit{mounting\_machine1}$ and
$-\partial_{\INT} \mathit{mounting\_machine2}$).

Since there exists no rule for goggles, the theory is outcome compliant (that is, it reaches some set of objectives), but
not norm compliant (given that it fails to meet some obligation rules without compensating them). If we add $\mathit{goggles}$ to the facts and we substitute $r_2$ with
\[ 
r'_{2}: \OBL \mathit{goggles} \Rightarrow \mathit{laser} 
\]
then we are both norm and outcome compliant, as well as if we add \[ r_{11}:
\mathit{mounting\_machine1\_broken} \Rightarrow \neg\mathit{mounting\_machine1} \] to $R$ and
$\mathit{mounting\_machine1\_broken}$ to $F$. Notice that, with respect to
$laser$, we are intention compliant but \textit{not} social intention
compliant (given $\OBL \neg \mathit{lenses}$). This is a key characteristic
of our logic: The system is informed that the process is compliant but some
violations have occurred.

\section{Proofs of Propositions in Section~\ref{sec:logic}}\label{sec:proofs}
\begingroup
\def\Box{X}
\Coherence*
\begin{proof}
 \emph{1. (Coherence of the logic)} The negative proof tags are the strong negation of the positive
ones, and so are the conditions of a rule being discarded
(Definition~\ref{def:Discardability}) for a rule being applicable
(Definition~\ref{def:Applicability}). Hence, when the conditions for $+\partial_{\Box}$ hold, those for $-\partial_{\Box}$ do not.

 \emph{2. (Consistency of the logic)} We split the proof into two cases: (i)
at least one of $\Box l$ and $\Box \non l$ is in \FACTS, and (ii) neither of them
is in \FACTS. For (i) the proposition immediately follows by the assumption of
consistency. In fact, suppose that $\Box l \in \FACTS$. Then clause (1) of
$+\partial_{\Box}$ holds for $l$. By consistency $\Box \non l\not\in \FACTS$,
thus clause (1) of Definition~\ref{def:proofCond+X} does not hold for $\non l$.
Since $\Box l \in \FACTS$, also clause (2.1) is always falsified for $\non l$,
and the thesis is proved.

For (ii), let us assume that both $+\partial_{\Box}l$ and
$+\partial_{\Box}\non l$ hold in $D$. A straightforward assumption derived by
Definitions~\ref{def:Applicability} and~\ref{def:Discardability} is that no
rule can be at the same time applicable and discarded for $\Box$ and $l$ for
any literal $l$ and its complement. Thus, we have that there are applicable
rules for $\Box$ and $l$, as well as for $\Box$ and $\non l$. This means that
clause (2.3.2) of Definition~\ref{def:proofCond+X} holds for both $l$ and
$\non l$. Therefore, for every applicable rule for $l$ there is an applicable
rule for $\non l$ stronger than the rule for $l$. Symmetrically, for every
applicable rule for $\non l$ there is an applicable rule for $l$ stronger
than the rule for $\non l$. Since the set of rules in $D$ is finite by
construction, this situation is possible only if there is a cycle in the
transitive closure of the superiority relation, which is in contradiction
with the hypothesis of $D$ being consistent.
\end{proof}

\PositiveProp*
\begin{proof}
For part 1., let $D$ be a consistent defeasible theory, and
$D\vdash+\partial_{\Box}l$. Literal $\non l$ can be in only one of the
following, mutually exclusive situations: (i) $D\vdash +\partial_{\Box} \non
l$; (ii) $D\vdash -\partial_{\Box} \non l$; (iii) $D\not\vdash
\pm\partial_{\Box} \non l$. Part 2 of Proposition~\ref{prop:CoherenceConsistence} allows us to exclude case (i), since $D\vdash +\partial_{\Box} l$
by hypothesis. Case (iii) denotes situations where there are loops in the
theory involving literal $\non l$,\footnote{For example, situations like $\Box
\non l \To_{\Box} \non l$, where the proof conditions generate a loop
without introducing a proof.} but inevitably this would affect also the
provability of $\Box l$, i.e., we would not be able to give a proof for
$+\partial_{\Box} l$ as well. This is in contradiction with the hypothesis.
Consequently, situation (ii) must be the case.

Parts 2. and 3. directly follow by
Definitions~\ref{def:Applicability} and \ref{def:Discardability}, while Definitions~\ref{def:Applicability} and
\ref{def:proofCond+X} justify part 4., given that $\GOAL$ is not involved in any conflict relation.

Part 5. Trivially, from part 4.
\end{proof}

\NegativeProp*
\begin{proof}
Example~\ref{ex:Jsick} in the extended version offers counterexamples showing
the reason why the above statements do not hold.
\begin{align*}
 F  & = \{\mathit{saturday},\, \mathit{John\_away},\, \mathit{John\_sick} \} \\
 R  & = \{r_2: \mathit{saturday} \Rightarrow_{\OUT} \mathit{visit\_John}\odot
  \mathit{visit\_parents} \odot \mathit{watch\_movie}\\
  & \psl r_3: \mathit{John\_away} \Rightarrow_{\Bel} \neg \mathit{visit\_John}\\
  & \psl r_{4}: \mathit{John\_sick} \To_{\OUT} \neg \mathit{visit\_John}\odot \mathit{short\_visit}\}\\
  & \psl r_{7}: \mathit{John\_away} \To_{\BEL} \neg \mathit{short\_visit}\}\\
 >  & = \{(r_{2}, r_{4}) \}.
\end{align*}
Given that $r_{2}> r_{4}$, Alice has the desire to $\mathit{visit\_John}$, and this is
also her preferred outcome. Nonetheless, being $\mathit{John\_away}$ a fact, this is not
her intention, while so are $\neg \mathit{visit\_John}$ and $\mathit{visit\_parents}$.
\end{proof}

\endgroup
\section{Correctness and Completeness of \textsc{DefeasibleExtension}}\label{sec:CorrDefExt}

In this appendix we give proofs of the lemmas used by
Theorem~\ref{thm:SoundCompl} for the soundness and completeness of the
algorithms proposed.

We recall that the algorithms in Section~\ref{sec:algorithmic_results} are based on a series
of transformations that reduce a given theory into an equivalent, simpler
one. Here, equivalent means that the two theories have the same extension,
and simpler means that the size of the target theory is smaller than that of
the original one. Remember that the size of a theory is the number of
instances of literals occurring in the theory plus the number of rules in the theory. Accordingly, each
transformation either removes some rules or some literals from rules
(specifically, rules or literals we know are no longer useful to produce new
conclusions). There is an exception. At the beginning of the computation, the
algorithm creates four rules (one for each type of goal-like attitude) for
each outcome rule (and the outcome rule is then eliminated). The purpose of
this operation is to simplify the transformation operations and the
bookkeeping of which rules have been used and which rules are still able to
produce new conclusions (and the type of conclusions). Alternatively, one
could implement flags to achieve the same result, but in a more convoluted
way. A consequence of this operation is that we no longer have outcome rules.
This implies that we have (i) to adjust the proof theory, and (ii) to show that the
adjusted proof theory and the theory with the various goal-like rules are
equivalent to the original theory and original proof conditions.

The adjustment required to handle the replacement of each outcome rule with a
set of rules of goal-like modes (where each new rule has the same body and
consequent of the outcome rule it replaces) is to modify the definition of
being applicable (Definition~\ref{def:Applicability}) and being discarded
(Definition~\ref{def:Discardability}). Specifically, we have to replace 
\begin{itemize}
  \item $r\in R^{\OUT}$ in clause 3 of Definition~\ref{def:Applicability} with
    $r\in R^\DES$;
  \item $r\notin R^{\OUT}$ in clause 3 of Definition~\ref{def:Discardability}
    with $r\notin R^\DES$;
  \item $r\in R^{\OUT}$ in clause 4.1.1 of Definition~\ref{def:Applicability}
    with $r\in R^X$; and
  \item $r\notin R^{\OUT}$ in clause 4.1.1 of Definition~\ref{def:Discardability}
    with $r\notin R^X$.
\end{itemize}
Given a theory $D$ with goal-like rules instead of outcome rules we will use 
$E_3(D)$ to refer to the extension of $D$ computed using the proof theory obtained 
from the proof theory defined in Section~\ref{sec:logic} with the modified versions 
of the notions of applicable and discarded just given.
\begin{lem}\label{lem:addRandSup}
Let $D=(F,R,>)$ be a defeasible theory. Let $D'=(F, R',>')$ be the defeasible theory obtained from $D$ as follows:
\begin{align*}
 R' = {}& R^{\BEL} \cup R^{\OBL} \cup \set{r_X\colon A(r) \hookrightarrow_X C(r)| r\colon A(r)\hookrightarrow_{\OUT} C(r)\in R, X\in\set{\DES, \GOAL, \INT, \INTS}} \\
 >' = {}&  \set{(r,s)| (r,s)\in {>},\ s,r\in R^\BEL \cup R^\OBL}\cup
           \set{(r_X,s_Y) | (r,s) \in {>}, r,s\in R^\OUT} \cup {}\\
    &  \set{(r_X,s)| (r,s)\in {>}, r\in R^\OUT, s\in R^\BEL \cup R^\OBL} \cup
      \set{(r,s_X)| (r,s)\in {>}, r\in R^\BEL\cup R^\OBL, s\in R^\OUT}
\end{align*}
Then, $E(D) = E_3(D')$.
\end{lem}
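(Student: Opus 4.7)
The plan is to prove $E(D) = E_3(D')$ by simultaneous induction on the length of proofs, showing that exactly the same tagged literals are derivable in both theories. The basic observation driving the argument is that the construction is a bookkeeping reorganisation rather than a semantic change: belief and obligation rules are preserved verbatim, facts are unchanged, and each outcome rule $r$ is faithfully represented in $D'$ by its four mode-specific copies $r_{\DES}, r_{\GOAL}, r_{\INT}, r_{\INTS}$, each inheriting the body $A(r)$ and the consequent $C(r)$ of $r$.

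First, I would prove a correspondence lemma at the level of applicability and discardability. Concretely: for $\Box \in \{\DES, \GOAL, \INT, \INTS\}$, any proof prefix $P(1..n)$, and any rule $r \in R^{\OUT}[q, i]$, I need to show that $r$ is applicable (resp.\ discarded) for $\Box$ at index $i$ in $D$ under Definition~\ref{def:Applicability} (resp.\ Definition~\ref{def:Discardability}) if and only if $r_{\Box} \in R^{\Box}[q, i]$ is applicable (resp.\ discarded) for $\Box$ at index $i$ in $D'$ under the modified definitions. This is essentially by inspection: the body-applicability clauses match literally, the index clauses (referring to previously proved or refuted $c_k$) only quantify over tagged literals in $P(1..n)$, and the Conv-applicable clauses depend only on $R^{\BEL}$, which is common to both theories. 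For $\Box = \BEL$ and $\Box = \OBL$, where no reshuffling occurs, the correspondence is immediate.

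Second, I would check the superiority correspondence. Whenever $r, s \in R^{\OUT}$ with $(r,s) \in {>}$, the construction inserts $(r_X, s_Y) \in {>'}$ for \emph{every} $X, Y \in \{\DES, \GOAL, \INT, \INTS\}$. This faithfully mirrors the semantics of $>$ in $D$: clause (2.3.2) of Definition~\ref{def:proofCond+DES} and clause (2.3.2.1) of Definition~\ref{def:proofCond+X} only read off whether $r$ beats $s$ as rules, independently of which goal-like mode we are currently reasoning about. The safety of this blanket quantification rests on the fact that each $r_X$ is body-applicable only when we are reasoning about mode $X$, so the extra tuples $(r_X, s_Y)$ with $X \neq Y$ are inert in the inductive step. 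The mixed cases between $R^{\OUT}$ and $R^{\BEL} \cup R^{\OBL}$ are handled by exactly the same reasoning, and the tuples in ${>}$ confined to $R^{\BEL} \cup R^{\OBL}$ are copied unchanged.

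With these correspondences established, the main induction becomes routine: a walk through each proof condition ($\pm\partial_{\Box}$ for every $\Box \in \MOD$), invoking the applicability correspondence for the quantifiers over attacking and counterattacking rules, the superiority correspondence for the strength comparisons, and the inductive hypothesis on strictly shorter proofs. The hardest part will be clause (2.3.2) of Definition~\ref{def:proofCond+X}: an attack $s \in R[\non q, j]$ may have to be counterattacked by $t \in R[q, k]$ possibly through different modalities (clauses 2.3.2.1 and 2.3.2.2 mix conversion and conflict), so I must trace each way $s$ and $t$ can instantiate across $R^{\OUT}$, $R^{\BEL}$, and $R^{\OBL}$ and verify, case by case, that ${>'}$ supplies exactly the comparisons required---and no spurious ones that would let $D'$ prove something unavailable in $D$. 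Once this case analysis is complete, the two directions of the biconditional close cleanly and the lemma follows.
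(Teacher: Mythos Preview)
Your proposal is correct and follows essentially the same approach as the paper's own proof: establish that applicability/discardability of an outcome rule $r$ for mode $X$ in $D$ coincides with that of its copy $r_X$ in $D'$ under the modified clauses, observe that the superiority relation is faithfully transported, and conclude by induction on derivation length. The paper's proof is considerably terser than yours---it simply asserts that the correspondence is ``immediate to verify''---so your explicit discussion of why the off-diagonal copies $s_Y$ (for $Y\neq\Box$) are automatically discarded and why the blanket tuples $(r_X,s_Y)$ are harmless is a genuine elaboration of what the paper leaves to the reader.
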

\begin{proof}
The differences between $D$ and $D'$ are that each outcome-rule in $D$ corresponds
to four rules in $D'$ each for a different mode and all with the same antecedent 
and consequent of the rule in $D$. Moreover, every time a rule  $r$ in $D$ is 
stronger than a rule $s$ in $D$, then any rule corresponding to $r$ in $D'$ 
is stronger than any rule corresponding to $s$ in $D'$.

The differences in the proof theory for $D$ and that for $D'$ is in the definitions 
of applicable for $X$ and  discarded for $X$. It is immediate to verify that 
every time a rule $r$ is applicable (at index $n$) for $X$, then $r_X$ is applicable 
(at index $n$) for $X$ (and the other way around). 
\end{proof}

Given the functional nature of the transformations involved in the algorithms, we shall refer to the
rules in the target theory with the same labels as the rules in the
source theory. Thus, given a rule $r\in D$, we will refer to the rule
corresponding to it in $D'$ (if it exists) with the same label, namely $r$.

In the algorithms, belief rules may convert to another mode $\Diamond$ only
through the support set $R^{\BEL, \Diamond}$. Definition~\ref{def:Conv-appl}
requires $R^{\BEL, \Diamond}$ to be initialised with a modal version of
each belief rule with \emph{non-empty} antecedent, such that every literal $a$
in the antecedent is replaced by the corresponding modal literal $\Diamond a$. 

In this manner, rules in $R^{\BEL, \Diamond}$ satisfy clauses 1 and 2 of
Definitions~\ref{def:Conv-appl} and \ref{def:Conv-disc} by construction, while
clauses 3 of both definitions are satisfied iff these new rules for
$\Diamond$ are body-applicable (resp. body-discarded). Therefore, conditions
for rules in $R^{\BEL, \Diamond}$ to be applicable/discarded collapse into
those of Definition~\ref{def:BodyAppl} and \ref{def:BodyDisc}, and accordingly
these rules are applicable for mode $\Diamond$ only if they satisfy clauses clauses (2.1.1), (3.1), or (4.1.1) of
Definitions~\ref{def:Applicability} and \ref{def:Discardability}, based on how $\Diamond$ is instantiated. That is to say, during the execution of the
algorithms, we can empty the body of the rules in $R^{\BEL, \Diamond}$ by
iteratively proving all the modal literals in the antecedent to decide which
rules are applicable at a given step.

Before proceeding with the demonstrations of the lemmas, we recall that in the
formalisation of the logic in Section \ref{sec:logic}, we referred to modes with
capital roman letters ($X$, $Y$, $T$) while the notation of the algorithms in Section 
\ref{sec:algorithmic_results}
proposes the variant with $\Box$, $\blacksquare$ and $\Diamond$ since it was
needed to fix a given modality for the iterations and pass the correct input
for each call of a subroutine. Therefore, being that the hypotheses of the lemmas
refer to the operations performed by the algorithms, while the proofs refer
to the notation of Definitions \ref{def:BodyAppl}--\ref{def:complement}, in the following the former ones use
the symbol $\Box$ for a mode, the latter ones the capital roman letters notation.

\begin{lem}\label{lem:+PartialBox}
	Let $D=(\FACTS,R,>)$ be a defeasible theory such that $D\vdash
+\partial_{\Box} l$ and $D'=(\FACTS,R',>')$ be the theory obtained from $D$
where
\begin{align*}
	R' = & \set{r: A(r) \setminus \set{\Box l, \neg \Box \non l}\hookrightarrow C(r) |\ r\in R,\ A(r) \cap \widetilde{\Box l} = \emptyset} \\
	R'^{\BEL,\Box} = & \set{r: A(r)\setminus \set{\Box l}\hookrightarrow C(r)| r\in R^{\BEL,\Box},\ A(r) \cap \widetilde{\Box l} = \emptyset}\\
    >' = & > \setminus \set{(r,s), (s,r) \in {>} |\  A(r) \cap \widetilde{\Box l} \not = \emptyset}.
\end{align*}
Then $D\equiv D'$.	
\end{lem}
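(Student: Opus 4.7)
The plan is to prove $D \equiv D'$ by showing that for every literal $m$ and every modality $\blacksquare$, $D\vdash \pm\partial_{\blacksquare} m$ iff $D'\vdash \pm\partial_{\blacksquare} m$. I would proceed by simultaneous induction on the length of derivations, in both directions. For the base case, any proof of length one in $D'$ uses only facts, and since $F$ is the same in $D$ and $D'$, the same proof works in $D$ (and conversely). The inductive step is where the meat lies, and it hinges on two observations that together justify each of the three transformations simultaneously.

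First, by hypothesis $D\vdash +\partial_{\Box} l$, so by Proposition~\ref{prop:+PartialTHEN-Partial} (parts 1 and its consequences) we have $D\vdash -\partial_{\Box} \non l$ whenever $\Box\neq\DES$; and trivially every antecedent containing $\Box l$ is verified in $D$ at every later step. This means that \emph{every} removed premise literal (namely $\Box l$ or $\neg\Box\non l$) is one that is guaranteed, by the inductive hypothesis applied to the proof of $+\partial_\Box l$, to be satisfied in $D$. Hence a rule $r$ whose body in $D'$ is $A(r)\setminus\set{\Box l,\neg\Box\non l}$ is body-applicable (and Conv-applicable) in $D'$ at step $n{+}1$ exactly when its counterpart in $D$ is body-applicable at step $n{+}1$ (assuming the proof of $+\partial_\Box l$ is embedded as a prefix). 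For $\Box=\DES$ the pair $\set{\neg\Box\non l}$ plays no role because $\widetilde{\DES l}=\set{\neg\DES l}$, so the second element in the set-subtraction is vacuous on the rules actually kept.

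Second, a rule $r$ with $A(r)\cap\widetilde{\Box l}\neq\emptyset$ is, by construction, body-discarded in $D$ for every mode in the sense of Definition~\ref{def:BodyDisc} (since $+\partial_\Box l$ entails, via Proposition~\ref{prop:CoherenceConsistence}, the refutation of every literal in $\widetilde{\Box l}$ at modality $\Box$, and hence $\BEL$-refutation of the literal when applicable). Thus $r$ can never contribute to any applicable/Conv-applicable clause in any $\pm\partial_\blacksquare$ derivation in $D$, and removing it from $R$ changes no extension. This also justifies deleting all superiority pairs touching such rules: they cannot be invoked by any clause (2.3.1)--(2.3.2) in Definitions~\ref{def:proofCond+X}--\ref{def:proofCond-X}. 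The symmetric pruning of $R^{\BEL,\Box}$ is argued identically, but within the Conv-applicability/Conv-discardability definitions.

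Putting these together, the inductive step is a case analysis on which clause of the proof conditions (Definitions~\ref{def:proofCond+DES}--\ref{def:proofCond-X}) witnesses a tagged conclusion. In each case, applicability of the witnessing rule $r$ in $D$ translates into applicability of the transformed $r$ in $D'$ (because we only stripped premises that are certified), and any attacker $s$ that is applicable/discarded in $D$ remains applicable/discarded in $D'$ in the same way (because we only dropped attackers that were already guaranteed discarded in $D$, and we only dropped superiority pairs involving such attackers). Conversely, any witness or attacker available in $D'$ lifts trivially back to $D$, since the transformations monotonically shrink bodies, rules, and superiority. The main obstacle is the bookkeeping around the two-way translation for the Conv-applicable clauses, where one must verify that the dropped premises in $R^{\BEL,\Box}$ really correspond to the $\Box$-versions of literals needed in Definition~\ref{def:Conv-appl}; once this correspondence is fixed in the notation, the case analysis closes uniformly.
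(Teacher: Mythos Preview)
Your overall strategy—simultaneous induction on derivation length, showing that applicability and discardedness transfer between $D$ and $D'$—is exactly what the paper does, and your two key observations are the right ones. However, your base case is wrong, and this is precisely where the transformation bites.

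You claim that ``any proof of length one in $D'$ uses only facts''. This is false: clause~(2) of Definitions~\ref{def:proofCond+DES} and~\ref{def:proofCond+X} can already fire at $P(1)$ whenever there is a rule $r$ with $A(r)=\emptyset$ and $q$ at index~$1$ of $C(r)$ (and the attack conditions are met). The transformation to $D'$ can \emph{create} such rules: a rule whose body in $D$ is exactly $\{\Box l\}$ or $\{\neg\Box\non l\}$ acquires an empty body in $D'$ and becomes applicable at step~$1$ there, while in $D$ it is not applicable until after a derivation of $+\partial_{\Box} l$. The paper's proof handles exactly this case in its base step, showing that if $r$ is applicable at $P(1)$ in $D'$ then either $A(r)=\emptyset$ already in $D$, or $A(r)\subseteq\{\Box l,\neg\Box\non l\}$ in $D$, whence $r$ is applicable in $D$ once the hypothesis $D\vdash +\partial_{\Box} l$ is invoked. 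Your own ``embed the proof of $+\partial_\Box l$ as a prefix'' device is the right fix; you just need to apply it to the base case rather than declaring it trivial.

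A smaller point: in your second observation you cite Proposition~\ref{prop:CoherenceConsistence} to obtain the refutations needed to body-discard rules whose antecedent meets $\widetilde{\Box l}$. Coherence alone does not give you $-\partial_{\Box}\non l$; you need Proposition~\ref{prop:+PartialTHEN-Partial} part~1 for that (and the hypothesis itself for the $\neg\Box l$ case, via clause~(2) of Definition~\ref{def:BodyDisc}). The conclusion you draw is correct, but the justification should be adjusted.
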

\begin{proof}
	The proof is by induction on the length of a derivation $P$. For the inductive base, we consider all possible derivations for a literal $q$ in the theory.

\paragraph{$P(1)= +\partial_{X} q$, with $X \in \MOD \setminus \set{\DES}$.}{
This is possible in two cases: (1) $X q\in \FACTS$, or (2) $\widetilde{Y q}
\cap \FACTS = \emptyset$, for $Y=X$ or $\Conflict(Y,X)$, and $\exists r\in
R^{X}[q,i]$ that is applicable in $D$ for $X$ at $i$ and $P(1)$, and every
rule $s\in R^{Y}[\non q,j]$ is either (a) discarded for $X$ at $j$ and $P(1)$,
or (b) defeated by a stronger rule $t\in R^{T}[q,k]$ applicable for $T$ at $k$
and $P(1)$ ($T$ may conflict with $Y$).

Concerning (1), by construction of $D'$, $X q\in \FACTS$ iff $X q\in \FACTS'$,
thus if $+\partial_{X} q$ is provable in $D$ then is provable in $D'$, and
vice versa.

Regarding (2), again by construction of $D'$, $\widetilde{Y q} \cap
\FACTS=\emptyset$ iff $\widetilde{Y q} \cap \FACTS'=\emptyset$. Moreover, $r$ is applicable at
$P(1)$ iff $i=1$ (since lemma's operations do not modify the tail of the
rules) and $A(r) = \emptyset$. Therefore, if $A(r) = \emptyset$ in $D$ then
$A(r)=\emptyset$ in $D'$. This means that if a rule is applicable in $D$ at
$P(1)$ then is applicable in $D'$ at $P(1)$. In the other direction, if $r$ is
applicable in $D'$ at $P(1)$, then either (i) $A(r) = \emptyset$ in $D$, or
(ii) $A(r) = \set{\Box l}$, or $A(r) = \set{\neg \Box \non l}$. For (i), $r$
is straightforwardly applicable in $D$, as well as for (ii) since $D\vdash
+\partial_{\Box} l$ by hypothesis.

When we consider possible attacks to rule $r$, namely $s\in R^{Y}[\non q, j]$,
we have to analyse cases (a) and (b) above.

(a) Since we reason about $P(1)$, it must be the case that no such rule $s$
exists in $R$, and thus $s$ cannot be in $R'$ either.
In the other direction, the difference between $D$ and $D'$ is that in $R$ we
have rules with $\widetilde{\Box l}$ in the antecedent, and such rules are not
in $R'$. Since $D\vdash +\partial_{\Box} l$ by hypothesis, all rules in $R$
for which there is no counterpart in $R'$ are discarded in $D$.

(b) We modify the superiority relation by only withdrawing instances where one
of the rules is discarded in $D$. But only when $t$ is applicable then is
active in the clauses of the proof conditions where the superiority relation
is involved, i.e., (2.3.2) of Definition~\ref{def:proofCond+X}. We have just
proved that if a rule is applicable in $D$ then is applicable in $D'$ as well,
and if is discarded in $D$ then is discarded in $D'$. If $s$ is not discarded
in $D$ for $Y$ at $1$ and $P(1)$, then there exists an applicable rule $t$ in
$D$ for $q$ stronger than $s$. Therefore $t$ is applicable in $D'$ for $T$ and
$t >' s$ if $T=Y$, or $\Conflict(T,Y)$. Accordingly, $D'\vdash +\partial_{X}
q$. The same reasoning applies in the other direction. Consequently, if we
have a derivation of length 1 of $+\partial q$ in $D'$, then we have a
derivation of length 1 of $+\partial q$ in $D$ as well.

Notice that in the inductive base by their own nature rules in $R^{\BEL,
\Diamond}$, even if can be modified or erased, cannot be used in a proof of
length one.}

\paragraph{$P(1)= +\partial_{\DES} q$.}{
The proof is essentially identical to the inductive base for $+\partial_{X}
q$, with some slight modifications dictated by the different proof
conditions for $+\partial_{\DES}$: (1) $\DES q\in \FACTS$, or (2) $\neg \DES q
\not\in \FACTS$, and $\exists r\in R^{\DES}[q,i]$ that is applicable for
$\DES$ at 1 and $P(1)$ and every rule $s\in R^{\DES}[\non q,j]$ is either (a)
discarded for $\DES$ at 1 and $P(1)$, or (b) $s$ is not stronger than $r$.

\paragraph{$P(1)= -\partial_{X} q$ with $X\in \MOD$.}{Clearly conditions (1)
and (2.1) of Definition~\ref{def:proofCond-X} hold in $D$ iff they do in $D'$,
given that $\FACTS=\FACTS'$. The analysis for clause (2.2) is the same of case
(a) of $P(1)=+\partial_{X} q$, while for clause (2.3.1) the reader is referred
to case (2), where in both cases $r$ and $s$ change their role. For condition
(2.3.2) if $X=\DES$, then $s>r$. Otherwise, either there is no $t\in
R^{T}[q,k]$ in $D$ (we recall that at $P(1)$, $t$ cannot be discarded in $D$
because that would imply a previous step in the proof), or $t \not > s$ and
not $\Conflict(T,Y)$. Therefore $s\in R'$ by construction, and conditions on
the superiority relation between $s$ and $t$ are preserved. Hence, $D'\vdash
-\partial_{X} q$. For the other direction, we have to consider the case of a
rule $s$ in $R$ but not in $R'$. As we have proved above, all rules discarded
in $D'$ are discarded in $D$, and all rules in $R$ for which there is no
corresponding rule in $R'$ are discarded in $D$ as well, and we can process this case with the same reasoning as above.}

\medskip
	
\noindent For the inductive step, the property equivalence between $D$ and
$D'$ is assumed up to the $n$-th step of a generic proof for a given literal
$p$.}

\paragraph{$P(n+1) = +\partial_{X} q$, with $X \in \MOD$.}{
Clauses (1) and (2.1) follow the same conditions treated in the inductive base
for $+\partial_{X} q$. As regards clause (2.2), we distinguish if $X=\BEL$, or
not. In the former case, if there exists a rule $r\in R[q,i]$ applicable for
$\BEL$ in $D$, then clauses 1.--3. of Definition~\ref{def:BodyAppl} are all
satisfied. By inductive hypothesis, we conclude that the clauses are satisfied
by $r$ in $D'$ as well no matter whether $\Box l \in A(r)$, or not.

Otherwise, there exists a rule $r$ applicable in $D$ for $X$ at $P(n+1)$ such
that $r$ is either in $R^{X}[q,i]$, or $R^{\BEL, X}[q,1]$. By inductive
hypothesis, we can conclude that: (i) if $r\in R^{X}[q,i]$ then $r$ is
body-applicable and the clauses of Definition~\ref{def:BodyAppl} are satisfied
by $r$ in $D'$ as well; (ii) if $r\in R^{\BEL, X}[q,1]$ then $r$ is
Conv-applicable and the clauses of Definition~\ref{def:Conv-appl} are
satisfied by $r$ in $D'$ as well. As regards conditions (2.1.2) or (4.1.2),
the provability/refutability of the elements in the chain prior to $q$ is
given by inductive hypothesis. The direction from rule applicability in $D'$
to rule applicability in $D$ follows the same reasoning and so is
straightforward.

Condition (2.3.1) states that every rule $s\in R^{Y}[\non q, j]\cup
R^{\BEL,Y}[\non q,1]$ is discarded in $D$ for $X$ at $P(n+1)$. This means that
there exists an $a \in A(s)$ satisfying one of the clauses of
Definition~\ref{def:BodyDisc} if $s\in R^{\BEL,Y}[\non q,1]$, or
Definition~\ref{def:Discardability} if $s\in R^{Y}[\non q, j]$. Two possible
situations arise. If $a \in \widetilde{\Box l}$, then $s \notin R'$;
otherwise, by inductive hypothesis, either $a$ satisfies
Definition~\ref{def:BodyDisc} or \ref{def:Conv-disc} in $D'$ depending on
$s\in R^{Y}[\non q, j]$ or $s\in R^{\BEL,Y}[\non q,1]$. Hence, $s$ is
discarded in $D'$ as well. The same reasoning applies for the other direction.
The difference between $D$ and $D'$ is that in $R$ we have rules with elements
of $\widetilde{\Box l}$ in the antecedent, and these rules are not in $R'$.
Consequently, if $s$ is discarded in $D'$, then is discarded in $D$ and all
rules in $R$ for which there is no corresponding rule in $R'$ are discarded in
$D$ since $D\vdash +\partial_{\Box} l$ by hypothesis.

If $X\neq \DES$, then condition (2.3.2) can be treated as case (b) of the
corresponding inductive base except clause (2.3.2.1) where if $t>s$ then
either: (i) $Y=T$, (ii) $s\in R^{\BEL,T}[\non q]$ and $t\in R^{T}[q]$
($\Convert(Y,T)$), or (iii) $s\in R^{Y}[\non q]$ and $t\in R^{\BEL, Y}[q]$
($\Convert(T,Y)$). Instead if $X=\DES$, no modifications are needed.}

\paragraph{$P(n+1) = -\partial_{X} q$, with $X \in \MOD$.}{The analysis is a combination of the inductive base for $-\partial_{X} q$ and inductive step for $+\partial_{X} q$ where we have
already proved that a rule is applicable (discarded) in $D$ iff is so in $D'$
(or it is not contained in $R'$). Even condition (2.3.2.1) is just the strong
negation of the reason in the above paragraph.}
\end{proof}

\begin{lem}\label{lem:-PartialBox}
	Let $D=(\FACTS,R,>)$ be a defeasible theory such that $D\vdash -\partial_{\Box} l$ and $D'=(\FACTS,R',>')$  be the theory obtained from $D$ where
\begin{align*}
	R' = & \set{r: A(r)\setminus\set{\neg\Box l}\hookrightarrow C(r)|\ r\in R,\ \Box l \not\in A(r)} \\
	R'^{\BEL,\Box} = & \set{r\in R^{\BEL,\Box}|\ \Box l \not\in A(r)}\\
	>' = & > \setminus \set{(r,s),(s,r)\in {>} |\ \Box l\in A(r)}.
\end{align*}
Then $D\equiv D'$.	
\end{lem}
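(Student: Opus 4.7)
The proof mirrors that of Lemma~\ref{lem:+PartialBox} and proceeds by induction on the length of a derivation $P$. The guiding intuition is that, since $D\vdash -\partial_{X} l$ (with $X$ the fixed mode $\Box$ of the hypothesis), any rule $r$ with $X l \in A(r)$ is discarded in $D$ for every mode by clause~1 of Definition~\ref{def:BodyDisc} (or clause~4 of Definition~\ref{def:Conv-disc}), so erasing such rules from $R$ cannot affect provability. Dually, every literal $\neg X l$ occurring in some antecedent is already ``satisfied'' in $D$ by clause~2 of Definition~\ref{def:BodyAppl}, so removing it from $A(r)$ does not change whether $r$ is (body-)applicable. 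Finally, tuples of $>$ involving rules that disappear are inert, so pruning them is harmless.

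For the inductive base, I would examine each possible one-step derivation $P(1)=\pm\partial_{Y} q$ (for $Y\in \MOD$, including the special case $Y=\DES$) and verify that each clause of Definitions~\ref{def:proofCond+DES}--\ref{def:proofCond-X} is satisfied in $D$ iff it is in $D'$. Since $\FACTS=\FACTS'$, clauses~(1) and~(2.1) are immediate. For clause~(2.2), a rule $r$ is applicable at $P(1)$ iff $A(r)=\emptyset$ or $A(r)\subseteq\set{\neg X l}$ in $D$, and the latter becomes the empty antecedent in $D'$; conversely, if $A(r)=\emptyset$ in $D'$, then in $D$ the antecedent was at most $\set{\neg X l}$, which by hypothesis makes $r$ body-applicable. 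For clause~(2.3.1), attacks $s$ with $X l\in A(s)$ are discarded in $D$ and absent from $D'$, while the remaining attacks survive unchanged; for clause~(2.3.2) the pruned superiority pairs only involve rules that are either removed or discarded, hence contribute nothing in either theory.

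For the inductive step, I would assume the equivalence holds for all literals proved in at most $n$ steps and analyze $P(n+1)$. The key arguments are the same as in Lemma~\ref{lem:+PartialBox}: by the induction hypothesis, body-applicability and Conv-applicability of every rule $r$ (together with the provability status of any chain predecessors invoked in clauses~(2.1.2) and~(4.1.2)) coincide in $D$ and $D'$, modulo the special treatment of literals in $\widetilde{X l}$. For these special literals, the hypothesis $D\vdash -\partial_{X}l$ plus Proposition~\ref{prop:CoherenceConsistence} guarantee that $X l\in A(r)$ forces $r$ to be discarded in $D$, while $\neg X l\in A(r)$ is automatically satisfied in $D$; hence the \emph{rules} dropped in passing from $R$ to $R'$ are exactly those already discarded, and the \emph{literals} removed from antecedents are exactly those already provable. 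The treatment of clauses~(2.3.1), (2.3.2.1) and~(2.3.2.2) is then completely analogous to Lemma~\ref{lem:+PartialBox}, invoking the restriction of $>$ only on rules that survive on both sides.

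The only delicate point — and the main obstacle I would expect — is bookkeeping the interaction between the Conv-support set $R^{\BEL,X}$ and the main rule set $R$: a belief rule with $X l$ in its body is dropped from $R^{\BEL,X}$ entirely (since its antecedent is purely literal and cannot be simplified via $\neg X l$), whereas ordinary rules may just have $\neg X l$ excised. Once this case split is kept straight, everything reduces to the same pattern already used for $+\partial_{X}$, and the equivalence $D\equiv D'$ follows in both directions by induction.
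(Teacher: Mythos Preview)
Your proof is correct and follows essentially the same approach as the paper: an induction on derivation length that mirrors Lemma~\ref{lem:+PartialBox}, using $-\partial_{\Box} l$ to show that rules with $\Box l$ in the antecedent are body-discarded while occurrences of $\neg\Box l$ are already satisfied. The paper organises the argument slightly differently, splitting on whether $\Box=\DES$ and, for $\Box\neq\DES$, observing via Proposition~\ref{prop:+PartialTHEN-Partial} that the transformations here are a sub-case of those in Lemma~\ref{lem:+PartialBox} with $m=\non l$; your uniform treatment is equally valid, and your invocation of Proposition~\ref{prop:CoherenceConsistence} is unnecessary since $-\partial_{\Box} l$ alone suffices.
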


\begin{proof}
	We split the proof in two cases, depending on if $\Box\neq\DES$, or
$\Box=\DES$.

As regards the former case, since Proposition~\ref{prop:+PartialTHEN-Partial}
states that $+\partial_{X} m$ implies $-\partial_{X} \non m$ then
modifications on $R'$, $R'^{\BEL,\Box}$, and $>'$ represent a particular
case of Lemma~\ref{lem:+PartialBox} where $m = \non l$.

We now analyse the case when $\Box=\DES$. The analysis is identical to the one
shown for the inductive base of Lemma~\ref{lem:+PartialBox} but for what
follows.

\paragraph{$P(1)=+\partial_{X} q$.}{Case (2)--(ii): $A(r)=\set{\neg \Box l}$
and since $D\vdash -\partial_{\Box} l$ by hypothesis, then if $r$ is
applicable in $D'$ at $P(1)$ then is applicable in $D$ at $P(1)$ as well.

Case (2)--(a): the difference between $D$ and $D'$ is that in $R$ we have
rules with $\Box l$ in the antecedent, and such rules are not in $R'$. Since
$D\vdash -\partial_{\Box} l$ by hypothesis, all rules in $R$ for which there
is no counterpart in $R'$ are discarded in $D$. 

The same modification happens in the inductive step $P(n+1)=+\partial_{X} q$,
where also the sentence `If $a \in \widetilde{\Box l}$, then $s\notin R'$'
becomes `If $a = \Box l$, then $s\notin R'$'.

Finally, the inductive base and inductive step for the negative proof tags are
identical to ones of the previous lemma.}
\end{proof}

Hereafter we consider theories obtained by the transformations of
Lemma~\ref{lem:+PartialBox}. This means that all applicable rules are such
because their antecedents are empty and every rule in $R$ appears also in $R'$
and vice versa, and there are no modifications in the antecedent of rules.

\begin{lem}\label{lem:+PartialBEL}
	Let $D=(\FACTS,R,>)$ be a defeasible theory such that $D\vdash +\partial l$ and $D'=(\FACTS,R',>)$  be the theory obtained from $D$ where
\begin{align}
	R'^{\OBL} = & \set{ A(r) \To_{\OBL} C(r)! l |\ r\in R^{\OBL}[l,n]}\\
	R'^{\INT} = & \set{ A(r) \To_{\INT} C(r)! l |\ r\in R^{\INT}[l,n]}\  \cup\nonumber\\
			    & \set{ A(r) \To_{\INT} C(r) \ominus \non l |\ r\in R^{\INT}[\non l,n]}\\
	R'^{\INTS} = & \set{ A(r) \To_{\INTS} C(r) \ominus \non l |\ r\in R^{\INTS}[\non l,n]}.
\end{align}
Moreover, 
\begin{itemize}
	\item if $D \vdash +\partial_{\OBL}\non l$, then instead of (C1)
		\begin{align}\addtocounter{equation}{-3}
			R'^{\OBL} = & \set{ A(r) \To_{\OBL} C(r)! l |\ r\in R^{\OBL}[l,n]}\  \cup\nonumber\\
				    	& \set{ A(r) \To_{\OBL} C(r) \ominus \non l |\ r\in R^{\OBL}[\non l,n]}.
		\end{align}
	\item if $D \vdash -\partial_{\OBL}\non l$, then instead of (C3) 
		\begin{align}\addtocounter{equation}{1}
			R'^{\INTS} = & \set{ A(r) \To_{\INTS} C(r) \ominus \non l |\ r\in R^{\INTS}[\non l,n]}\ \cup\nonumber\\
		      			 & \set{ A(r) \To_{\INTS} C(r) ! l |\ r\in R^{\INTS}[l,n]}.
		\end{align}
\end{itemize}
Then $D\equiv D'$.
\end{lem}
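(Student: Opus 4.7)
The proof will follow the same inductive schema as the proofs of Lemmas~\ref{lem:+PartialBox} and \ref{lem:-PartialBox}: by induction on the length of a derivation $P$ in $D$ (resp.\ $D'$), I will show that for every literal $q$ and every mode $X \in \MOD$, $D \vdash \pm\partial_X q$ iff $D' \vdash \pm\partial_X q$. Since none of the transformations touches facts, the superiority relation, belief rules, or the antecedents of any rule, the only proof-theoretic clauses that need scrutiny are the applicability/discardability conditions (clauses (2.1.2) for $\OBL$ and (4.1.2) for $\GOAL, \INT, \INTS$ in Definitions~\ref{def:Applicability} and~\ref{def:Discardability}) restricted to rules in $R^\OBL$, $R^\INT$, $R^\INTS$ whose consequent contains $l$ or $\non l$.

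The plan is to establish, as a preliminary step, a set of ``automatic'' facts implied by the hypothesis $D \vdash +\partial l$. By Proposition~\ref{prop:+PartialTHEN-Partial}, parts 1--3, we immediately obtain $D \vdash -\partial \non l$, $D \vdash -\partial_\INT \non l$, and $D \vdash -\partial_\INTS \non l$. These are the key refutations that drive the transformations: in an $\INT$-chain (resp.\ $\INTS$-chain), any occurrence of $\non l$ at position $k$ automatically satisfies clause (4.1.2) with $\blacksquare = \BEL$ (since $+\partial\non\non l = +\partial l$ and $-\partial_\INT \non l$, $-\partial_\INTS\non l$), so the chain can be traversed through $\non l$; this licenses the removal of $\non l$ from $R^\INT$ and $R^\INTS$ chains. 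Dually, in an $\OBL$-chain, clause (2.1.2) requires $-\partial c_k$, so the occurrence of $l$ at position $k$ blocks the chain from going past $l$ (because $+\partial l$ holds); this justifies the truncation at $l$ in $R^\OBL$. The same holds for $\INT$-chains since $\Conflict(\BEL,\INT)$ and we cannot have $+\partial \non l$. For the optional clauses, $+\partial_\OBL \non l$ enables traversing $\non l$ in $\OBL$-chains by clause (2.1.2), while $-\partial_\OBL \non l$ together with $-\partial\non l$ rules out any of the two disjuncts of clause (4.1.2) for $\INTS$ at an occurrence of $l$, justifying the truncation of $\INTS$-chains at $l$.

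Having fixed the correspondence between original and transformed rules, the inductive argument goes as follows. For the base case, rules with empty antecedents yield the same immediate conclusions in $D$ and $D'$, modulo the shift of indices induced by truncation and removal; one checks routinely that when $r \in R[q,i]$ is applicable at index $i$ in $D$ for some mode $X$, the transformed rule $r' \in R'[q,i']$ (with $i'$ being the image of $i$ under truncation/removal) is applicable at $i'$ in $D'$ for $X$, and conversely. For the inductive step, assuming the equivalence holds for all proofs of length up to $n$, we verify that a rule $r$ is applicable (resp.\ discarded) at $P(n{+}1)$ in $D$ iff its image is applicable (resp.\ discarded) at the corresponding position in $D'$. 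The remaining clauses (1), (2.1), (2.3.1), and (2.3.2) of Definitions~\ref{def:proofCond+X} and~\ref{def:proofCond-X} are unaffected by the transformation since the sets of facts, the belief rules, and the superiority relation are preserved, and rules for $q$ and $\non q$ with mode $X \ne \OBL, \INT, \INTS$ are untouched.

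The main obstacle is the bookkeeping around the index shifts. Truncation at $l$ changes the \emph{length} of the consequent but preserves the positions $k \le$ index-of-$l$; removal of $\non l$ shifts all subsequent positions by one. One must verify that the correspondence $R[q,i] \leftrightarrow R'[q,i']$ is a bijection on the relevant rules, and that clauses (2.1.2) and (4.1.2) for positions \emph{after} the truncated/removed literal in $D$ are provable iff the corresponding clauses at the shifted positions in $D'$ are provable. The auxiliary refutations established from Proposition~\ref{prop:+PartialTHEN-Partial} plus $D \vdash +\partial l$ (which, combined with the optional hypotheses on $\pm\partial_\OBL \non l$, fully determine the status of $l$ and $\non l$ in all the relevant chain positions) make this verification a straightforward case analysis on the mode of the rule and on which of $l$, $\non l$ occur in its consequent.
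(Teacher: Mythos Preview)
Your approach is essentially the paper's: induction on derivation length, restricted to clauses (2.1.2) and (4.1.2) of Definitions~\ref{def:Applicability}--\ref{def:Discardability} for the three affected modes. Your explicit front-loading of the auxiliary refutations from Proposition~\ref{prop:+PartialTHEN-Partial} ($-\partial\non l$, $-\partial_{\INT}\non l$, $-\partial_{\INTS}\non l$) is a clean way to organise the case analysis; the paper leaves this implicit and checks the clauses case by case.

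There is one imprecision worth fixing. You claim that clauses (2.3.1) and (2.3.2) of Definitions~\ref{def:proofCond+X}--\ref{def:proofCond-X} ``are unaffected by the transformation since the sets of facts, the belief rules, and the superiority relation are preserved''. That is not quite right: truncation can delete $\non q$ from the consequent of some rule $s$, so $s\in R^{Y}[\non q,j]$ in $D$ while $s\notin R'^{Y}[\non q]$ in $D'$; the set of potential attackers in (2.3) does change. The paper handles this explicitly by observing that whenever this happens, $l$ preceded $\non q$ in $C(s)$, and then the hypotheses force $s$ to be discarded in $D$ at that index (via clause (2.1.2) for $\OBL$, or via $-\partial_{\BEL}\non l$ in clause (4.1.2) for $\INT$, and additionally $-\partial_{\OBL}\non l$ for $\INTS$). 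Your own applicability/discardability correspondence already contains the ingredients for this argument, but the ``bijection on the relevant rules'' framing is misleading for truncation---the map is only a bijection on rules whose occurrence of $q$ (or $\non q$) survives the cut; for the others you must argue discardedness in $D$ directly.
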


\begin{proof} The demonstration follows the inductive base and inductive step
of Lemma~\ref{lem:+PartialBox} where we consider the particular case
$\Box=\BEL$. Since here operations to obtain $D'$ modify only the consequent
of rules, verifying conditions when a given rule is applicable/discarded
reduces to clauses (2.1.2) and (4.1.2) of
Definitions~\ref{def:Applicability}--\ref{def:Discardability}, while
conditions for a rule being body-applicable/discarded are trivially treated.
Moreover, the analysis is narrowed to modalities $\OBL$, $\INT$, and $\INTS$
since rules for the other modalities are not affected by the operations of
the lemma. Finally, notice that the operations of the lemma do not erase
rules from $R$ to $R'$ but it may be the case that, given a rule $r$, if
removal or truncation operate on an element $c_{k}$ in $C(r)$, then $r\in
R[l]$ while $r\notin R'[l]$ for a given literal $l$ (removal of $l$ or
truncation at $c_{k}$).

\paragraph{$P(1)=+\partial_{X} q$, with $X\in \set{\OBL, \INT, \INTS}$.}{
We start by considering condition (2.2) of Definition~\ref{def:proofCond+X}
where a rule $r\in R^{X}[q,i]$ is applicable in $D$ at $i=1$ and $P(1)$. In
both cases when $q=l$ or $q\neq l$, $q$ is the first element of $C(r)$
since either we truncate chains at $l$, or we remove $\non l$ from them.
Therefore, $r$ is applicable in $D'$ as well. In the other direction, if $r$
is applicable in $D'$ at 1 and $P(1)$, then $r\in R$ has either $q$ as the
first element, or only $\non l$ precedes $q$. In the first case $r$ is
trivially applicable, while in the second case the applicability of $r$
follows from the hypothesis that $D\vdash +\partial l$ and $D\vdash
+\partial_{\OBL} \non l$ if $r\in R^{\OBL}$, or $D\vdash +\partial l$ and
$D\vdash -\partial_{\OBL} \non l$ if $r\in R^{\INTS}$.

Concerning condition (2.3.1) of Definition~\ref{def:proofCond+X} there is no
such rule $s$ in $R$, hence $s$ cannot be in $R'$ (we recall that at $P(1)$,
$s$ cannot be discarded in $D$ because that would imply a previous step in the
proof). Regarding the other direction, we have to consider the situation where
there is a rule $s\in R^{Y}[\non q, j]$ which is not in $R'^{Y}[\non q]$. This
is the case when the truncation has operated on $s\in R^{Y}[\non q, j]$ since
$l$ preceded $\non q$ in $C(s)$, making $s$ discarded in $D$ as well (either
when (i) $Y=\OBL$ or $Y=\INT$, or (ii) $D\vdash -\partial_{\OBL} \non l$ and
$Y=\INTS$). 

For (2.3.2) the reasoning is the same of the equivalent case in
Lemma~\ref{lem:+PartialBox} with the additional condition that rule $t$ may be
applicable in $D'$ at $P(1)$ but $q$ appears at index 2 in $C(t)$ in $D$.}

\paragraph{$P(n+1)=+\partial_{X} q$, with $X\in \set{\OBL, \INT, \INTS}$.}{
Again, let us suppose $r\in R[q,i]$ to be applicable in $D$ for $X$ at $i$ and
$P(n+1)$. By hypothesis and clauses (2.1.2) or (4.1.2) of
Definition~\ref{def:Applicability}, we conclude that $c_{k} \neq l$ and $q\neq
\non l$ ($\Conflict(\BEL,\INT)$ and $\Conflict(\BEL,\INTS)$). Thus, $r$ is
applicable in $D'$ by inductive hypothesis. The other direction sees $r\in
R'[q,i]$ applicable in $D'$ and either $\non l$ preceded $q$ in $C(r)$ in $D$,
or not. Since in the first case, the corresponding operation of the lemma is
the removal of $\non l$ from $C(r)$, while in the latter case no operations on
the consequent are done, the applicability of $r$ in $D$ at $P(n+1)$ is
straightforward.


For condition (2.3.1), the only difference between the inductive base is when
there is a rule $s$ in $R^Y[\non q,j]$ but $s\notin R'^Y[\non q,k]$. This
means that $l$ precedes $\non q$ in $C(s)$ in $D$, and thus by hypothesis $s$
is discarded in $D$. Notice that if $q=l$, then $R'^Y[\non l,k]=\emptyset$ for
any $k$ by the removal operation of the lemma, and thus condition (2.3.1) is
vacuously true.}

\paragraph{$P(1)=-\partial_{X} q$ and $P(n+1)=-\partial_{X} q$, with $X\in \MOD$.}{They trivially follow from the inductive base and inductive step.}
\end{proof}

\begin{lem}\label{lem:-PartialBEL}
	Let $D=(\FACTS,R,>)$ be a defeasible theory such that $D\vdash -\partial l$ and $D'=(\FACTS,R',>)$  be the theory obtained from $D$ where
\begin{align*}
	R'^{\INT} = & \set{ A(r) \To_{\INT} C(r)! \non l |\ r\in R^{\INT}[\non l,n]}.
\end{align*}
Moreover, 
\begin{itemize}
	\item if $D \vdash +\partial_{\OBL} l$, then
		\begin{align*}
			R'^{\OBL} = & \set{ A(r) \To_{\OBL} C(r) \ominus l |\ r\in R^{\OBL}[ l,n]};
		\end{align*}
	\item if $D \vdash -\partial_{\OBL} l$, then
		\begin{align*}
			R'^{\INTS} = & \set{ A(r) \To_{\INTS} C(r) ! \non l |\ r\in R^{\INTS}[\non l,n]}.
		\end{align*}
\end{itemize}
Then $D\equiv D'$.
\end{lem}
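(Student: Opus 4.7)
The proof will proceed by induction on the length of a derivation $P$, mirroring the structure of Lemma~\ref{lem:+PartialBEL}. Since the transformation only touches the consequents of rules for modes $\OBL$, $\INT$, and $\INTS$ (leaving antecedents and the superiority relation intact), equivalence of derivations for the other modalities is immediate. For each of the three affected modes I will verify, in both directions, that a rule is applicable or discarded at a given index in $D$ iff the transformed rule is applicable or discarded at the corresponding index in $D'$.

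The soundness of each of the three transformations rests on a single observation. For $R'^{\INT}$: by $\Conflict(\BEL,\INT)$ and clause (4.1.2) of Definition~\ref{def:Discardability}, any rule $r\in R^{\INT}[\non l, j]$ is discarded in $D$ at every index $i>j$, because applicability at such $i$ would require $+\partial \non(\non l) = +\partial l$, contradicting $D\vdash -\partial l$. Truncating at $\non l$ therefore removes only already-discarded positions. The analogous argument applies to $R'^{\INTS}$ under $D\vdash -\partial_{\OBL} l$, exploiting both $\Conflict(\BEL,\INTS)$ and $\Conflict(\OBL,\INTS)$: for any $\blacksquare \in \set{\BEL,\OBL}$ one has $-\partial_\blacksquare l$, so applicability beyond $\non l$ is again impossible. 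For the removal on $R'^{\OBL}$ when $D\vdash +\partial_{\OBL} l$ holds, clause (2.1.2) of Definition~\ref{def:Applicability} at the position of $l$ requires $+\partial_{\OBL} l$ and $-\partial l$; both are supplied by the hypotheses, so $l$ imposes no additional constraint on the applicability of $r$ at subsequent indices, and its deletion is conservative.

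Given these observations, the inductive base $P(1)$ and step $P(n+1)$ follow exactly the template of Lemma~\ref{lem:+PartialBEL}. In the forward direction, a rule applicable in $D$ for $q$ at index $i$ still appears in $D'$ with $q$ at the corresponding index (possibly shifted by the removal of $l$) and the same body; by inductive hypothesis its body conditions are still satisfied in $D'$, and the head-conditions for positions preceding $q$ carry over. The backward direction uses the inductive hypothesis together with the key observation that every rule present in $R'$ descends from a rule in $R$ whose original applicability conditions reduce to those in $D'$ once augmented by the global hypotheses $-\partial l$, $\pm\partial_{\OBL} l$.

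The main obstacle will be the careful bookkeeping of conditions (2.3.1)--(2.3.2) of Definition~\ref{def:proofCond+X}: when a rule $s\in R^{Y}[\non q, j]$ disappears from $R'$ because it was truncated at $\non l$ or had $l$ removed, one must verify that $s$ was already discarded in $D$ for every $\blacksquare$ with $\blacksquare = X$ or $\Conflict(\blacksquare,X)$, so that its absence from $R'$ does not spuriously validate clause (2.3.1) in $D'$. This is precisely the content of the second paragraph applied symmetrically to rules for $\non q$. The refutability cases $-\partial_X q$ then follow by strong negation, reusing the same case analysis with the roles of applicable and discarded rules exchanged.
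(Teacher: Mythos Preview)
Your proposal is correct and follows essentially the same approach as the paper: both reduce the argument to the structure of Lemma~\ref{lem:+PartialBEL}, observing that the transformations here are justified by the same applicability/discardedness clauses once the roles of $l$ and $\non l$ are exchanged. The paper's proof is simply terser---it notes that the operations are a subset of those in Lemma~\ref{lem:+PartialBEL} under the swap $l\leftrightarrow\non l$ and invokes Proposition~\ref{prop:+PartialTHEN-Partial}---while you spell out explicitly why each of the three head-modifications preserves the extension.
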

\begin{proof}
The demonstration is a mere variant of that of Lemma~\ref{lem:+PartialBEL}
since: (i) Proposition~\ref{prop:+PartialTHEN-Partial} states that
$+\partial_{X} m$ implies $-\partial_{X} \non m$ (mode $\DES$ is not
involved), and (ii) operations of the lemma are a subset of those of
Lemma~\ref{lem:+PartialBEL} where we switch $l$ with $\non l$, and the other
way around.
\end{proof}

\begin{lem}\label{lem:+PartialObl}
	Let $D=(\FACTS,R,>)$ be a defeasible theory such that $D\vdash +\partial_{\OBL} l$ and $D'=(\FACTS,R',>)$  be the theory obtained from $D$ where
\begin{align}\addtocounter{equation}{-3}
	R'^{\OBL} = & \set{ A(r) \To_{\OBL} C(r) ! \non l \ominus \non l |\ r\in R^{\OBL}[\non l,n]}\\
	R'^{\INTS} = & \set{ A(r) \To_{\INTS} C(r) \ominus \non l |\ r\in R^{\INTS}[\non l,n]}.
\end{align}
Moreover, 
\begin{itemize}
	\item if $D \vdash -\partial l$, then instead of (C1)
		\begin{align}\addtocounter{equation}{-2}
			R'^{\OBL} = & \set{ A(r) \To_{\OBL} C(r) ! \non l \ominus \non l |\ r\in R^{\OBL}[\non l,n]}\ \cup\nonumber \\
						& \set{ A(r) \To_{\OBL} C(r) \ominus l |\ r\in R^{\OBL}[l,n]};
		\end{align}
	\item if $D \vdash -\partial \non l$, then instead of (C2)
		\begin{align}
			R'^{\INTS} = & \set{ A(r) \To_{\INTS} C(r) \ominus \non l |\ r\in R^{\INTS}[\non l,n]}\ \cup \nonumber\\
						 & \set{ A(r) \To_{\INTS} C(r) ! l |\ r\in R^{\INTS}[l,n]}.
		\end{align}
\end{itemize}
Then $D\equiv D'$.
\end{lem}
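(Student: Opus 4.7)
The plan is to prove this lemma by induction on the length of a derivation $P$, following the template established in Lemmas~\ref{lem:+PartialBox} and \ref{lem:+PartialBEL}. Since the four displayed transformations touch only the consequents of rules in $R^{\OBL}$ and $R^{\INTS}$ (and only with respect to $l$ and $\non l$), it suffices to verify that the proof conditions for $\pm\partial_{\OBL}$ and $\pm\partial_{\INTS}$ transfer between $D$ and $D'$; for the other modalities applicability and discardedness reduce immediately to the inductive hypothesis.

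The key consequent-side observations that drive every case are:
\begin{enumerate}
\item For $r\in R^{\OBL}[\non l,n]$: by Proposition~\ref{prop:CoherenceConsistence}(2) and the hypothesis $+\partial_{\OBL}l$, we have $-\partial_{\OBL}\non l$. Hence condition (2.1.2) of Definition~\ref{def:Applicability} fails at $\non l$, so $r$ is discarded for $\OBL$ at every index $i>n$. Truncating at $\non l$ is therefore safe; removing $\non l$ itself is safe because $\non l$ cannot be derived as $\OBL$ via $r$.
\item For $r\in R^{\INTS}[\non l,n]$: since $+\partial_{\OBL}l$ holds and $\Conflict(\OBL,\INTS)$, we have $+\partial_{\OBL}\non{\non l}$, and by Proposition~\ref{prop:+PartialTHEN-Partial}(3) $-\partial_{\INTS}\non l$. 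Thus condition (4.1.2) for $\INTS$ is satisfied at position $n$, meaning $\non l$ is ``transparent'' in the chain and can be excised while preserving applicability at later positions.
\item If $D\vdash -\partial l$ and $r\in R^{\OBL}[l,n]$: both $+\partial_{\OBL}l$ (hypothesis) and $-\partial l$ (assumption) hold, so condition (2.1.2) is satisfied at $l$; hence $l$ is transparent in obligation chains and can be removed.
\item If $D\vdash -\partial \non l$ and $r\in R^{\INTS}[l,n]$: neither $+\partial\non l$ (ruled out by $-\partial\non l$) nor $+\partial_{\OBL}\non l$ (ruled out by Proposition~\ref{prop:CoherenceConsistence}(2) applied to $+\partial_{\OBL}l$) holds, so (4.1.2) fails at $l$ and $r$ is discarded for $\INTS$ past $l$; truncating at $l$ is safe.
\end{enumerate}

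With these four observations in hand, the induction proceeds exactly as in Lemma~\ref{lem:+PartialBEL}. In the base case I check that a rule $r$ with empty body is applicable (resp. discarded) for $\OBL$ or $\INTS$ at index $i$ in $D$ iff it is applicable (resp. discarded) for the same mode at the corresponding index in $D'$; the forward direction uses observations (1)--(4) to argue that the removed/truncated elements cannot block applicability in $D'$, while the converse uses the same observations to lift applicability in $D'$ back to $D$ by ``re-inserting'' the transparent or discarded positions. For condition (2.3), rules $s\in R^Y[\non q,j]$ that disappear or get shortened in $D'$ are precisely those that were already discarded in $D$ by observations (1)--(4), so attacks and counterattacks are preserved in both directions. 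The inductive step is verbatim the same argument, now invoking the inductive hypothesis on premises of rules and on earlier elements in $\odot$-chains, together with Proposition~\ref{prop:+PartialTHEN-Partial}(1) to handle the passage from $+\partial_{\OBL}l$ in the hypothesis to the negative tags used inside (2.1.2) and (4.1.2).

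The main obstacle will be bookkeeping the index shifts introduced by the removal operator $\ominus\non l$ (and $\ominus l$ in the special case): a literal $c_i$ occurring at position $i$ in $C(r)$ now occurs at position $i-1$ in $C(r)\ominus\non l$, so the quantifier ``for all $k<i$'' in conditions (2.1.2) and (4.1.2) must be matched up correctly on the two sides. The transparency arguments (1)--(4) are exactly what guarantee that the missing position does not carry any unmet obligation in the shifted chain, so the $k<i$ check in $D$ reduces to the $k<i-1$ check in $D'$; once this is carefully recorded, all remaining verifications are routine.
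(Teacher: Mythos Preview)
Your proposal is correct and follows essentially the same approach as the paper: both argue that the lemma is a variant of Lemma~\ref{lem:+PartialBEL}, carried out by induction on the length of a derivation, with the only new ingredient being the combined truncation-then-removal $!\non l\ominus\non l$ on obligation chains (your observation~(1)). Your four itemised observations spell out explicitly what the paper compresses into a single sentence about contrary-to-duty chains; one small slip is that in observations~(1) and~(4) you cite Proposition~\ref{prop:CoherenceConsistence}(2) to obtain $-\partial_{\OBL}\non l$, whereas that proposition only excludes $+\partial_{\OBL}\non l$---the negative tag you actually need comes from Proposition~\ref{prop:+PartialTHEN-Partial}(1), which you do invoke correctly later in the proposal.
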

\begin{proof}
Again, the proof is a variant of that of Lemma~\ref{lem:+PartialBEL}
that differs only when truncation and removal operate on a consequent at the
same time.

A CTD is relevant whenever its elements are proved as obligations.
Consequently, if $D$ proves $\OBL l$, then $\OBL \non l$ cannot hold. If this
is the case, then $\OBL \non l $ cannot be violated and elements following
$\non l$ in obligation rules cannot be triggered. Nonetheless, the inductive
base and inductive step do not significantly differ from those of
Lemma~\ref{lem:+PartialBEL}. In fact, even operation (1) involving truncation
and removal of $\non l$ does not affect the equivalence of conditions for
being applicable/discarded between $D$ and $D'$.
\end{proof}

Proofs for Lemmas~\ref{lem:-PartialObl}--\ref{lem:-PartialGIIS} are
not reported. As stated for Lemma~\ref{lem:+PartialObl}, they are
variants of that for Lemma~\ref{lem:+PartialBEL} where the modifications concern the set of rules on which we operate. The underlying motivation is
that truncation and removal operations affect when a rule is
applicable/discarded as shown before where we have proved that, given a rule
$s$ and a literal $\non q$, it may be the case that $\non q \notin C(s)$ in
$R'$ while the opposite holds in $R$. Such modifications reflect
only the nature of the operations of truncation and removal while they do not depend on the mode of the rule involved.

\begin{lem}\label{lem:-PartialObl}
	Let $D=(\FACTS,R,>)$ be a defeasible theory such that $D\vdash -\partial_{\OBL} l$ and $D'=(\FACTS,R',>)$ be the theory obtained from $D$ where
\begin{align*}
	R'^{\OBL} = & \set{ A(r) \To_{\OBL} C(r) ! l \ominus l |\ r\in R^{\OBL}[l,n]}.
\end{align*}
Moreover, 
\begin{itemize}
	\item if $D \vdash -\partial l$, then
		\begin{align*}
			R'^{\INTS} = & \set{ A(r) \To_{\INTS} C(r) ! \non l |\ r\in R^{\INTS}[\non l,n]}.
		\end{align*}
\end{itemize}
Then $D\equiv D'$.
\end{lem}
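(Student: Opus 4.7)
The plan is to follow the same inductive scheme used for Lemmas~\ref{lem:+PartialBEL} and~\ref{lem:+PartialObl}, namely induction on the length of a derivation $P$, verifying case by case that every proof tag $\pm\partial_{X}q$ derivable in $D$ is derivable in $D'$ and vice versa. The operations of the lemma only touch consequents (of obligation rules, and conditionally of social intention rules); antecedents are unchanged, so body-applicability and Conv-applicability of every rule are preserved verbatim. The key observation driving the argument is that the hypothesis $D\vdash -\partial_{\OBL}l$ makes clause (2.1.2) of Definition~\ref{def:Applicability} permanently fail at the position of $l$ inside any obligation chain, so no obligation rule can be applicable at any index strictly past~$l$; truncating the chain at $l$ and then removing $l$ itself cannot therefore eliminate any applicability that was realised in $D$.

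First I would handle the transformation on $R^{\OBL}$. For $r\in R^{\OBL}[l,i]$, the condition $+\partial_{\OBL}c_{k}\in P$ for $k<j$ required by (2.1.2) cannot hold for any $j\geq i$ (it would require $+\partial_{\OBL}l$, contradicting $D\vdash-\partial_{\OBL}l$ via Proposition~\ref{prop:CoherenceConsistence}), so $r$ is discarded for $\OBL$ at every index from $i$ onwards in $D$. Consequently, cutting off the tail starting at $l$ loses no applicable instance of $r$. Conversely, at any index $k<i$ the applicability condition (2.1.2) for $r$ only inspects literals preceding $c_{k}$, which the operation leaves untouched; hence applicability of $r$ at positions before $l$ coincides in $D$ and $D'$. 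The role of such rules as attackers in clause (2.3.1) or counterattackers in clause (2.3.2) of Definitions~\ref{def:proofCond+X} and~\ref{def:proofCond-X} is then handled as in the corresponding case of Lemma~\ref{lem:+PartialBEL}: every rule lost by the modification was already discarded in $D$.

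The conditional transformation of $R^{\INTS}$ when $D\vdash-\partial l$ is justified analogously. To move past a position where $\non l$ occurs in an $\INTS$ chain $s\in R^{\INTS}[\non l,i]$, clause (4.1.2) of Definition~\ref{def:Applicability} requires $+\partial_{\blacksquare}l$ for some $\blacksquare$ with $\Conflict(\blacksquare,\INTS)$, that is either $+\partial l$ or $+\partial_{\OBL}l$. Both are ruled out by the combined hypotheses $D\vdash-\partial l$ and $D\vdash-\partial_{\OBL}l$ together with Proposition~\ref{prop:CoherenceConsistence}. Hence $s$ is discarded for $\INTS$ at every index past $\non l$ in $D$, and truncating at $\non l$ preserves equivalence.

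The hardest step will be the inductive step for $+\partial_{\INTS}q$ (or $+\partial_{\OBL}q$) when $q$ occurs strictly after the truncation point in some chain: one must check that no other derivation in $D'$ was silently relying on the removed tail either as an attacker via (2.3.1) or as a counterattacker via (2.3.2). As in Lemma~\ref{lem:+PartialBEL}, this reduces to the observation that every rule instance deleted by the operation was already discarded in $D$, which follows by the inductive hypothesis applied to the premises needed for the earlier positions in the chain. The negative tags $-\partial_{X}q$ are then obtained by the strong negation principle and require no additional argument, closing the induction and yielding $E(D)=E(D')$.
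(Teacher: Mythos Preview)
Your proposal is correct and takes essentially the same approach as the paper: the paper explicitly omits the proof of Lemma~\ref{lem:-PartialObl}, stating only that it is a variant of the proof of Lemma~\ref{lem:+PartialBEL}, and you carry out precisely that inductive variant, identifying the relevant clauses (2.1.2) and (4.1.2) of Definitions~\ref{def:Applicability}--\ref{def:Discardability} that the hypotheses $-\partial_{\OBL}l$ and $-\partial l$ discharge. One small imprecision: your sentence ``$r$ is discarded for $\OBL$ at every index from $i$ onwards'' should read ``at every index \emph{strictly after} $i$'', since clause (2.1.2) at index $i$ itself does not yet require $+\partial_{\OBL}l$; the removal of $l$ at index~$i$ is instead justified by the hypothesis $D\vdash-\partial_{\OBL}l$ directly, in the same spirit as the combined truncation-removal handled in Lemma~\ref{lem:+PartialObl}.
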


\begin{lem}\label{lem:+PartialDes}
	Let $D=(\FACTS,R,>)$ be a defeasible theory such that $D\vdash +\partial_{\DES} l$, $D\vdash +\partial_{\DES} \non l$, and $D'=(\FACTS,R',>)$ be the theory obtained from $D$ where
\begin{align*}
	R'^{\GOAL} = & \set{ A(r) \To_{\GOAL} C(r) ! l \ominus l |\ r\in R^{\GOAL}[l,n]}\ \cup\\
				 & \set{ A(r) \To_{\GOAL} C(r) ! \non l \ominus \non l |\ r\in R^{\GOAL}[\non l,n]}.
\end{align*}
Then $D\equiv D'$.
\end{lem}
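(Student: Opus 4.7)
The transformation modifies only $R^\GOAL$, leaving $\FACTS$, the superiority relation, and every other rule set unchanged. Consequently, $D \vdash \pm\partial_\Box q$ iff $D' \vdash \pm\partial_\Box q$ for each $\Box \in \MOD\setminus\set{\GOAL}$ trivially, and the entire task reduces to proving $D \vdash \pm\partial_\GOAL q$ iff $D' \vdash \pm\partial_\GOAL q$ for every literal $q$.

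My plan is to mirror the inductive argument of Lemma~\ref{lem:+PartialBEL}, specialised to the goal mode. The structural key is that the hypotheses $+\partial_\DES l$ and $+\partial_\DES \non l$ witness body-applicable outcome rules, say $r$ containing $l$ and $s$ containing $\non l$, whose $\GOAL$-copies mutually attack each other at the positions where $l$ and $\non l$ appear in their chains. Combined with clause (4.1.2) of Definition~\ref{def:Applicability} for $\GOAL$---which admits no conflicting modality and therefore requires the preceding chain elements to be refuted as goals in order for a rule to become applicable at a later index---this ensures that no literal occurring at or after the position of $l$ (resp.\ $\non l$) in a $\GOAL$-chain can contribute to a $+\partial_\GOAL$ derivation in $D$. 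Hence truncating at $l$ (resp.\ $\non l$) and then removing it deletes only chain fragments that are inessential to the extension.

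The induction on the length of a derivation $P$ then proceeds as in Lemma~\ref{lem:+PartialBEL}. In the base case, I check that the proof conditions of Definitions~\ref{def:proofCond+X} and~\ref{def:proofCond-X} for $\pm\partial_\GOAL q$ at length one are satisfied in $D$ iff in $D'$, using that the preserved prefixes of transformed rules are identical in the two theories, and that the deleted suffixes could not have been used in a length-one proof. In the inductive step, the induction hypothesis supplies $\pm\partial_\GOAL c$ for the preserved preceding chain elements of any rule $r$, so clauses (4.1.2) of Definitions~\ref{def:Applicability} and~\ref{def:Discardability} evaluate in the same way at each preserved index in $D$ and $D'$.

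The main obstacle I anticipate is the clause (2.3.2) analysis of Definition~\ref{def:proofCond+X}: when a defeater $t$ for an attacker $s$ of a candidate goal $q$ comes from a rule whose relevant consequent position lies at or past $l$ (or $\non l$), I must verify, using the mutual-attack structure described above, that $s$ is already discarded for $\GOAL$ in $D$ via clause (2.3.1), so that its absence from $D'$---or the disappearance of the corresponding $t$ in $D'$---does not upset condition (2.3). It is precisely here that the joint hypothesis $+\partial_\DES l$ and $+\partial_\DES \non l$ is used essentially, since it is what forces both $-\partial_\GOAL l$ and $-\partial_\GOAL \non l$ and prevents any productive goal chain from passing through $l$ or $\non l$.
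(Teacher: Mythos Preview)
Your proposal is correct and takes essentially the same approach as the paper, which does not give a separate proof for this lemma but simply declares it a variant of the inductive argument for Lemma~\ref{lem:+PartialBEL}; your plan is precisely that specialisation, and you in fact supply more detail than the paper does on how the joint hypotheses $+\partial_{\DES}l$ and $+\partial_{\DES}\non l$ interact with clause~(4.1.2) for $\GOAL$. One small caveat: the reduction in your first paragraph is not quite as clean as stated, since modal literals $\GOAL m$ and $\neg\GOAL m$ may occur in the antecedents of rules of \emph{any} mode (Definition~\ref{def:BodyAppl}), so provability for modes other than $\GOAL$ is not \emph{a priori} independent of $R^{\GOAL}$---but the simultaneous induction on derivation length that you go on to outline handles all modes together and is unaffected by this.
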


\begin{lem}\label{lem:-PartialDes}
	Let $D=(\FACTS,R,>)$ be a defeasible theory such that $D\vdash -\partial_{\DES} l$ and $D'=(\FACTS,R',>)$ be the theory obtained from $D$ where
\begin{align*}
	R'^{\DES} = & \set{ A(r) \To_{\DES} C(r) \ominus l |\ r\in R^{\DES}[l,n]}\\
	R'^{\GOAL} = & \set{ A(r) \To_{\GOAL} C(r) \ominus l |\ r\in R^{\GOAL}[l,n]}.
\end{align*}
Then $D\equiv D'$.
\end{lem}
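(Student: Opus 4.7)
The plan is to follow the inductive template used in Lemmas \ref{lem:+PartialBEL}--\ref{lem:+PartialObl}: induction on the length of a derivation $P$, with case analysis on the tagged literal $P(n+1)$. The key preliminary observation is that by Proposition \ref{prop:+PartialTHEN-Partial}(5), the hypothesis $D \vdash -\partial_{\DES} l$ already yields $D \vdash -\partial_{\GOAL} l$, so we have both $l$ refuted as a desire and as a goal to work with. Note moreover that only $R^{\DES}$ and $R^{\GOAL}$ are touched by the transformation (the facts, the superiority relation, and the other rule sets are untouched), so for any tagged literal $\pm\partial_X q$ with $X \notin \set{\DES, \GOAL}$, applicability/discardability for $X$ does not invoke a modified rule, and equivalence is immediate by inductive hypothesis.

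For the cases $\pm\partial_{\DES}q$ and $\pm\partial_{\GOAL}q$ I would split on whether $q = l$ or $q \neq l$. When $q = l$: by construction every rule with $l$ in its chain has $l$ contracted out, so $R'^{\DES}[l] = R'^{\GOAL}[l] = \emptyset$ at all positions, and the refutations $-\partial_{\DES} l$, $-\partial_{\GOAL} l$ follow vacuously in $D'$ from clause (2.2) of Definitions \ref{def:proofCond-DES} and \ref{def:proofCond-X}; in the other direction the same applicable/defeated analysis that gave $D \vdash -\partial_{\DES} l$ in $D$ transfers to $D'$ because no new opposing rule has been introduced.

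When $q \neq l$ the heart of the argument is that the removal of $l$ from the chain neither enables nor disables the derivation of $q$ at any remaining position. For desire rules this is easy: Definition \ref{def:Applicability}(3) makes applicability for $\DES$ independent of the position of the derived literal within the $\odot$-chain, so contracting $l$ out leaves every other element still a candidate desire exactly as before. For goal rules, Definition \ref{def:Applicability}(4.1.2) requires, for $q$ at position $i$, that $-\partial_{\GOAL} c_k$ hold for every $c_k$ at position $k < i$; if $l$ occupied some such $k$, then in $D$ the condition is satisfied precisely because Proposition \ref{prop:+PartialTHEN-Partial}(5) gives $-\partial_{\GOAL} l$, while in $D'$ the literal $l$ is simply absent from the chain and the quantification over previous elements is over the remaining ones, inductively agreeing between $D$ and $D'$. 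The strong-negation duality between Definitions \ref{def:Applicability} and \ref{def:Discardability} then propagates the same equivalence to the discarded conditions.

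The main obstacle will be the bookkeeping in clauses (2.3.1)--(2.3.2) of the proof conditions: one must verify that the collection of attacking rules $R^{Y}[\non q, j]$ and counterattacking rules $R^{T}[q,k]$ is effectively in bijection between $D$ and $D'$. This holds because any such rule lives in $R^{\DES}$ or $R^{\GOAL}$ with $\non q \neq \non l$ (or $q \neq l$) at the relevant position, and the removal of $l$ only shifts other indices leftward without adding or deleting any rule; since the superiority relation is unchanged, the $>$-comparisons between surviving rules are preserved verbatim. Once these bijections are established, the inductive steps for $\pm\partial_{\DES}$ and $\pm\partial_{\GOAL}$ close exactly as in the proof of Lemma \ref{lem:+PartialBEL}.
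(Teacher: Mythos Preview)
Your proposal is correct and follows precisely the approach the paper intends: the paper explicitly omits a detailed proof of this lemma, stating only that it is a variant of Lemma~\ref{lem:+PartialBEL} where the modifications concern the set of rules on which one operates. Your write-up in fact supplies more detail than the paper does, correctly isolating the two key points---position-independence of applicability for $\DES$, and the use of Proposition~\ref{prop:+PartialTHEN-Partial}(5) to discharge the chain condition for $\GOAL$ at the removed literal---and handling the attack/counterattack bookkeeping exactly as in the template lemma.
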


\begin{lem}\label{lem:+PartialGIIS}
	Let $D=(\FACTS,R,>)$ be a defeasible theory such that $D\vdash +\partial_{X} l$, with $X\in \set{\GOAL, \INT, \INTS}$, and $D'=(\FACTS,R',>)$ be the theory obtained from $D$ where
\begin{align*}
	R'^{X} = & \set{ A(r) \To_{X} C(r) ! l |\ r\in R^{X}[l,n]}\ \cup\\
	 		 & \set{ A(r) \To_{X} C(r) \ominus \non l |\ r\in R^{X}[\non l,n]}.
\end{align*}
Then $D\equiv D'$.
\end{lem}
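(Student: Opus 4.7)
The plan is to proceed by induction on the length of a derivation $P$, following the template already established in the proofs of Lemmas~\ref{lem:+PartialBEL}, \ref{lem:+PartialObl}, and \ref{lem:+PartialDes}. The conceptual justification for both transformations stems from Proposition~\ref{prop:+PartialTHEN-Partial}(1): since $D\vdash +\partial_X l$ and $X\in\{\GOAL,\INT,\INTS\}\subseteq \MOD\setminus\{\DES\}$, we have $D\vdash -\partial_X \non l$. Combined with clause (4.1.2) of Definitions~\ref{def:Applicability} and~\ref{def:Discardability}, this validates both moves. For any $r\in R^X[l,n]$ with $l=c_k$, every tail position $j>k$ satisfies the condition $+\partial_X c_k \in P$, and so $r$ is discarded for $X$ at every such $j$ already in $D$; truncation at $l$ therefore loses no $X$-conclusion. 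Dually, for any $r\in R^X[\non l,n]$ with $\non l = c_k$, having $-\partial_X \non l$ in $D$ makes $r$ applicable at any index $j>k$ exactly when the preceding literals (other than $\non l$) are already proved as $X$-conclusions, so removing $\non l$ from the chain correctly lets the machinery attempt the next element.

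For the inductive base $P(1)=\pm\partial_Y q$, the operations only modify $R^X$, so equivalence is immediate for $Y\neq X$. For $Y=X$, I verify that (i) each rule is body-applicable in $D$ iff it is body-applicable in $D'$ (antecedents are untouched); (ii) after accounting for the index-shift produced by truncation and removal, clause (4.1.2) of Definition~\ref{def:Applicability} transfers between $D$ and $D'$; and (iii) clauses (2.3.1)--(2.3.2) of Definition~\ref{def:proofCond+X} transfer, because any $s\in R^X[\non q,j]$ lost in the move to $R'$ (either by truncation past $l$ before $\non q$, or by removal of $\non l$ coinciding with $\non q$) is already discarded for $X$ in $D$ by the hypothesis $D\vdash +\partial_X l$. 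In the inductive step $P(n+1)$, the inductive hypothesis grants equivalence of all previously derived tags, and the same case analysis on rule status goes through unchanged.

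The main obstacle will be the bookkeeping of indices after truncation and removal: the position of a literal $c_j$ in the original chain $C(r)$ may shift in $C(r)!l$ or $C(r)\ominus \non l$, and I need to check that clause (4.1.2) of Definition~\ref{def:Discardability} is discharged by the \emph{same} earlier literals in both theories (which it is, since the relative order of surviving literals is preserved and the deleted/truncated-away literals are, by hypothesis, already accounted for by $+\partial_X l$ and $-\partial_X \non l$). The negative cases $P(1)=-\partial_Y q$ and $P(n+1)=-\partial_Y q$ then follow from the positive ones by strong negation, exactly as in the earlier lemmas of this appendix.
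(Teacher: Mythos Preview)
Your proposal is correct and takes essentially the same approach as the paper: the paper explicitly declines to spell out the proof of this lemma, stating only that it is a variant of the proof of Lemma~\ref{lem:+PartialBEL} where the modifications concern which rule sets are operated on. Your induction on derivation length, appeal to Proposition~\ref{prop:+PartialTHEN-Partial}(1) to obtain $-\partial_X\non l$, and verification that clause~(4.1.2) of Definitions~\ref{def:Applicability}--\ref{def:Discardability} is preserved under truncation at $l$ and removal of $\non l$ exactly instantiate that template.
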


\begin{lem}\label{lem:-PartialGIIS}
	Let $D=(\FACTS,R,>)$ be a defeasible theory such that $D\vdash -\partial_{X} l$, with $X\in \set{\GOAL, \INT, \INTS}$, and $D'=(\FACTS,R',>)$ be the theory obtained from $D$ where
\begin{align*}
	R'^{X} = & \set{ A(r) \To_{X} C(r) \ominus l |\ r\in R^{X}[l,n]}.
\end{align*}
Then $D\equiv D'$.
\end{lem}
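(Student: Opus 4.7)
The proof proceeds by induction on the length of a derivation $P$, following the blueprint used for Lemmas~\ref{lem:+PartialBox} and~\ref{lem:+PartialGIIS}. The only change from $D$ to $D'$ is the excision of the literal $l$ from the consequent of each rule $r \in R^{X}[l,n]$; facts, the superiority relation, rules of modes other than $X$, and the conversion set $R^{\BEL,X}$ all coincide. Consequently, the only derivations whose justifications can possibly change are those that rely on the applicability (or the discardedness) for mode $X$ of a rule whose chain originally traversed $l$.

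For the inductive base $P(1) = \pm\partial_{Y} q$, the case $Y \neq X$ is immediate, since every ingredient of the proof conditions for $Y$ is unchanged. When $Y = X$ and $q = l$, the tag $-\partial_{X} l$ holds in both theories: in $D$ by hypothesis, and in $D'$ because $R'^{X}[l] = \emptyset$ (every rule in $R^{X}[l]$ has been shortened by $\ominus\, l$). When $Y = X$ and $q \neq l$, the key task is to compare the applicability of a modified rule $r' \in R'^{X}$ at its new position $j-1$ with that of $r \in R^{X}[l,n]$ at the original position $j$ (where $l$ occupied position $k < j$). Condition (4.1.2) of Definition~\ref{def:Applicability} at every $c_{i}$ with $i \neq k$ is the same in the two theories; the only condition that is suppressed by the excision is the one at $l$ itself, and this is precisely what the hypothesis $D \vdash -\partial_{X} l$ supplies — together, for $X\in\set{\INT,\INTS}$, with the conflict-driven $+\partial_{\blacksquare}\non l$ inherited from Proposition~\ref{prop:+PartialTHEN-Partial} whenever such a tag was used in the derivation of $-\partial_{X} l$. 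Opposing rules $s \in R[\non q]$, their dismissal across the modes conflicting with $X$, and the superiority relation are all untouched, so the rest of the analysis is identical to that in Lemma~\ref{lem:+PartialGIIS}.

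For the inductive step, assuming equivalence of $D$ and $D'$ on all derivations of length at most $n$, one walks through the clauses of Definitions~\ref{def:proofCond+DES}--\ref{def:proofCond-X} at step $n+1$, translating every reference to a previously established tag via the inductive hypothesis and, whenever the applicability of a modified rule arises, invoking the base-case argument to justify dropping the check at $l$. The symmetric negative-tag analysis follows by the principle of strong negation, exactly as in the preceding lemmas.

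The main obstacle is the careful bookkeeping of the index shift after $l$ is excised — specifically, verifying that the requirement (4.1.2) at the removed position is exactly entailed by the lemma's hypothesis and that no genuinely new applicability is created at later positions. Once this is checked uniformly for the three modes $\GOAL$, $\INT$, $\INTS$ (with their respective conflict sets, noting that $\GOAL$ has none and therefore reduces (4.1.2) at the removed literal to the single tag $-\partial_{\GOAL} l$), the remainder of the proof is a straightforward transcription of the argument given for Lemma~\ref{lem:+PartialGIIS}.
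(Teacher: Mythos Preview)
Your overall strategy—induction on the length of a derivation, mirroring the template of Lemmas~\ref{lem:+PartialBox} and~\ref{lem:+PartialBEL}—is precisely what the paper does; in fact the paper gives no separate argument for this lemma at all, merely declaring it a variant of Lemma~\ref{lem:+PartialBEL}. Your identification of the crux (that excising $l$ suppresses the check (4.1.2) at that position and one must verify ``no genuinely new applicability is created'') is also on target, and your treatment of the $\GOAL$ case is correct since there clause~(4.1.2) collapses to the single tag $-\partial_{\GOAL} l$ supplied by hypothesis.

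There is, however, a genuine gap in your handling of $X\in\set{\INT,\INTS}$. At the excised position, clause~(4.1.2) of Definition~\ref{def:Applicability} requires \emph{both} $-\partial_{X} l$ \emph{and} $+\partial_{\blacksquare}\non l$ for some $\blacksquare$ with $\Conflict(\blacksquare,X)$. Your hypothesis gives only the first conjunct; your appeal to Proposition~\ref{prop:+PartialTHEN-Partial} does not deliver the second, since nothing in that proposition derives $+\partial_{\BEL}\non l$ (or $+\partial_{\OBL}\non l$) from $-\partial_{X} l$. Concretely, take $r\colon\emptyset\To_{\INT} l\odot c$ and $s\colon\emptyset\To_{\INT}\non l$ with $s>r$ and no belief rules: then $D\vdash-\partial_{\INT} l$ (via $s$) and $D\vdash-\partial_{\BEL}\non l$, so in $D$ rule $r$ is \emph{discarded} at $c$ by clause~(4.1.2) of Definition~\ref{def:Discardability}, giving $D\vdash-\partial_{\INT} c$; yet after removing $l$ the shortened rule is applicable at $c$ in $D'$, giving $D'\vdash+\partial_{\INT} c$. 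So the phrase ``no genuinely new applicability is created'' is exactly the point your argument fails to secure. To close the gap one must either argue that the companion transformation of Lemma~\ref{lem:-PartialBEL} (truncation at $l$ triggered by $-\partial_{\BEL}\non l$) is applied in tandem, or give a finer case analysis linking the \emph{reason} for $-\partial_{X} l$ to the status of the conflicting-mode literals; the paper's one-line ``variant'' remark leaves this just as implicit as you do.
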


\begin{lem}\label{lem:Prove}
	Let $D=(\FACTS,R,>)$ be a defeasible theory and $l\in \LIT$ such that (i) $Xl \notin \FACTS$, (ii) $\neg X l\not\in \FACTS$ and $Y \non l\not\in\FACTS$ with $Y=X$ or $\Aconf{Y}{X}$, (iii) $\exists r\in
R^{X}[l,1] \cup R^{\BEL, X}[l,1]$, (iv) $A(r)=\emptyset$, and (v)
$R^{X}[\non l] \cup R^{\BEL,X}[\non l] \cup R^{Y}[\non l]\setminus
R_{infd}\subseteq r_{inf}$, with $X \in \MOD \setminus \set{\DES}$. Then
$D\vdash+\partial_{X} l$.
\end{lem}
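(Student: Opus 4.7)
The plan is to verify directly the clauses of Definition~\ref{def:proofCond+X} for $+\partial_{X} l$, given the hypotheses, and to explain how the set $R_{\infd}$ handles the auxiliary counter-attack case in clause (2.3).

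First, I would dispose of clause (1): by hypothesis (i), $X l \notin \FACTS$, so clause (1) fails and we must argue via clause (2). Clause (2.1) is immediate from (ii), which excludes $\neg X l$ from $\FACTS$ together with any factual conflict $Y \non l \in \FACTS$ for $Y = X$ or $\Aconf{Y}{X}$. For clause (2.2), take $r$ as given by (iii). Since $A(r) = \emptyset$ by (iv), the rule is trivially body-applicable (Definition~\ref{def:BodyAppl} is vacuously satisfied); thus if $r \in R^{X}[l,1]$ it is applicable for $X$ by clause (2.1.1), (3.1) or (4.1.1) of Definition~\ref{def:Applicability} (depending on which mode $X$ is), while if $r \in R^{\BEL,X}[l,1]$ it is Conv-applicable by Definition~\ref{def:Conv-appl} (clauses (1)--(3) are immediate from membership in $R^{\BEL,X}$, and clause (4) is vacuous since $A(r)$ is empty). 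Being at index $1$, no previous element in $C(r)$ needs to have been violated/discarded, so clause (2.1.2) and (4.1.2) of Definition~\ref{def:Applicability} impose no further obligation.

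The main work, and the step I expect to be the most delicate, is clause (2.3). Fix any $s \in R[\non l, j]$ that could be used to attack $l$; by the proof condition the relevant modes for $s$ are $\blacksquare = X$ or those with $\Aconf{\blacksquare}{X}$, that is, $s$ is drawn from $R^{X}[\non l] \cup R^{\BEL,X}[\non l] \cup R^{Y}[\non l]$ for some $Y$ in conflict with $X$. Hypothesis (v) tells us that every such rule $s$ is either in $R_{\infd}$ or in $r_{\inf}$. In the latter case, $r > s$ by construction of $r_{\inf}$, so clause (2.3.2.1) of Definition~\ref{def:proofCond+X} is satisfied by taking $t := r$ (with $\Diamond := X$ and, when needed, using that $r > s$ covers the conversion subcases, since the superiority relation has been extended in Section~\ref{subsec:notation} to account for $\Convert$ and $\Conflict$). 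In the former case, $s \in R_{\infd}$ means that $s$ has already been recorded as dominated by some other applicable rule $t$ with $t > s$; that $t$ witnesses clause (2.3.2.1) (or (2.3.2.2) when the dominance was registered through a conflict between modalities).

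Combining these, every attacker $s$ is either successfully counter-attacked or directly dominated by $r$, so clause (2.3) holds. Therefore all parts of Definition~\ref{def:proofCond+X} are verified and $D \vdash +\partial_{X} l$. The only subtlety to treat carefully is the bookkeeping in $R_{\infd}$: one needs to observe that a rule is placed in $R_{\infd}$ only because some previously proved applicable rule with empty body defeats it, which is precisely the inductive content needed to supply the witness $t$ in clause (2.3.2) of the positive proof condition.
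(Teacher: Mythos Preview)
Your argument is correct and follows essentially the same route as the paper's own proof: verify clauses (1), (2.1), (2.2) and (2.3) of Definition~\ref{def:proofCond+X} directly from hypotheses (i)--(v), with the key observation that every potential attacker $s$ for $\non l$ lies either in $r_{\inf}$ or in $R_{\infd}$, hence is defeated by some applicable rule with empty antecedent. One small imprecision to fix: when $r \in R^{\BEL,X}$ and $A(r)=\emptyset$, you cannot literally appeal to clause (2) of Definition~\ref{def:Conv-appl}, which requires $A(r)\neq\emptyset$; the paper handles this by writing ``if $r=s^{\Diamond}\in R^{\BEL,X}$ then $s$ is Conv-applicable'', i.e., it is the \emph{original} belief rule $s$ (whose non-empty body has been emptied by the algorithm's prior transformations) that is Conv-applicable, not the transformed rule $r$ itself.
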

\begin{proof}	
To prove $Xl$, Definition~\ref{def:proofCond+X} must be taken into
consideration: since hypothesis (i) falsifies clause (1), then clause (2)
must be the case. Let $r$ be a rule that meets the conditions of the lemma.
Hypotheses (iii) and (iv) state that $r$ is applicable for $X$. In
particular, if $r=s^{\Diamond}\in R^{\BEL,X}$ then $s$ is Conv-applicable.
Finally, for clause (2.3) we have that all rules for $\non l$ are inferiorly
defeated by an appropriate rule with empty antecedent for $l$, but a rule with
empty body is applicable. Consequently, all clauses for proving
$+\partial_{X}$ are satisfied. Thus, $D\vdash+\partial_{X} l$.
\end{proof}

\begin{lem}\label{lem:ProveDES}
	Let $D=(\FACTS,R,>)$ be a defeasible theory and $l\in \LIT$ such that (i) $\DES l \not\in \FACTS$, (ii) $\neg\DES l \not\in \FACTS$, (iii) $\exists r\in R^{\DES}[l,1] \cup R^{\BEL, \DES}[l,1]$, (iv) $A(r)=\emptyset$, and (v) $r_{sup}=\emptyset$. Then $D\vdash+\partial_{\DES} l$.
\end{lem}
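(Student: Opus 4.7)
The plan is to verify each clause of Definition~\ref{def:proofCond+DES} for $+\partial_{\DES}$ using the five hypotheses. Since hypothesis (i) rules out using clause (1), I will target clause (2), which decomposes into sub-clauses (2.1), (2.2), and (2.3); showing all three hold will close the argument. This mirrors the structure of the proof of Lemma~\ref{lem:Prove}, but is in fact simpler because Definition~\ref{def:proofCond+DES} has no $\Conflict$-related side conditions.

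First I would dispatch clause (2.1): it reads $\neg\DES l\not\in\FACTS$, which is precisely hypothesis (ii). For clause (2.2), I need an applicable rule for $\DES$ whose head contains $l$ at some index. Hypothesis (iii) provides $r\in R^{\DES}[l,1]\cup R^{\BEL,\DES}[l,1]$ and hypothesis (iv) gives $A(r)=\emptyset$, so I would split into two cases. If $r\in R^{\DES}[l,1]$, then body-applicability holds vacuously from Definition~\ref{def:BodyAppl} and clause (3.1) of Definition~\ref{def:Applicability} fires. If instead $r\in R^{\BEL,\DES}[l,1]$, then by the convention recalled at the beginning of Appendix~\ref{sec:CorrDefExt}—namely that membership in $R^{\BEL,\DES}$ witnesses satisfaction of clauses (1)--(2) of Definition~\ref{def:Conv-appl} by construction, with the emptying of the antecedent during execution standing in for the positive $\partial_{\DES}$-provability of each original antecedent literal—clause (3.2) of Definition~\ref{def:Applicability} is met. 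In either case $r$ is applicable for $\DES$ at index $1$, so clause (2.2) is satisfied.

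For clause (2.3) I need that every rule $s\in R[\non l,j]$ is either discarded for $\DES$ or fails to be stronger than $r$. Here hypothesis (v), $r_{sup}=\emptyset$, does the work immediately: by definition $r_{sup}=\set{s\in R : (s,r)\in{>}}$, so no rule $s$ satisfies $s>r$, and therefore clause (2.3.2) holds vacuously for every candidate attacker. Combining the three sub-clauses yields clause (2) of Definition~\ref{def:proofCond+DES}, whence $D\vdash +\partial_{\DES}l$.

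I do not anticipate a genuine obstacle: the only mildly delicate point is the $R^{\BEL,\DES}$ case of clause (2.2), where one must translate the algorithmic convention (antecedent emptied at runtime) back into the proof-theoretic notion of Conv-applicability, but this translation is already pinned down in the preamble of Appendix~\ref{sec:CorrDefExt}.
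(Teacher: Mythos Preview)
Your proposal is correct and follows essentially the same approach as the paper, which simply states that the demonstration is analogous to that of Lemma~\ref{lem:Prove} because the hypotheses meet clause (2) of Definition~\ref{def:proofCond+DES}. Your version is more detailed---in particular, you spell out explicitly why hypothesis (v) discharges clause (2.3) via (2.3.2) and you handle the $R^{\BEL,\DES}$ case carefully---but the underlying argument is identical.
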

\begin{proof}	
The demonstration is analogous to that for Lemma~\ref{lem:Prove} since all lemma's hypotheses meet clause (2) of Definition~\ref{def:proofCond+DES}.
\end{proof}

\begin{lem}\label{lem:NORules}
	Let $D=(\FACTS,R,>)$ be a defeasible theory and $l\in \LIT$ such that $l,\, Xl\not\in \FACTS$ and $R^{X}[l] \cup R^{\BEL,X}[l] =\emptyset$, with $X \in \MOD$. Then $D\vdash-\partial_X l$.
\end{lem}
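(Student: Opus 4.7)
}

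The plan is to verify directly that one of the clauses of the refutability definition applies. Specifically, I would show that clause (1) and clause (2.2) of Definition~\ref{def:proofCond-X} (respectively Definition~\ref{def:proofCond-DES} when $X=\DES$) are both satisfied at some proof step $P(n+1)=-\partial_X l$, where no previous derivation is needed. Clause (1), $Xl\notin \FACTS$, is immediate from the hypothesis $Xl\notin \FACTS$ (and, when $X=\BEL$, from $l\notin \FACTS$, recalling that $\BEL l$ and $l$ are identified).

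The core of the argument is clause (2.2): every rule $r\in R[l,i]$ is discarded for $X$. I would split on how $r$ could have been applicable for $X$ according to Definition~\ref{def:Applicability}, which in every case offers only two alternatives: either $r\in R^X$ and satisfies the body-applicability (and chain) conditions, or $r$ is Conv-applicable, hence $r\in R^{\BEL,X}$. Since $R^X[l]\cup R^{\BEL,X}[l]=\emptyset$ by hypothesis, for every $r\in R[l,i]$ we have $r\notin R^X$ and $r\notin R^{\BEL,X}$. The first fact triggers the first disjunct of the mode-$X$ clause of Definition~\ref{def:Discardability} (i.e., (2.1.1), (3.1) or (4.1.1), depending on $X$). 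The second fact triggers Conv-discardability: either $r\notin R^\BEL$ (clause 1 of Definition~\ref{def:Conv-disc}), or, if $r\in R^\BEL[l]$, then $r\notin R^{\BEL,X}[l]$ forces $A(r)=\emptyset$ or $A(r)\cap \MODLIT\neq\emptyset$ (clauses 2 or 3 of the same definition). Either way, $r$ is discarded for $X$.

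The case $X=\DES$ follows the same pattern using Definition~\ref{def:proofCond-DES}: clause (1) gives $\DES l\notin \FACTS$, and the emptiness of $R^\DES[l]\cup R^{\BEL,\DES}[l]$ again forces every rule with $l$ in its head to be discarded for $\DES$, so clause (2.2) holds vacuously with respect to the existence of a countering applicable rule. The only subtlety worth spelling out is that the discardability clauses for mode $X$ are conjunctions of a mode-specific alternative with Conv-discardedness (or, for $\BEL$, a disjunction that already covers $r\notin R^\BEL$), so both checks above are needed; once this bookkeeping is clear, the conclusion $D\vdash -\partial_X l$ follows in a single step. No induction is required since this derivation has length one.
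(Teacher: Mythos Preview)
Your proposal is correct and follows the same route as the paper's proof: verify clause~(1) and clause~(2.2) of Definitions~\ref{def:proofCond-DES}/\ref{def:proofCond-X}. The paper dismisses (2.2) in one line as ``vacuously satisfied'', whereas you unpack why every rule in $R[l,i]$ is discarded for $X$ (namely, $r\notin R^X$ handles the mode-specific conjunct and $r\notin R^{\BEL,X}$ forces Conv-discardability); this is exactly the bookkeeping the paper suppresses, so your argument is a faithful expansion rather than a different approach.
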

\begin{proof}
Conditions (1) and (2.2) of Definitions~\ref{def:proofCond-DES} and \ref{def:proofCond-X} are vacuously satisfied with the same comment for $R^{\BEL,X}$ in Lemma~\ref{lem:Prove}.
\end{proof}

\begin{lem}\label{lem:Refute}
	Let $D=(\FACTS,R,>)$ be a defeasible theory and $l\in \LIT$ such that (i) $X\non l \not\in \FACTS$, (ii) $\neg X \non l\not\in \FACTS$ and $Y l\not\in\FACTS$ with $Y=X$ or $\Aconf{Y}{X}$, (iii) $\exists r\in R^{X}[l,1] \cup R^{\BEL,X}[l,1]$, (iv) $A(r)=\emptyset$, and (v) $r_{sup}=\emptyset$, with $X \in \MOD$. Then $D\vdash-\partial_{X}\non l$.
\end{lem}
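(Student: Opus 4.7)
The plan is to verify $D \vdash -\partial_X \non l$ by directly checking the clauses of Definition~\ref{def:proofCond-X} with $q = \non l$, using the rule $r$ from hypothesis (iii) as the distinguished witness $s$ in clause (2.3). Hypothesis (i) discharges clause (1), and hypothesis (ii) blocks clause (2.1) (neither $\neg X \non l \in \FACTS$ nor any $\blacksquare l \in \FACTS$ with $\blacksquare = X$ or $\Aconf{\blacksquare}{X}$ is available), so the proof must proceed through clause (2.3).

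The key steps are, first, to take $\blacksquare = X$ in (2.3.1) and verify that $r$ is applicable for $X$. If $r \in R^{X}[l,1]$, then hypothesis (iv) $A(r) = \emptyset$ yields body-applicability and hence applicability at index $1$ by Definition~\ref{def:Applicability}. If instead $r \in R^{\BEL, X}[l,1]$, then, by the algorithmic convention used throughout this appendix, an empty body encodes that every literal in the original belief rule's body has already been proved with modality $X$, making $r$ Conv-applicable for $X$ (Definition~\ref{def:Conv-appl}). Second, for (2.3.2) I would fix an arbitrary $\Diamond$ and an arbitrary $t \in R[\non l, k]$ not discarded for $\Diamond$, and establish simultaneously (2.3.2.1) $t \not > s$ (where the mode pairing triggers this condition) and (2.3.2.2) $\neg\Conflict(\Diamond, X)$. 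Hypothesis (v) is the workhorse: the enlargement of $>$ at line~\ref{CDrigaSup1} of Algorithm~\ref{alg:defeasible} contains every pair $(t, r)$ for which $t$ has $\non l$ as head and mode $\Diamond$ satisfying $\Conflict(\Diamond, X)$, so $r_{sup} = \emptyset$ in the enlarged order simultaneously rules out direct $>$-dominance of $r$ and the existence of any conflict-dominating $t$.

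The main obstacle is precisely this joint reading of (2.3.2): one must check that the algorithmic enlargement of $>$ captures exactly the two independent sub-clauses (2.3.2.1) and (2.3.2.2), so that $r_{sup} = \emptyset$ translates faithfully into both. A secondary delicate point is the $\BEL$-conversion sub-case, where the empty-body encoding in $R^{\BEL, X}$ must be reconciled with Definition~\ref{def:Conv-appl}'s literal requirement $A(r) \neq \emptyset$; the equivalence is part of the bookkeeping justified in Lemma~\ref{lem:+PartialBox}. Finally, note that the statement is really meaningful only for $X \in \MOD \setminus \set{\DES}$: Definition~\ref{def:proofCond-DES} demands $s > t$ rather than $t \not > s$, which hypothesis (v) does not supply, and consistently Algorithm~\ref{alg:defeasible} invokes \textsc{Refuted} with these parameters only in the \textbf{else} branch of the $\Box = \DES$ test (lines~\ref{CDrigaRepeatDisc1}--\ref{CDrigaRepeatDisc2}).
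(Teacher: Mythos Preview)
Your proposal is correct and follows the same route as the paper: take the rule $r$ from hypothesis~(iii) as the witness $s$ in clause~(2.3) of Definition~\ref{def:proofCond-X} and verify clauses (1) and (2.3) directly. The paper's proof is a one-liner (``it is easy to verify that clauses (1) and (2.3) of Definitions~\ref{def:proofCond-DES} and~\ref{def:proofCond-X} are satisfied for $\non l$''), while you spell out how $r_{sup}=\emptyset$ in the enlarged superiority of line~\ref{CDrigaSup1} simultaneously discharges (2.3.2.1) and (2.3.2.2); your observation that the $\DES$ case does not actually go through under the stated hypotheses (since Definition~\ref{def:proofCond-DES} demands $s>r'$ rather than $r'\not>s$), and that the algorithm never instantiates the lemma with $X=\DES$, is a genuine sharpening that the paper's brief proof leaves implicit.
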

\begin{proof}
Let $r$ be a rule in a theory $D$ for which the conditions of the lemma hold.
It is easy to verify that clauses (1) and (2.3) of
Definitions~\ref{def:proofCond-DES} and \ref{def:proofCond-X} are satisfied
for $\non l$.
\end{proof}

\Termination*
\begin{proof}
 Every time Algorithms~\ref{alg:proved}~\textsc{Proved} or
 \ref{alg:refuted}~\textsc{Refuted} are invoked, they both modify a subset of the set of rules $R$, which is finite by hypothesis. Consequently, we have their termination. Moreover, since $|R| \in O(S)$ and each rule can be accessed in constant time, we obtain that their computational complexity is $O(S)$.
\end{proof}

\Complexity*
\begin{proof}
 The most important part to analyse concerning termination of
Algorithm~\ref{alg:defeasible} \textsc{DefeasibleExtension} is the
\textbf{repeat/until} cycle at lines \ref{CDrigaRepeat}--\ref{CDrigaUntil}.
Once an instance of the cycle has been performed, we are in one of the
following, mutually exclusive situations:

\begin{enumerate}

\item No modification of the extension has occurred. In this case, line \ref{CDrigaUntil} ensures the termination of the algorithm;

\item The theory has been modified with respect to a literal in $HB$. Notice
that the algorithm takes care of removing the literal from $HB$ once the
suitable operations have been performed (specifically, at line
\ref{MMrigaUpHB} of Algorithm~\ref{alg:proved}~\textsc{Proved} and
\ref{alg:refuted}~\textsc{Refuted}). Since this set is finite, the
process described above eventually empties $HB$ and, at the next iteration of
the cycle, the extension of the theory cannot be modified. In this
case, the algorithm ends its execution as well.
\end{enumerate}

Moreover, Lemma~\ref{lem:ComplexityMMDisc} proved the termination of its
internal sub-routines.

In order to analyse complexity of the algorithm, it is of the utmost
importance to correctly comprehend Definition~\ref{def:TheorySize}. Remember
that the size of a theory is the number of \emph{all occurrences} of each
literal in every rule plus the number of the rules. The first term is
usually (much) bigger than the latter. Let us examine a theory with $x$
literals and whose size is $S$, and consider the scenario when an algorithm
$A$, looping over all $x$ literals of the theory, invokes an inner procedure $P$
which selectively deletes a literal given as input from all the rules of the
theory (no matter to what end). A rough computational complexity would be
$O(S^{2})$, given that, when one of the $x \in O(S)$ literal is selected, $P$
removes all its occurrences from every rule, again $O(S)$.

However, a more fined-grained analysis shows that the complexity of $A$ is
lower. The mistake being to consider the complexity of $P$ separately from the
complexity of the external loop, while instead they are strictly dependent.
Indeed, the overall number of operations made by the sum of all loop
iterations cannot outrun the number of occurrences of the literals, $O(S)$,
because the operations in the inner procedure directly decrease, iteration
after iteration, the number of the remaining repetitions of the outmost loop,
and the other way around. Therefore, the overall complexity is not bound by
$O(S)\cdot O(S) = O(S^{2})$, but by $O(S) + O(S) = O(S)$.

We can now contextualise the above reasoning to
Algorithm~\ref{alg:defeasible}~\textsc{DefeasibleExtension}, where $D$ is the
theory with size $S$. The initialisation steps (lines
\ref{CDriga1}--\ref{CDrigaSup1} and \ref{CDrigaUp}--\ref{CDrigaPulisci}) add
an $O(S)$ factor to the overall complexity. The main cycle at lines
\ref{CDrigaRepeat}--\ref{CDrigaUntil} is iterated over $HB$, whose cardinality
is in $O(S)$. The analysis of the preceding paragraph implies that invoking
Algorithm~\ref{alg:proved}~\textsc{Proved} at lines \ref{CDrigaIfMM} and
\ref{CDrigaModRepeat} as well as invoking
Algorithm~\ref{alg:refuted}~\textsc{Refuted} at lines \ref{CDrigaIfDisc},
\ref{CDrigaIfDisc2}, \ref{CDrigaRepeatDisc1} and \ref{CDrigaRepeatDisc2}
represent an additive factor $O(S)$ to the complexity of \textbf{repeat/until}
loop and \textbf{for} cycle at lines \ref{CDrigaFor2}--\ref{CDrigaEndFor2} as
well. Finally, all operations on the set of rules and the superiority relation
require constant time, given the implementation of data structures proposed.
Therefore, we can state that the complexity of the algorithm is $O(S)$.
\end{proof}

\Correctness*
\begin{proof}
As already argued at the beginning of the section, the aim of
Algorithm~\ref{alg:defeasible} \mbox{\textsc{DefeasibleExtension}} is to
compute the defeasible extension of a given defeasible theory $D$ through
successive transformations on the set of facts and rules, and on the
superiority relation: at each step, they compute a simpler theory while
retaining the same extension. Again, we remark that the word `simpler' is used
to denote a theory with fewer elements in it. Since we have already proved the
termination of the algorithm, it eventually comes to a fixed-point theory where
no more operations can be made.

In order to demonstrate the soundness of Algorithm~\ref{alg:defeasible}
\mbox{\textsc{DefeasibleExtension}}, we show in the list below that all the
operations performed by the algorithm are justified by Proposition \ref{prop:+PartialTHEN-Partial} and described in
Lemmas~\ref{lem:addRandSup}--\ref{lem:Refute}, where
we prove the soundness of each operation involved.

\begin{enumerate}
 \item Algorithm~\ref{alg:defeasible} \textsc{DefeasibleExtension}:
 \begin{itemize}
  \item Lines \ref{CDrigaOUTCopy}--\ref{CDrigaR-} and \ref{CDrigaSup1}: Lemma \ref{lem:addRandSup};
  \item Line \ref{CDrigaIfMM}: item \ref{item:proved}. below;
  \item Line \ref{CDrigaIfDisc}: item \ref{item:refuted}. below;
  \item Line \ref{CDrigaIfDisc2}: Lemma~\ref{lem:NORules} and item \ref{item:refuted}. below;
  \item Line \ref{CDrigaProvedDES}: Lemma~\ref{lem:ProveDES} and item \ref{item:proved}. below;
  \item Lines \ref{CDrigaRepeatDisc1}--\ref{CDrigaRepeatDisc2}: Lemma~\ref{lem:Refute} and item \ref{item:refuted}. below;
  \item Line \ref{CDrigaModRepeat}: Lemma~\ref{lem:Prove} and item \ref{item:proved}. below;
 \end{itemize}
 \item \label{item:proved} Algorithm~\ref{alg:proved} \textsc{Proved}:
 \begin{itemize}
  \item Line \ref{MMrigaIfD}: Lemma~\ref{lem:Refute} and item \ref{item:refuted}. below;
  \item Line \ref{MMrigaIfBELpINTnon-p}: Part 2. of Proposition \ref{prop:+PartialTHEN-Partial} and item \ref{item:refuted}. below;
  \item Line \ref{MMrigaIfBEL-OBLpINTSnon-p}: Part 3. of Proposition \ref{prop:+PartialTHEN-Partial} and item \ref{item:refuted}. below;
  \item Lines \ref{MMrigaR}--\ref{MMrigaSup}: Lemma~\ref{lem:+PartialBox};
  \item \textsc{Case $\BEL$} at lines \ref{MMrigaCaseB}--\ref{MMrigaBIf-OR}: Lemma~\ref{lem:+PartialBEL};
  \item \textsc{Case $\OBL$} at lines \ref{MMrigaCaseO}--\ref{MMrigaOIfR2}: Lemma~\ref{lem:+PartialObl};
  \item \textsc{Case $\DES$} at lines \ref{MMrigaCaseD}--\ref{MMrigaEndDIf}: Lemma~\ref{lem:+PartialDes};
  \item \textsc{Otherwise} at lines \ref{MMrigaOtherwise}--\ref{MMrigaOtherR2}: Lemma~\ref{lem:+PartialGIIS};
 \end{itemize}
 \item \label{item:refuted} Algorithm~\ref{alg:refuted} \textsc{Refuted}:
 \begin{itemize}
  \item Lines \ref{DrigaCaseUpR2}--\ref{DrigaCaseUpSup}: Lemma~\ref{lem:-PartialBox};
  \item \textsc{Case $\BEL$} at lines \ref{DrigaCaseB}--\ref{DrigaCaseBIf-OR}: Lemma~\ref{lem:-PartialBEL};
  \item \textsc{Case $\OBL$} at lines \ref{DrigaCaseO}--\ref{DrigaCaseOIfR}: Lemma~\ref{lem:-PartialObl};
  \item \textsc{Case $\DES$} at lines \ref{DrigaCaseD}--\ref{DrigaCaseDR}: Lemma~\ref{lem:-PartialDes};
  \item \textsc{Otherwise} at lines \ref{DrigaOtherwise}--\ref{DrigaCaseOtherR}: Lemma~\ref{lem:-PartialGIIS};
 \end{itemize}
\end{enumerate}

\noindent The result of these lemmas is that whether a literal is defeasibly
proved or not in the initial theory, so it will be in
the final theory. This proves the soundness of the algorithm.

Moreover, since (i) all lemmas show the equivalence of the two theories, and
(ii) the equivalence relation is a bijection, this also demonstrates the
completeness of Algorithm~\ref{alg:defeasible} \textsc{DefeasibleExtension}.
\end{proof}

\end{document}